\documentclass[runningheads,a4paper]{./llncs}

%%%%%%%%%%%%%%%%%%%%%%%%%%%%%%%%%%%%%%%%%%%%%%%%%%%%%%%%%%%%%%%%%%%%%%%%%%%%%%%%
% +++++++++++++++++++++++++++++++++++++++++++++++++++++++++++++++++ %
% ------------------------------------------ PACKAGES ------------------------------------------ %
% +++++++++++++++++++++++++++++++++++++++++++++++++++++++++++++++++ %
%%%%%%%%%%%%%%%%%%%%%%%%%%%%%%%%%%%%%%%%%%%%%%%%%%%%%%%%%%%%%%%%%%%%%%%%%%%%%%%%

\usepackage{tikz}
\usetikzlibrary{shapes, positioning, arrows, automata, trees, arrows.meta}
\usepackage{amsmath}
\usepackage{amssymb}
\usepackage{wasysym} % sexe symbol
\usepackage{mathtools} % coloneqq
\usepackage{multirow}
\usepackage{bbold} % N, Z
\usepackage[ruled]{algorithm2e} % algo
\usepackage{bussproofs} % proof tree
\usepackage{cancel} % not a model
\usepackage{calc} % widthof in mylist
\usepackage{enumitem} % define mylist
\usepackage{times}

%%%%%%%%%%%%%%%%%%%%%%%%%%%%%%%%%%%%%%%%%%%%%%%%%%%%%%%%%%%%%%%%%%%%%%%%%%%%%%%%
% +++++++++++++++++++++++++++++++++++++++++++++++++++++++++++++++++ %
% -----------------------------------------  COMMANDS ----------------------------------------- %
% +++++++++++++++++++++++++++++++++++++++++++++++++++++++++++++++++ %
%%%%%%%%%%%%%%%%%%%%%%%%%%%%%%%%%%%%%%%%%%%%%%%%%%%%%%%%%%%%%%%%%%%%%%%%%%%%%%%%

\makeatletter
\newcommand{\Lemma}[1]{Lemma~\ref{#1}}
\newcommand{\Theorem}[1]{Theorem~\ref{#1}}

\newcommand{\Proposition}[1]{Proposition~\ref{#1}}
\newcommand{\Algorithm}[1]{Algorithm~\ref{#1}}

\newcommand{\Section}[1]{Section~\ref{#1}}
\newcommand{\Example}[1]{Example~\ref{#1}}
\newcommand{\Appendix}[1]{Appendix~\ref{#1}}
\newcommand{\Remark}[1]{Remark~\ref{#1}}

\newcommand{\id}[1]{#1}

\renewcommand{\dot}{\ensuremath{.\;}}
\newcommand{\sem}[1]{\ensuremath{[\![#1]\!]}}

 \newcommand{\dom}{\text{dom}}
\newcommand{\ife}{\ensuremath{\triangleright}}
\newcommand{\bigife}{\scalebox{1.5}{$\triangleright$}}
\newcommand{\sync}{||} % {\ensuremath{\parallel}}
\newcommand{\Sync}{\text{Sync}}
\newcommand{\dec}[1]{\ensuremath{\text{dec}_{#1}}}
\newcommand{\s}{\circledast}
\newcommand{\splitsum}{\odot}

%%%%%%%%%%%%%%%%%%%%%%%%%%%%%%%%%%%%%%%%%%%%%%%%%%%%%%%%%%%%%%%%%%%%%%%%%%%%%%%%
% +++++++++++++++++++++++++++++++++++++++++++++++++++++++++++++++++ %
% --------------------------------------  ENVIRONNEMENTS  -------------------------------------- %
% +++++++++++++++++++++++++++++++++++++++++++++++++++++++++++++++++ %
%%%%%%%%%%%%%%%%%%%%%%%%%%%%%%%%%%%%%%%%%%%%%%%%%%%%%%%%%%%%%%%%%%%%%%%%%%%%%%%%

%%%%%%%%%%%%%%%%%%%%%%%%%%%%%%%%%%%%%%%%%%%%%%%%%%%%%%%%%%%%%%%%%%%%%%%%%%%%%%%%
% +++++++++++++++++++++++++++++++++++++++++++++++++++++++++++++++++ %
% -------------------------------------------- ALIAS -------------------------------------------- %
% +++++++++++++++++++++++++++++++++++++++++++++++++++++++++++++++++ %
%%%%%%%%%%%%%%%%%%%%%%%%%%%%%%%%%%%%%%%%%%%%%%%%%%%%%%%%%%%%%%%%%%%%%%%%%%%%%%%%

\newcommand{\N}{\mathbb{N}}
\renewcommand{\O}{\mathcal{O}}

\newcommand{\T}{\mathcal{T}}

\newcommand{\Z}{\mathbb{Z}}

%%%%%%%%%%%%%%%%%%%%%%%%%%%%%%%%%%%%%%%%%%%%%%%%%%%%%%%%%%%%%%%%%%%%%%%%%%%%%%%%
% +++++++++++++++++++++++++++++++++++++++++++++++++++++++++++++++++ %
% -----------------------------------------  DOCUMENT  ----------------------------------------- %
% +++++++++++++++++++++++++++++++++++++++++++++++++++++++++++++++++ %
%%%%%%%%%%%%%%%%%%%%%%%%%%%%%%%%%%%%%%%%%%%%%%%%%%%%%%%%%%%%%%%%%%%%%%%%%%%%%%%%

\title{Decidable  Weighted  Expressions  with  Presburger  Combinators}

\author{
	Emmanuel Filiot \thanks{
		E. Filiot is a research associate of F.R.S.-FNRS. This work has been supported by the following projects: the ARC Project Transform (Federation Wallonie-Brussels), the FNRS CDR project Flare.
	},
	Nicolas Mazzocchi \thanks{
		N. Mazzocchi is a PhD funded by a FRIA fellowship from the F.R.S.-FNRS.
	}
	and Jean-Fran\c{c}ois Raskin \thanks{
		J.-F. Raskin is supported by an ERC Starting Grant (279499: inVEST), by the ARC project —- Non-Zero Sum Game Graphs: Applications to Reactive Synthesis and Beyond -— funded by the Fédération Wallonie-Bruxelles, and by a Professeur Francqui de Recherche grant awarded by the Francqui Fondation.
	}
}

\institute{Universit\'e libre de Bruxelles}

\begin{document}
	\maketitle

%% long abstract
	
	% \begin{abstract}
            % Weighted $(max,+)$-automata are automata which model
            % functions from finite words to integers. They have found
            % many applications, for instance in natural language
            % processing and more recently in computer-aided
            % verification. However, they suffer from several
            % limitations: they are not closed under operations such as
            % min and substraction, and basic decision problems such as
            % universality and comparison are undecidable. 

            % We introduce an expressive formalism, called synchronised
            % expressions with iterated sum, to define functions from
            % finite words to integers, by which to combine unambiguous
            % $(max,+)$-automata. Its expressiveness is incomparable
            % with $(max,+)$-automata, it is closed under any binary
            % operation definable in Presburger logic, and is strictly
            % more expressive than finite-valued $(max,+)$-automata. 
            % We show that emptiness,
            % universality and comparison are decidable problems for
            % synchronised expressions, and are
            % \textsc{PSpace-Complete} for the
            % fragment without iterated sum.
	% \end{abstract}

%% short abstract
	\begin{abstract}
            In this paper, we investigate the expressive power and the algorithmic properties of weighted expressions, which define functions from finite words to integers. First, we consider a slight extension of an expression formalism, introduced by Chatterjee. et. al. in the context of infinite words,  by which to combine values given by unambiguous $(\max,+)$-automata, using Presburger arithmetic. We show that important decision problems such as emptiness, universality and comparison are \textsc{PSpace-c} for these expressions. We then investigate the extension of these expressions with Kleene star. This allows to iterate an expression over smaller fragments of the input word, and to combine the results by taking their iterated sum. The decision problems turn out to be undecidable, but we introduce the decidable and still expressive class of synchronised expressions.
            %
            % Weighted $(max,+)$-automata are automata which model
            % functions from finite words to integers. They have found
            % many applications, for instance in natural language
            % processing and more recently in computer-aided
            % verification. However, they suffer from several
            % limitations: they are not closed under operations such as
            % min and substraction, and basic decision problems such as
            % universality and comparison are undecidable. 
            %
            % We introduce an expressive formalism, called synchronised
            % expressions with iterated sum, to define functions from
            % finite words to integers, by which to combine unambiguous
            % $(max,+)$-automata. Its expressiveness is incomparable
            % with $(max,+)$-automata, it is closed under any binary
            % operation definable in Presburger logic, and is strictly
            % more expressive than finite-valued $(max,+)$-automata. 
            % We show that emptiness,
            % universality and comparison are decidable problems for
            % synchronised expressions, and are
            % \textsc{PSpace-Complete} for the
            % fragment without iterated sum.
	\end{abstract}
	
	\section{Introduction}

\paragraph{Quantitative languages} Quantitative languages (QL), or series,
generalise Boolean languages to function from finite words into some
semiring. They have recently received
a particular attention from the verification community, for their
application in modeling system \emph{quality} \cite{cdh10}, lifting
classical Boolean verification problems to a quantitative
setting. In this paper, we consider the case of integer weights
and in this context, the comparison problem asks whether two
QL $f,g:\Sigma^*\rightarrow \mathbb{Z}$ satisfy
$f(u)\leq g(u)$ for all $u\in\Sigma^*$. Similarly, the universality ($f \geq \nu$
where $\nu$ is a constant) and equivalence problem ($f=g$) can be
defined, as well as emptiness (does there exists a word whose value is
above some given threshold). We say that a formalism for QL is decidable if all these problems are decidable. A popular formalism to define QL is that of
weighted automata (WA) \cite{Droste_Kuich_Vogler_2009}. However,
WA over the semiring $(\mathbb{Z}, \max, +)$, called
$(\max,+)$-automata, are undecidable \cite{Krob/94},
even if they are linearly ambiguous $(\max,+)$-automata
\cite{DBLP:journals/corr/DaviaudGM16}. 

\paragraph{Decidable formalisms for quantitative languages and objectives} The
largest known class of $(\max,+)$-automata enjoying decidability is that of
finitely ambiguous $(\max,+)$-automata, which is also expressively
equivalent to the class of finite-valued $(\max,+)$-automata (all the
accepting executions over the same input run yields a constant number
of different values) \cite{DBLP:conf/fsttcs/FiliotGR14}. Moreover,
$(\max,+)$-automata are not closed under simple operations such as $\min$ and
the difference $-$ \cite{DBLP:journals/tcs/KlimannLMP04}. Basic
functions such as $u\mapsto \min(\#_a(u),\#_b(u))$
and\footnote{$\#_\sigma(u)$ is the number of occurrences of $\sigma$
  in $u$} (as a consequence) $u\mapsto |f(u)-g(u)|$ are
not definable by $(\max,+)$-automata, even if $f,g$ are
\cite{DBLP:journals/tcs/KlimannLMP04}.To cope with the
expressivity and undecidability issues, a class of weighted expressions
was introduced in \cite{DBLP:conf/concur/ChatterjeeDEHR10} in the
context of $\omega$-words. Casted to finite words, the idea
is to use deterministic $(\max,+)$-automata as atoms, and to
combine them using the operations $\max$, $\min$, $+$, and $-$.  The decision
problems defined before were shown to be \textsc{PSPace-c}
\cite{DBLP:conf/icalp/Velner12} over $\omega$-words. One limitation of
this formalism, casted to finite words, if that it is not expressive
enough to capture finitely ambiguous $(max,+)$-automata, yielding two
incomparable classes of QL. In this paper, our
objective is to push the expressiveness of weighted expressions as far
as possible while retaining decidability, and to capture both finitely
ambiguous $(\max,+)$-automata and the expressions of
\cite{DBLP:conf/concur/ChatterjeeDEHR10}, for finite words.

\paragraph{Monolithic expressions with Presburger combinators}
We define in \Section{sec:mono} a class of expressions, inspired from
\cite{DBLP:conf/concur/ChatterjeeDEHR10}, that we call monolithic in
contrast to another class of expressions defined in a second
contribution. The idea is to use unambiguous $(\max,+)$-automata as
atoms, and to combine them using $n$-ary functions definable in
Presburger arithmetics (we call them Presburger combinators). Any finitely ambiguous $(\max,+)$-automaton
being equivalent to a finite union of unambiguous ones
\cite{DBLP:conf/fsttcs/FiliotGR14}, this formalism captures finitely
ambiguous $(\max,+)$-automata (using the Presburger combinator $\max)$. 
We show that all the decision problems are \textsc{PSpace-c}, matching
the complexity of \cite{DBLP:conf/icalp/Velner12}. It is important to mention that this
complexity result cannot be directly obtained from
\cite{DBLP:conf/icalp/Velner12} which is on $\omega$-words with
mean-payoff automata as atoms (hence the value of an infinite word is
prefix-independent). Moreover, unlike in
\cite{DBLP:conf/icalp/Velner12}, we can rely on existing results by encoding
expressions into reversal-bounded counter machines \cite{ibarra}.

\paragraph{Expressions with iterated sum} The previous expressions are
monolithic in the sense that first, some values are computed by weighted
automata applied on the whole input word, and then these values are
combined using Presburger combinators. It is not possible to iterate
expressions on factors of the input word, and to aggregate all the
values computed on these factors, for instance by a sum operation. The
basic operator for iteration is that of Kleene star (extended to
quantitative languages), which we call more explicitly \emph{iterated
  sum}. It has already been defined
in \cite{Droste_Kuich_Vogler_2009}, and its unambiguous version
considered in \cite{DBLP:conf/csl/AlurFR14} to obtain an expression
formalism equivalent to unambiguous $(\max,+)$-automata. 
 Inspired by \cite{DBLP:conf/csl/AlurFR14},  we investigate in
 \Section{sec:iter} the extension of monolithic expressions
 with unambiguous iterated sum, which we just call iterated sum in the
 paper. The idea is as follows: given an
 expression $E$ which applies on a domain $D$, the expression $E^\s$ is
 defined only on words $u$ that can be uniquely decomposed (hence the name
 unambiguous)  into factors $u_1u_2\dots u_n = u$ such that $u_i\in D$, and the value of $u$ is
 then $\sum_{i=1}^n E(u)$. Unfortunately, we show that such an
 extension yields undecidability (if 2 or more iterated sum
 operations occur in the expression). The undecidability is caused by
 the fact that subexpressions $E^\s$ may decompose the input word in
 different ways. We therefore define the class of so called
 \emph{synchronised} expressions with iterated sum, which forbids this
 behaviour. We show that while being expressive (for instance, they
 can define QL beyond finitely ambiguous
 $(\max,+)$-automata), decidability is recovered. The proof goes via a
 new weighted automata model (\Section{sec:chop}), called \emph{weighted chop automata},
 that slice the input word into smaller factors, recursively apply
 smaller chop automata on the factors to compute their values, which
 are then aggregated by taking their sum. In their synchronised
 version, we show decidability for chop automata. We finally discuss
 some extensions in \Section{sec:conclu}\footnote{Due to lack of
   space, full proofs are given in Appendix.}.

% \paragraph{Organisation of the paper} Preliminary notions are
% introduced in \Section{sec:prelim}. The monolithic expressions are studied
% in \Section{sec:mono}. Their extension with unambiguous iterated sum is
% presented in \Section{sec:iter}, whose decidability via weighted automata
% is given in \Section{sec:chop}. Finally,  \Section{sec:conclu}
% discusses some extensions and a comparison with \cite{DBLP:conf/csl/AlurFR14}. 

%%% Local Variables:
%%% mode: latex
%%% TeX-master: t
%%% End:

	\section{Quantitative Languages}\label{sec:prelim}

\paragraph{Words,  languages and quantitative languages}
Let $\Sigma$ be a finite alphabet
and denote by $\Sigma^*$ the set of finite words over $\Sigma$, with
$\epsilon$ the empty word.  Given two words $u,v \in \Sigma^*$, $|u|$
and $|v|$ denote their length, and the
distance between $u$ and $v$ is defined as $d(u,v)=|u|+|v|-2
|\sqcap(u,v)|$, where $\sqcap(u,v)$ denotes the longest common prefix
of $u$ and $v$. A \emph{quantitative language} (QL)\footnote{Also called \emph{formal series} in \cite{Droste_Kuich_Vogler_2009}} is a partial function $f : \Sigma^*\rightarrow \Z$, whose domain is denoted by $\dom(f)$.
E.g., consider the function mapping any word $w\in\Sigma^*$ to the number of
occurrences $\#_\sigma(w)$ of some symbol $\sigma\in\Sigma$ in $w$. A QL $f$ is \emph{Lipschitz-continuous} if there
exists $K \in \N$ such that for all words $u,v \in \Sigma^*$, $|f(u) -
f(v) | \leq K \cdot d(u,v)$.

\paragraph{Combinators for quantitative languages}
Any binary operation $\boxplus : \Z^2\rightarrow \Z$ is extended to quantitative languages by $f_1\boxplus f_2(w) = f_1(w)\boxplus f_2(w)$ if $w\in\dom(f_1)\cap \dom(f_2)$, otherwise it is undefined.
We will consider operations defined in \emph{existential Presburger logic}.
An existential Presburger formula (simply called
Presburger formula in the sequel) is built over terms $t$ on the signature $\{0,1,+\}\cup X$, where $X$ is a set of variables, as follows: $\phi\ \Coloneqq \ t=t\mid t>t\mid \phi\vee\phi\mid \phi\wedge\phi\mid \exists x\dot \phi$.
If a formula $\phi$ has $n+1$ free variables $x_1,\dots,x_{n+1}$, for all $v_1,\dots,v_{n+1}\in\Z$, we write $\phi(v_1,\dots,v_{n+1})$ if $\phi$ holds for the valuation mapping $x_i$ to $v_i$.
When $n\geq 1$, we say that $\phi$ is \emph{functional} if for all $v_1,\dots,v_{n}\in\Z$, there exists a unique $v_{n+1}\in\Z$ such that $\phi(v_1,\dots,v_{n+1})$ holds.
Hence, $\phi$ defines a (total) function from $\Z^{n}$ to $\Z$ that we denote $\sem{\phi}$.
We call $n$ the arity of $\phi$ and may write $\phi(x_1,\dots,x_n)$ to denote the unique $x_{n+1}$ such that $\phi(x_1,\dots,x_{n+1})$ holds.
We say that a function $f : \Z^n \rightarrow \Z$ is Presburger-definable if there exists a functional Presburger-formula $\phi$ such that $f = \sem{\phi}$.
E.g., the $\max$ of values $x_1,\dots,x_n$ is definable by
$\phi_{\max}(x_1,\dots,x_n,x) \equiv (\bigwedge_{i=1}^n x_i \leq x)\wedge (\bigvee_{i=1}^n x_i = x)$.

\paragraph{Semi-linear sets} Let $k\geq 1$. A set $S \subseteq
\mathbb{Z}^k$ is \emph{linear} if there exist $x_1,\dots,x_n\in \Z^k$,
called the period vectors, and $x_0\in \Z^k$, called the base, such
that $S = \{ x_0+\sum_{i=1}^n a_ix_i\mid a_1,\dots,a_n\in \N\}$. 
$S$ is \emph{semi-linear} if it is a finite a union of linear
sets. Note that the set of base and periodic vectors of each linear
set of the union provides a finite representation of $S$. It is a
folklore result that a set $S\subseteq \Z^k$ is semi-linear iff it is
definable by some existential Presburger formula.

\paragraph{Decision problems}
In this paper, we are interested by fundamental decision problems on (finite representations of) quantitative languages, namely universality, emptiness and comparison.
Given finitely represented quantitative languages $f, f_1, f_2$ and $v\in\Z$,
\begin{itemize}
\item
	the $v$-emptiness (resp. $v$-universality) problem asks whether there exists  $u\in\dom(f)$ such that $f(u)\succsim v$ (resp. whether all $u\in\dom(f)$ satisfies $f(u)\succsim v$), for $\succsim\ \in\{>,\geq\}$. 
\item
	the $\succsim$-inclusion problem (denoted $f_1\succsim f_2$) with $\succsim\ \in\{>,\geq\}$ asks whether $\dom(f_1)\supseteq \dom(f_2)$ and for all $w\in\dom(f_2)$, $f_1(w)\succsim f_2(w)$. 
\item
	the equivalence problem, denoted $f_1\equiv f_2$, asks whether $f_1{\geq} f_2\wedge f_2{\geq} f_1$. 
\end{itemize}

\begin{remark} \label{remark:zeroEmptiness}
For classes of QL  (effectively) closed under regular
domain restriction and difference, and with decidable domain
inclusion,  the $v$-universality, inclusion and equivalence
problems, are reducible to the $0$-emptiness problem as follows:
  \begin{enumerate}
  	\item
	  	to establish $\forall w \in \dom(f): f(w) \geq v$ (universality), it suffices to check that it is not the case that $\exists w \in \dom(f): -(f(w)-v) > 0$ ($0$-emptiness).
	\item
		to establish $\dom(f_2)\subseteq \dom(f_1)$ and for all $w\in\dom(f_2)$, $f_1(w)\geq f_2(w)$, when the first check succeeds, we reduce the second one as follows: construct a new QL $g$ on $\dom(f_2)$ such that $\forall w\in\dom(f_2): g(w)=f_2(w)-f_1(w)$ and check that  $\forall w \in \dom(f_2): g(w) \geq 0$ ($0$-emptiness).

	% \item to establish $\dom(V_2)\subseteq \dom(V_1)$ and for all $w\in\dom(V_2)$, $V_1(w)\geq V_2(w)$, when the first check succeed, we reduce the second check to emptiness as follows: we construct a new quantitative language $V_3$ on $\dom(V_2)$ such that $\forall w\in\dom(V_2): V_3(w)=V_2(w)-V_1(w)$ and check that  $\forall w \in \dom(V_2): V_3(w) \geq 0$.
  \end{enumerate}
  \noindent
The other variants with strict inequalities are treated
similarly. Note also with similar arguments, we can show that the
$0$-emptiness problem can be reduced to the universality and the
inclusion problems. The quantitative expression formalisms that we
define in this paper have those closure properties (in \textsc{PTime}) and so, we
concentrate, in most of our results, on the $0$-emptiness problem.
\end{remark}

\paragraph{Weighted automata}
Weighted automata (WA) have been defined as a representation of QL (more generally with values in a semiring).
Here, we consider weighted automata over the semiring $(\Z\cup \{-\infty\},\max,+)$ and just call them \emph{weighted automata}.
They are defined as tuples $M = (A, \lambda)$ where $A = (Q,I,F,\Delta)$ is a finite automaton over $\Sigma$ whose language is denoted by $L(A)$ and $\lambda : \Delta\rightarrow \Z$ is a weight function on transitions.
Given a word $w\in L(A)$ and an accepting run $r = q_1a_1\dots
q_na_nq_{n+1}$ of $A$ on $w$, the value $V(r)$ of $r$ is defined by
$\sum_{i=1}^n \lambda(q_i,a_i,q_{i+1})$ if $n>1$, and by $0$
if\footnote{Sometimes, initial and final weight functions are
  considered in the literature \cite{Droste_Kuich_Vogler_2009}, so that non-zero values can be assigned to $\epsilon$} $n=1$.
Finally, $M$ defines a quantitative language $\sem{M} :
L(A)\rightarrow \Z$ such that for all $w\in L(A)$, $\sem{M}(w) =
\max\{ V(r)\mid r\text{ is an accepting run of $A$ on $w$}\}$.
$M$ is called deterministic if $A$ is deterministic. We say that $M$ is $k$-ambiguous if $A$ is $k$-ambiguous, i.e. there are at most $k$ accepting runs on words of $L(A)$.
A $1$-ambiguous WA is also called \emph{unambiguous}.
$M$ is $k$-valued if for all $w\in L(A)$, the set $\{ V(r)\mid r\text{ is an accepting run of $A$ on $w$}\}$ has cardinality at most $k$.
In particular, any $k$-ambiguous WA is $k$-valued.
The converse also holds, and it is decidable whether a WA is $k$-valued, for a given $k$ \cite{DBLP:conf/fsttcs/FiliotGR14}.
While emptiness is decidable for WA~\cite{DBLP:journals/corr/abs-1111-0862}, inclusion and universality are undecidable~\cite{Krob/94}.
However, all these problems are decidable for $k$-valued WA, for a fixed $k$~\cite{DBLP:conf/fsttcs/FiliotGR14}.

	\section{Monolithic Expressions}\label{sec:mono}
\newcommand{\drawLastBlock}{
	\begin{tikzpicture}[>=stealth, node distance=3cm, ultra thick]
		\renewcommand{\id}[1]{}
		
		% ~~~~~~~~~~~~~~~~~~~~~~~~~~~~~ STYLES ~~~~~~~~~~~~~~~~~~~~~~~~~~~~~ %
		
		\tikzstyle{accepting}=[
		accepting by arrow,
		accepting text=,
		accepting where= right
		]
		
		\tikzstyle{initial}=[
		initial by arrow,
		initial text=LastBlock,
		initial where= left
		]
		
		% ~~~~~~~~~~~~~~~~~~~~~~~~~~~~~ NODES ~~~~~~~~~~~~~~~~~~~~~~~~~~~~~ %
		
		\node[state, initial] (0) {\id{0}};
		\node[state, accepting] (1) [right of = 0] {\id{1}};
		
		% ~~~~~~~~~~~~~~~~~~~~~~~~~~~~~~ ARCS ~~~~~~~~~~~~~~~~~~~~~~~~~~~~~~ %
		
		\path[->]
			(0) edge [loop above] node [above] {$\begin{array}{l|l|l} a & 0 \quad \$ & 0 \end{array}$} (0)
			(0) edge node [above] {$\begin{array}{l|l} a & 1 \end{array}$} (1)
			(1) edge [loop above] node [above] {$\begin{array}{l|l} a & 1 \end{array}$} (1)
		;
	\end{tikzpicture}
}

We start our study of weighted expressions by a definition directly
inspired by~\cite{DBLP:conf/concur/ChatterjeeDEHR10} where weighted
automata\footnote{Chatterjee et al. studied quantitative expressions
  on infinite words and the automata that they consider are
  deterministic mean-payoff automata.} are used as building blocs of
quantitative expressions that can be inductively composed with
functions such as $\min$, $\max$, addition and difference. The
equivalence checking problem for those expressions is decidable in
\textsc{PSpace}. We start here with deterministic $(\max,+)$-automata as
building blocs.

\begin{definition}
A \emph{simple expression (s-expression)} is a term $E$ generated by
    $$
    E\ \Coloneqq \ D\mid \min(E_1,E_2) \mid \max(E_1,E_2) \mid E_1+E_2 \mid E_1-E_2
    $$
 \noindent
 where $D$ is a deterministic WA (we remind that by WA we mean $(\max,+)$-automata).
\end{definition}

\noindent
Any  s-expression $E$ defines a quantitative language $\sem{E} : \Sigma^*\rightarrow \mathbb{Z}$ on a domain $\dom(E)$ inductively as follows:
if $E \equiv A$, then $\dom(E) = L(A)$ and for all 
$u\in L(A)$, $\sem{E}(u) = \sem{A}(u)$ (the semantics of WA is defined in \Section{sec:prelim}); if
$E \equiv \min(E_1,E_2)$, then $\dom(E) = \dom(E_1) \cap \dom(E_2)$ and
for all $u\in\dom(E)$, $\sem{E}(u) =
\min(\sem{E_1}(u),\sem{E_2}(u))$, symmetrical works for $\max$, $+$ and $-$. 
We say that two s-expressions $E_1,E_2$ are equivalent if
$\sem{E_1} = \sem{E_2}$ (in particular $\dom(E_1)=\dom(E_2)$). 
To characterise the expressiveness of s-expressions, we note that: 

\begin{lemma}\label{lem:s-exprexpr}
Any s-expression defines a Lipschitz continous quantitative language.
\end{lemma}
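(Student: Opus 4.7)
The plan is to prove \Lemma{lem:s-exprexpr} by structural induction on the s-expression $E$, showing that each construct preserves Lipschitz continuity, and handling the base case via the fact that deterministic WA only differ by a bounded amount per symbol of discrepancy.

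For the base case, I would let $D = (A,\lambda)$ with $A = (Q,I,F,\Delta)$ be a deterministic WA and set $K = \max_{\delta \in \Delta} |\lambda(\delta)|$. Given two words $u,v \in L(A)$, let $p = \sqcap(u,v)$ be their longest common prefix. Because $A$ is deterministic, the unique accepting runs on $u$ and $v$ \emph{coincide} on the prefix $p$, and then contribute at most $K$ per remaining transition. Thus the two run values differ on at most $|u|-|p|$ transitions of the $u$-run and $|v|-|p|$ transitions of the $v$-run, giving
\[
|\sem{D}(u) - \sem{D}(v)| \ \leq\ K(|u|-|p|) + K(|v|-|p|) \ =\ K\cdot d(u,v).
\]

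For the inductive step, suppose $\sem{E_i}$ is Lipschitz with constant $K_i$ on its domain, for $i=1,2$. Note that the domain of every compound s-expression is exactly $\dom(E_1)\cap\dom(E_2)$, so any $u,v$ in the domain of the compound expression lie in both $\dom(E_1)$ and $\dom(E_2)$, and the induction hypothesis applies. I would then invoke the elementary inequalities
\[
|\min(a,b) - \min(c,d)| \leq \max(|a-c|,|b-d|), \qquad |\max(a,b) - \max(c,d)| \leq \max(|a-c|,|b-d|),
\]
\[
|(a\pm b)-(c\pm d)| \leq |a-c|+|b-d|,
\]
to conclude that $\min(E_1,E_2)$ and $\max(E_1,E_2)$ are Lipschitz with constant $\max(K_1,K_2)$, and that $E_1+E_2$ and $E_1-E_2$ are Lipschitz with constant $K_1+K_2$.

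The only subtle point is the base case: the determinism of $D$ is essential because for a general (even unambiguous) WA, runs on $u$ and $v$ need not share the same prefix even when $u$ and $v$ do, so the per-symbol bound on the value discrepancy could fail. The combinator step is entirely routine once one observes that taking intersections of domains keeps $u$ and $v$ simultaneously in all relevant subdomains. Putting the two parts together completes the induction.
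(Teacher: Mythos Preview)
Your proof is correct and follows essentially the same approach as the paper: structural induction on the s-expression, with the base case exploiting determinism so that the runs on $u$ and $v$ agree on the common prefix $\sqcap(u,v)$, and the inductive step propagating the Lipschitz constant through the combinators. Your presentation of the inductive step via the standard inequalities $|\min(a,b)-\min(c,d)|\leq\max(|a-c|,|b-d|)$ and $|(a\pm b)-(c\pm d)|\leq|a-c|+|b-d|$ is in fact a bit cleaner than the paper's explicit case analysis for $\min$, but the underlying argument is identical.
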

 
As we show in Appendix, unambiguous WA can define non Lipschitz
continuous functions, hence not definable by s-expressions. On the contrary, the function $u\mapsto \min(\#_a(u),\#_b(u))$ is definable by an s-expression while it is not definable by a WA~\cite{DBLP:journals/tcs/KlimannLMP04}.

\begin{proposition}\label{prop:s-exprbis}
There are quantitative languages that are definable by unambiguous
weighted automata and not by s-expressions. There are quantitative languages that
are definable by s-expressions but not by a WA.
\end{proposition}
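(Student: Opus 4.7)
The plan is to establish both non-inclusions by exhibiting explicit witness QLs, relying on \Lemma{lem:s-exprexpr} in one direction and on a classical inexpressibility result for $(\max,+)$-automata in the other.

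For the unambiguous-WA-not-s-expression direction, \Lemma{lem:s-exprexpr} reduces the task to producing an unambiguous WA whose QL is not Lipschitz-continuous. On $\Sigma=\{a,b\}$, I would take the function $f$ with $\dom(f)=a^*\cup a^*b$ defined by $f(a^n)=n$ and $f(a^nb)=0$ for all $n\geq 0$. Since $d(a^n,a^nb)=1$ while $|f(a^n)-f(a^nb)|=n$ grows unboundedly, no Lipschitz constant can witness continuity of $f$, hence by \Lemma{lem:s-exprexpr} no s-expression defines $f$. To realise $f$ by an unambiguous WA, the plan is to glue two deterministic branches whose regular domains partition $\dom(f)$: the first branch consists of a single accepting initial state with an $a$-self-loop of weight $1$, realising $a^n\mapsto n$ on $a^*$; the second branch has a non-accepting initial state with an $a$-self-loop of weight $0$ together with a $b$-transition of weight $0$ into a distinct accepting sink, realising $a^nb\mapsto 0$ on $a^*b$. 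The main subtlety is unambiguity: a naive WA would leave competing accepting runs on short inputs, but disjointness of $a^*$ and $a^*b$ guarantees that every accepted word admits exactly one accepting run.

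For the s-expression-not-WA direction, I would observe that $\#_a$ and $\#_b$ are each computed by a one-state deterministic $(\max,+)$-automaton (with the appropriate self-loop weights), so $w\mapsto \min(\#_a(w),\#_b(w))$ is directly the $\min$ of two s-expression atoms and therefore an s-expression. That this QL is not definable by any WA is the classical inexpressibility result of Klimann et al.~\cite{DBLP:journals/tcs/KlimannLMP04} already recalled in the introduction, which concludes the proof.
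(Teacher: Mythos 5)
Your proof is correct and follows essentially the same strategy as the paper: for the first claim, exhibit a non-Lipschitz-continuous QL realised by an unambiguous WA and invoke \Lemma{lem:s-exprexpr}, and for the second, use $\min(\#_a,\#_b)$ together with the inexpressibility result of Klimann et al. The only difference is your choice of witness for the first claim (the paper uses the ``last block'' function $a^{n_k}b\dots ba^{n_0}\mapsto n_0$, whereas you use $f(a^n)=n$, $f(a^nb)=0$); your witness works equally well, and its unambiguity is immediate from the disjointness of $a^*$ and $a^*b$.
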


To unleash their expressive power, we generalise s-expressions. First, instead deterministic WA, we consider unambiguous WA as atoms. This extends their expressiveness beyond finite valued WA. Second, instead of considering a fixed (and arbitrary) set of composition functions, we consider any function that is (existential) Presburger definable. Third, we consider the addition of Kleene star operator. While the first two extensions maintain decidability in \textsc{PSpace}, the third extension leads to undecidability and sub-cases need to be studied to recover decidability. We study the two first extensions here and the Kleene star operator in the next section.

\begin{definition}
    \emph{Monolithic expressions (m-expression)} are terms $E$
    generated by the grammar $
    E\ \Coloneqq \ A\mid \phi(E_1,\dots,E_n)
    $,     where $A$ is an unambiguous WA, and $\phi$ is a
    functional Presburger formula of arity $n$. 
\end{definition}
\noindent
The semantics $\sem{E} : \Sigma^*\rightarrow \mathbb{Z}$ of an m-expression $E$
is defined inductively, and similarly as s-expression. In particular,
for $E = \phi(E_1,\dots,E_n)$, $\dom(E) = \bigcap_{i=1}^n\dom(E_i)$
and for all $u\in\dom(E)$, $\sem{E}(u) =
\sem{\phi}(\sem{E_1}(u),\dots,\sem{E_n}(u))$ (the semantics of functional
Presburger formulas is defined in \Section{sec:prelim}).

% As for s-expression,  any  m-expression $E$ defines a quantitative language $\sem{E} : \Sigma^*\rightarrow \mathbb{Z}$ on a domain $\dom(E)$ inductively as follows:
% if $E \equiv A$, then $\dom(E) = L(A)$ and for all 
% $u\in L(A)$, $\sem{E}(u) = \sem{A}(u)$ (the semantics of WA is defined in \Section{sec:prelim}); if
% $E \equiv \phi(E_1,\dots,E_n)$, then $\dom(E) = \bigcap_{i=1}^n\dom(E_i)$ and
% for all $u\in\dom(E)$, $\sem{E}(u) =
% \sem{\phi}(\sem{E_1}(u),\dots,\sem{E_n}(u))$ (the semantics of functional
% Presburger formulas is defined in \Section{sec:prelim}).
% We say that two m-expressions $E_1,E_2$ are equivalent if
% $\sem{E_1} = \sem{E_2}$ (in particular $\dom(E_1)=\dom(E_2)$). 

\begin{example}
    As seen in \Section{sec:prelim}, $\max$ is
    Presburger-definable by a formula $\phi_{\max}$, it is also the case for $\min(E_1,\dots,E_n)$,
    $E_1+E_2$, $E_1-E_2$ and the unary operation $-E$.
    For m-expressions $E_1,E_2$, the
    distance $|E_1-E_2|\ :\ w\in\dom(E_1)\cap\dom(E_2)\mapsto
    |E_1(w)-E_2(w)|$ is defined by the m-expression 
    $\max(E_1-E_2,E_2-E_1)$. This function is not definable by a
    WA even if $E_1,E_2$ are 2-ambiguous WA, as a consequence of the
    non-expressibility by WA of $\min(\#_a(.),\#_b(.)) = |0-\max(-\#_a(.),-\#_b(.))|$
    \cite{DBLP:journals/tcs/KlimannLMP04}.
\end{example}

%\begin{proposition}\label{prop:expressivitymono}
%    The following hold:
%    \begin{enumerate}
%      \item There exists an unambiguous WA $A$ which is not definable by any
%    deterministic m-expression. 
%  \item For all finite-valued WA $B$, there
%    exists an m-expression $E$ such
%    that $\sem{B} = \sem{E}$. The converse does not hold. 
%\end{enumerate}
%\end{proposition}

\begin{lemma}\label{lem:m-exprexpr}
M-expressions are more expressive than finite valued WA. There are functions definable by m-expressions and not by a WA.
\end{lemma}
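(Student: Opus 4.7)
I split the lemma into two independent claims, proving both by combining known ingredients from the paper.

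First, I show that every finite-valued WA can be turned into an equivalent m-expression. By the result of Filiot, Gentilini and Raskin (\cite{DBLP:conf/fsttcs/FiliotGR14}, already invoked in \Section{sec:prelim}) any $k$-valued WA is equivalent to a finitely ambiguous WA, and any finitely ambiguous $(\max,+)$-automaton $M$ decomposes effectively into a finite family $M_1,\dots,M_n$ of unambiguous WA with $L(M_i)\subseteq L(M)$ and $\sem{M}(w)=\max_{i:\,w\in L(M_i)} \sem{M_i}(w)$ for every $w\in L(M)$. Using the Presburger combinator $\phi_{\max}$ from \Section{sec:prelim} yields the candidate m-expression $\phi_{\max}(M_1,\dots,M_n)$. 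The one mild subtlety is to align the domains, since an m-expression intersects the domains of its arguments while the $M_i$ typically have different languages. I handle this by first replacing each $M_i$ with $M_i'$ whose underlying automaton accepts $L(M)$ (adding the missing accepting runs of $M$ to $M_i$ with the same weights as in $M$, using that $M$ is finitely ambiguous so unambiguity can be recovered through a determinization-like union argument on the remaining cases); alternatively, a cleaner route is to note that the domain $L(M)$ is regular, to view it as an unambiguous $0$-weighted WA $U$, and to use the formula $\phi(x_0,x_1,\dots,x_n,y)\equiv (x_0=0)\wedge \phi_{\max}(x_1,\dots,x_n,y)$ applied to $U,M_1,\dots,M_n$, which forces the common domain to be $L(M)$ without altering the computed value.

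Second, I exhibit a separating function. Take $\Sigma=\{a,b\}$ and $f(w)=\min(\#_a(w),\#_b(w))$. The counting functions $\#_a$ and $\#_b$ are computed by trivial one-state deterministic (hence unambiguous) WA $A_a$ and $A_b$ with domain $\Sigma^*$. The $\min$ operation is Presburger-definable by a functional formula symmetric to $\phi_{\max}$, namely $\phi_{\min}(x_1,x_2,y)\equiv (y\leq x_1)\wedge(y\leq x_2)\wedge(y=x_1\vee y=x_2)$. Hence $\phi_{\min}(A_a,A_b)$ is an m-expression computing $f$. By the result of Klimann et al.~\cite{DBLP:journals/tcs/KlimannLMP04} recalled in the introduction, $f$ is not definable by any WA; in particular it is not definable by a finite-valued WA.

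Combining the two claims, the inclusion of finite-valued WA into m-expressions is strict, and the function $f$ witnesses the second statement. The only real work is the domain-alignment step in the first claim, which is mechanical given the cited decomposition theorem; everything else is a direct application of the Presburger combinator machinery introduced in this section.
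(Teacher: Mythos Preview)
Your second claim is correct and matches the paper: the function $w\mapsto\min(\#_a(w),\#_b(w))$ is the separating example in both proofs.

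For the first claim, the overall strategy (decompose a finite-valued WA into unambiguous $M_1,\dots,M_n$ via \cite{DBLP:conf/fsttcs/FiliotGR14}, then combine with a Presburger $\max$) is exactly the paper's, but your domain-alignment step has a genuine gap. Your ``cleaner route'' does not work: by definition of m-expressions, $\dom(\phi(U,M_1,\dots,M_n))=\dom(U)\cap\bigcap_i\dom(M_i)=\bigcap_i\dom(M_i)$, since each $\dom(M_i)\subseteq L(M)$ already. Adding $U$ does not enlarge the intersection, so the resulting expression is still only defined on $\bigcap_i\dom(M_i)$, which may be strictly smaller than $L(M)$. Your first route (``adding the missing accepting runs of $M$ to $M_i$'') is too vague to be a proof: $M$ may have several runs on a given word, so the augmented $M_i'$ need not be unambiguous, and there is no general determinization for $(\max,+)$-automata that would let you repair this while preserving the values.

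The paper's fix is different and elementary: let $\alpha$ be the smallest weight occurring in any $M_i$, and for each $i$ build a \emph{deterministic} WA $A_i^c$ with domain $L(M)\setminus\dom(M_i)$ mapping $w$ to $\alpha|w|$. Setting $B_i$ to be the disjoint union of $M_i$ and $A_i^c$ gives an unambiguous WA with $\dom(B_i)=L(M)$, and on any $w\in L(M)$ the padded value $\alpha|w|$ is dominated by the genuine value of at least one $M_j$ containing $w$, so $\max(B_1,\dots,B_n)$ agrees with $\sem{M}$. This is the missing ingredient in your argument.
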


We finally come to the main result of this section:

\begin{theorem} \label{thm:mono}
    For m-expressions, the emptiness, universality and comparison
    problems
    are \textsc{PSpace-Complete}. 
\end{theorem}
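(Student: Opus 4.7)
The plan is to reduce the three problems to $0$-emptiness via \Remark{remark:zeroEmptiness}, and then to decide $0$-emptiness by encoding an m-expression as a reversal-bounded multi-counter machine, to which Ibarra-style decidability results apply. First, I would verify the closure assumptions of \Remark{remark:zeroEmptiness} for m-expressions: closure under difference is immediate since subtraction is Presburger-definable; closure under regular domain restriction is obtained by summing with a $0$-weighted unambiguous WA (every regular language admits one via determinisation of its minimal automaton); and domain inclusion is decidable in \textsc{PSpace}, since each $\dom(E)$ is a regular language effectively constructible from the atoms. I would also flatten $E$ to the form $\phi(A_1,\ldots,A_n)$ with the $A_i$ unambiguous WAs and $\phi$ a single functional existential Presburger formula, which is polynomial because existential Presburger formulas compose in linear size by introducing existentially quantified variables for intermediate values.

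For the \textsc{PSpace} upper bound, I would build a nondeterministic counter machine $\mathcal{M}$ whose control states are tuples $(q_1,\ldots,q_n)$ of states of the $A_i$, stored succinctly in polynomial space even though the product has exponentially many states. On each input letter, $\mathcal{M}$ follows the unique transition in each $A_i$ enforced by unambiguity. For each $i$, two counters $c_i^+$ and $c_i^-$ accumulate the positive and negative parts of the transition weights of $A_i$, making each counter monotonically non-decreasing, hence $0$-reversal. The machine accepts when every component is in an accepting state and $\phi(c_1^+ - c_1^-,\ldots,c_n^+ - c_n^-) > 0$. By Ibarra's theorem on reversal-bounded counter machines with Presburger acceptance, emptiness of $\mathcal{M}$ is decidable, and the succinct polynomial-space tuple representation of control states together with the semi-linear structure of $0$-reversal counter reachability let this check be performed in \textsc{PSpace}, e.g.\ via on-the-fly satisfiability of a polynomial-size existential Presburger formula.

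For \textsc{PSpace}-hardness I would reduce from the non-emptiness of the intersection of a family of DFAs $N_1,\ldots,N_k$ (which is \textsc{PSpace}-complete): convert each $N_i$ into an unambiguous WA $A_i$ whose transitions all carry weight $0$, and form the m-expression $E = \phi(A_1,\ldots,A_k)$ with $\sem{\phi}\equiv 0$. Then $\dom(E) = \bigcap_{i=1}^k L(N_i)$ and $\sem{E}$ is identically $0$ on its domain, so the $\geq$-variant of $0$-emptiness for $E$ holds iff $\bigcap_{i=1}^k L(N_i) \neq \emptyset$. Hardness of universality and of the comparison problem then follows from the converse reductions mentioned in \Remark{remark:zeroEmptiness}.

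The delicate step will be keeping the emptiness check inside \textsc{PSpace} in spite of the exponential product state space: this hinges on carefully combining the succinct tuple encoding of control states, the polynomially many $0$-reversal counters (whose reachable value sets are semi-linear and admit small existential Presburger descriptions), and the existential nature of $\phi$; any relaxation of these ingredients, such as unrestricted counter reversals or quantifier alternation in the Presburger combinator, would immediately break the \textsc{PSpace} bound.
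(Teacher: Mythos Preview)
Your plan is essentially the paper's: reduce to $0$-emptiness via \Remark{remark:zeroEmptiness}, flatten to $\phi(A_1,\ldots,A_n)$, encode the product of the atoms as a reversal-bounded counter machine with a pair $(c_i^+,c_i^-)$ per atom, appeal to Ibarra for decidability, and get hardness from DFA intersection. Two points are worth tightening. First, unambiguity does not give a \emph{unique transition} on each letter (that would be determinism) but only a unique accepting run; your machine must therefore guess transitions nondeterministically, which is fine since it is already nondeterministic. Second, the route ``on-the-fly satisfiability of a polynomial-size existential Presburger formula'' is not obviously available: the product control space is exponential, and the natural Parikh/semi-linear description of reachable counter values has exponential size in $n$. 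The paper instead uses Ibarra's small-witness bound (Lemma~2 of~\cite{ibarra}): a $1$-reversal $k$-counter machine with $m$ transitions is nonempty iff it accepts within $(k\,m)^{O(k)}$ steps, a bound whose \emph{binary} encoding is polynomial in $|E|$; one then guesses an accepting computation of at most that length step by step, storing only the current state tuple and counter values in polynomial space. Your proposal would go through once you replace the Presburger-formula suggestion by this small-witness argument.
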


\begin{proof}[Sketch]
    By \Remark{remark:zeroEmptiness}, all the problems
    reduce in \textsc{PTime} to the $0$-emptiness problem for which we establish \textsc{PSpace} membership. 
    Clearly, by combining Presburger formulas, any
    m-expression is equivalent to an m-expression
    $\phi(A_1,\dots,A_n)$ where $A_i$ are unambiguous WA. Now, the
    main idea is to construct a product $A_1\times \dots \times A_n$
    (valued over $\mathbb{Z}^n$), which maps any word $u\in \bigcap_i
    \dom(A_i)$ to $(A_1(u),\dots,A_n(u))$. (Effective) semi-linearity of 
    $\text{range}(A_1\times \dots \times A_n)$ is a consequence
    of Parikh's theorem, which implies semi-linearity of  
    $\text{range}(\phi(A_1,\dots,A_n))$. Then
    it suffices to check for the existence of a positive value in
    this set. To obtain \textsc{PSpace} complexity, the
    difficulty is that $A_1\times \dots \times A_n$ has exponential
    size. To overcome this, we encode $\phi(A_1,\dots,A_n)$ into a counter
    machine. First, $A_1\times \dots \times A_n$ is encoded into a
    machine $M$ whose counter valuation, after
    reading $u$, encodes the tuple $(A_1(u),\dots,A_n(u))$. Then,
    $M$ is composed with another counter machine $M_\phi$ that
    compute, on reading the word $\epsilon$, the value
    $\phi((A_1(u),\dots,A_n(u))$ (stored in an extra
    counter). Finally, the compositional machine $M\cdot M_\phi$ accepts iff this latter value
    is positive, hence it suffices to check for its emptiness. We
    define $M\cdot M_\phi$ in such a way that it is
    \emph{reversal-bounded} (its counters change from increasing to
    decreasing mode a constant number of times
    \cite{ibarra}). Reversal-bounded counter machines have decidable
    emptiness problem. While $M_\phi$ can be constructed in \textsc{PTime}, $M$
    has an exponential size in general. However, we can use a small
    witness property given in \cite{ibarra} to devise a \textsc{Pspace}
    algorithm that does not construct $M$ explicitly.

\textsc{PSpace-hardness} for emptiness is obtained from the emptiness problem
of the intersection of $n$ DFAs. 
    \qed
\end{proof}

	\section{Expressions with iterated sum}\label{sec:iter}

% \subsection{Definition and Undecidability}

Given $f : \Sigma^*\rightarrow \mathbb{Z}$ a quantitative language, the iterated sum of $f$ (or unambiguous Kleene star), denoted by $f^\s$, is defined by $f^\s(\epsilon) = 0$, and for all $u\in \Sigma^+$, if there exists at most one tuple $(u_1,\dots,u_n)\in (\dom(f)\setminus\{\epsilon\})^n$ such that $u_1\dots u_n = u$, then $f^\s(u) = \sum_{i=1}^n f(u_i)$.
Note that $\epsilon\in \dom(f^\s)$ for any $f$.
By extending m-expressions with iterated sum, we obtain iterated-sum expressions (i-expressions).

\begin{definition}
    An iterated-sum expression $E$ (i-expression for short) is a term generated by the
    grammar $E\ \Coloneqq\ A\mid \phi(E,E) \mid E^\s$, 
    where $A$ is some unambiguous WA over $\Sigma$ and $\phi$ is a
    functional Presburger formula. 
\end{definition}

As for m-expressions, the semantics of any i-expression $E$ is a
quantitative language $\sem{E}:\Sigma^*\rightarrow \mathbb{Z}$ 
inductively defined on the structure of the expression.

% \myparagraph{Semantics} Any iter-expression $E$ defined a quantitative
% language $\sem{E}:\Sigma^*\rightarrow \mathbb{Z}$ on a domain
% $\dom(E)$ inductively as follows:
% \begin{itemize}
  % \item $E\equiv L/v$: then $\dom(E) = L$ and for all $u\in L$,
    % $\sem{L/v}(u) = v$,
  % \item $E \equiv E_1\ife E_2$: then $\dom(E) = \dom(E_1)\cup
    % \dom(E_2)$ and 
    % $$
    % \begin{array}{rrcl}
    % \sem{E} : & \dom(E) & \rightarrow & \mathbb{Z} \\
    % & u & \mapsto &
    % \begin{cases}
	    % \hfil\sem{E_1}(u) & \caseif u\in\dom(E_1)\\
	    % \hfil\sem{E_2}(u) & \caseif u\in\dom(E_2)\setminus\dom(E_1)
	% \end{cases}
	% \end{array}
    % $$
  % \item $E \equiv E_1\oplus E_2$, then $\dom(E)$ is the set of words
    % $u\in\Sigma^*$ such that $u$ can be \emph{uniquely} decomposed
    % into $u_1u_2$, where $u_i\in\dom(E_i)$. Then,
    % $\sem{E}(u) = \sem{E_1}(u_1)+\sem{E_2}(u_2)$. 
  % \item $E\equiv \phi(E_1,E_2)$, then the semantics is defined
    % similarly as for flat expressions.
  % \item $E = F^*$, then $\dom(E)$ is the set of words $u\in\Sigma^*$ which can be
    % uniquely decomposed into $u_1\dots u_n$ where\footnote{Note that $\dom(E) = \varnothing$
      % whenever $\epsilon\in\dom(F)$}
    % $u_i\in\dom(F)$. Then $\sem{E}(u) = \sum_{i=1}^n \sem{F}(u_i)$. 
% \end{itemize}

\begin{example}\label{ex:iterexpr}
    Assume that $\Sigma = \{ a, b, \$\}$ and consider the QL $f$ defined for all $u\in\Sigma^*$ by $ u_1\$ u_2 \$ \dots u_n\$\ \mapsto\ \sum_{i=1}^n \max(\#_a(u_i),\#_b(u_i)) $ where each $u_i$ belongs to $\{a,b\}^*$, and $\#_\sigma$ counts the number of occurrences of $\sigma$ in a word.
    Counting the number of $\sigma$ in $v\$$ where $v\in\{a,b\}^*$ is realisable by a 2 states deterministic WA $A_\sigma$.
    Then, $f$ is defined by the i-expression $\max(A_a, A_b)^\s$.
\end{example}

We show a positive and a negative result.
\begin{proposition}\label{prop:reg}
    The domain of any i-expression is (effectively) regular. 
\end{proposition}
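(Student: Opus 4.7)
The plan is to proceed by structural induction on the i-expression $E$. The base case $E \equiv A$ is immediate since $\dom(E) = L(A)$ is the language of a finite automaton. For the Presburger case $E \equiv \phi(E_1,E_2)$, we have $\dom(E) = \dom(E_1) \cap \dom(E_2)$, which is regular by the induction hypothesis and closure of regular languages under intersection.

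The core of the proof lies in the iterated-sum case $E \equiv F^\s$. By induction $\dom(F)$ is regular, hence so is $L \new \dom(F) \setminus \{\epsilon\}$, and since $\dom(F^\s)$ differs from the language $U_L$ of words in $\Sigma^+$ admitting a \emph{unique} decomposition into non-empty factors of $L$ only by the possible addition of $\epsilon$, it suffices to show $U_L$ is effectively regular. Starting from a DFA for $L$, I would construct an NFA $N$ for $L^+$ whose transitions are \emph{tagged} with a mode, \emph{continue} or \emph{restart}, so that the accepting runs of $N$ on $u$ are in bijection with the decompositions of $u$ into non-empty factors of $L$: at every final state reached, the run either continues the current factor or closes it and starts a new one from the initial state. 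A word $u$ then lies in $U_L$ if and only if $N$ admits exactly one accepting run on $u$. The set of words admitting at least two distinct accepting runs is in turn regular, being recognised by a product NFA on states $Q_N \times Q_N \times \{0,1\}$ that simulates two runs of $N$ in parallel and raises a flag the first time the two tagged transitions disagree, accepting only when both components are final and the flag is set. Then $U_L$ is the difference between $L(N) = L^+$ and the language recognised by this product, which is regular by closure under difference.

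The step I expect to require most care is the bijection between accepting runs of $N$ and decompositions. To make it bijective, the \emph{continue} and \emph{restart} edges must be treated as separate transitions even when they lead to the same target state, and the product-with-flag construction must compare tagged transitions rather than states alone. Once this bookkeeping is set, effective regularity of $\dom(E^\s)$ follows from standard constructions on finite automata, closing the induction.
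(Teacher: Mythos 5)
Your proof is correct and takes essentially the same route as the paper: the same structural induction, with the iterated-sum case reduced to showing that the language of uniquely decomposable words is effectively regular, obtained by a self-product automaton that detects the existence of a second decomposition followed by complementation. The paper's version of this step (\Proposition{prop:regdec}) marks factor boundaries with $\epsilon$-transitions from final states back to the initial state and flags a run pair in which an $\epsilon$-transition fires in one copy but not the other, which is interchangeable with your continue/restart tagging.
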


% We say that a functional Presburger formula $\phi$ defines addition if
% for all $x,y\in\mathbb{Z}$, $\sem{\phi}(x,y)=x+y$.
% \red{En general dans ce papier, les Pers formulas sont restraintes a une unique operation ???}

% \begin{proposition}[Alur] \label{prop:alur}
    % Let $f$ be a quantitative language. The following are equivalent:
    % \begin{enumerate}
      % \item $f$ is definable by some Presburger-free i-expression,
      % \item $f$ is definable by some unambiguous sum-automaton. 
    % \end{enumerate}
% \end{proposition}

% \begin{proof}
    % PROOF: same construction as in the proof of the more general
    % result hybrid to iter. 
% \end{proof}

% \begin{remark}
    % As an immediate corollary of \Proposition{prop:alur} and
    % \Theorem{thm:mono}, the following expressions (with their
    % intuitive semantics) have decidable  emptiness, universality and
    % comparison problems: $E\ ::=\ F\mid \phi(E,E)\qquad F\ ::=\ L/v\mid F\oplus F\mid F\ife
    % F\mid F^*$.

    % Iterated-sum expressions are a flattening of this definition, in
    % the sense that we mix all operators together. Unfortunately, as
    % shown by the following theorem, this extension turns out to be
    % undecidable. 
% \end{remark}

\begin{theorem}\label{thm:undec}
    Emptiness, universality and comparisons for i-expressions are
    undecidable problems, even if only s-expressions are iterated.
    %Presburger formulas are restricted to
    %define min or max, and star operations are not nested. 
\end{theorem}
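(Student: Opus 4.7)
The plan is to reduce from the Post Correspondence Problem (PCP), which is undecidable. Given a PCP instance $\mathcal{I} = \{(u_1,v_1),\dots,(u_n,v_n)\}$ over $\{a,b\}$, I would construct an i-expression $E_\mathcal{I}$ whose $0$-emptiness is equivalent to the existence of a PCP solution. By \Remark{remark:zeroEmptiness}, together with the (straightforward) closure of i-expressions under regular domain restriction and Presburger combination, undecidability of $0$-emptiness then propagates to universality and comparison, so it suffices to attack emptiness.

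The construction exploits the phenomenon flagged in the introduction: two subexpressions of the form $E^\s$ may decompose the same input word into entirely different factorisations. I would enlarge the alphabet to encode both PCP indices $\{1,\ldots,n\}$ and the letters $a,b$, and represent candidate solutions as input words $w$ that simultaneously carry an index sequence $i_1\cdots i_k$ and a common symbol string. The expression would then use exactly two iterated sums $E_1^\s$ and $E_2^\s$, each applied to an s-expression (in accordance with the hypothesis ``even if only s-expressions are iterated''), each refactoring $w$ along one side of the PCP: $E_1^\s$ would read each block as a pair $(i, u_i)$ taken from the common string, while $E_2^\s$ would do the symmetric thing with $v_i$. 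A Presburger combinator of the form $\phi(x,y) = x - y$ at the top would then combine the two values, and the overall expression would take value $0$ precisely on words encoding genuine PCP solutions.

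The main obstacle is twofold. First, I must ensure that each iterated sum admits the \emph{unambiguous} decomposition required by the semantics of $^\s$; I would enforce this by interleaving easily-detectable delimiters around every block so that its boundaries are uniquely determined by the input structure, and so that the domains of $E_1^\s$ and $E_2^\s$ coincide (otherwise the Presburger combination would not even be defined on the candidate words). Second, and more delicately, I must encode equality of the two common concatenations using weights aggregated linearly by unambiguous WA, since $(\max,+)$-automata cannot realise exponential string fingerprints in general. I would bypass this by arranging each iterated sum to accumulate, position by position, small integer contributions depending on the symbol currently consumed at a known offset, so that the top-level difference $E_1^\s - E_2^\s$ telescopes to $0$ iff every symbol drawn from the $u$-side coincides with the corresponding symbol drawn from the $v$-side. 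A final design constraint, which must be respected throughout, is that all operations occurring inside the iterations stay within the s-expression fragment: no nested $^\s$ is allowed, but the bounded use of $\min,\max,+,-$ over deterministic atoms suffices for the two ``index + sub-word'' recognisers used as iterated atoms.
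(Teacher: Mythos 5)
Your high-level strategy (two parallel iterated sums that refactor the same word differently, a Presburger combinator on top, and \Remark{remark:zeroEmptiness} to propagate from $0$-emptiness to the other problems) is exactly the structural idea of the paper's proof. But the paper reduces from the halting problem of two-counter machines, not from PCP, and that choice is not cosmetic: it is what makes every non-regular check expressible by an s-expression. In the paper's encoding, configurations are written in unary ($a^c b^d$), so the only non-regular conditions are equalities between \emph{counts} of letters in designated zones (e.g.\ $n_2 = n_1 + 1$), and each such condition is realised by an s-expression like $\min(1+n_1-n_2,\, -1-n_1+n_2)$, which is always $\leq 0$ and equals $0$ iff the condition holds. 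Summing these single-signed local values over the factors then gives $0$ iff \emph{all} local checks pass.

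Your PCP reduction breaks precisely where you try to replace this count comparison by a string comparison. The step ``accumulate, position by position, small integer contributions \ldots{} so that the difference $E_1^\s - E_2^\s$ telescopes to $0$ iff every symbol of the $u$-side coincides with the corresponding symbol of the $v$-side'' does not work: a sum of per-position contributions vanishing only tells you the totals agree, not that they agree position by position. With any position-independent numeric encoding $c(\sigma)$ of symbols, the difference of the two accumulated sums is just $\sum_j c(s^u_j) - \sum_j c(s^v_j)$, which is $0$ already for $ab$ versus $ba$; to make it position-sensitive you would need coefficients growing like $B^j$, i.e.\ exactly the exponential fingerprint you correctly concede is unavailable to $(\max,+)$-automata. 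The single-signed trick that rescues the paper's sum ($0$ iff correct, negative otherwise) cannot be applied here either, because deciding whether the $j$-th symbol of $u_{i_1}\cdots u_{i_k}$ equals the $j$-th symbol of $v_{i_1}\cdots v_{i_k}$ is not a local property of any factor of any unambiguous decomposition: the two concatenations have non-aligned block boundaries, so no factorisation of a single copy of the word places the two symbols to be compared inside one factor, and carrying two copies of the string just moves the same string-equality problem elsewhere. There is also a secondary alignment problem for the index sequence itself, which must be visible to both decompositions although their block boundaries drift apart. To repair the argument you should switch the source problem to one whose non-regular constraints are purely arithmetic comparisons of unary-encoded quantities, which is what the two-counter-machine reduction achieves.
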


\begin{proof}[Sketch]
The proof of this theorem, inspired by the proof of
\cite{DBLP:journals/corr/DaviaudGM16} for the undecidability of WA
universality, consists of a reduction from the 2-counter machine halting problem to the $0$-emptiness problem of i-expressions. 
This establishes undecidability for the other decision problems by
\Remark{remark:zeroEmptiness}. In this reduction, a transition between
two successive configurations $...(q_1,(x \mapsto c_1, y \mapsto d_1))
\delta (q_2,(x \mapsto c_2, y \mapsto d_2))...$ is coded by a factor
of word of the form: $ ... \vdash q_1 a^{c_1} b^{d_1} \triangleleft
\delta \triangleright q_2 a^{c_2} b^{d_2} \dashv \vdash q_2 a^{c_2}
b^{d_2} \triangleleft ...$. 

We show that such a word encodes an halting computation if it respects
a list of simple requirements that are all are regular but two: one that expresses that increments
and decrements of variables are correctly executed, and one that
imposes that, from one transition encoding to the next, the current
configuration is copied correctly. In our example above, under the
hypothesis that $x$ is incremented in $\delta$, this amounts to
check that the number of $a$ occurrences before $\delta$ is equal to
the number of occurrences of $a$ after $\delta$ minus one. This
property can be verified by s-expression 
on the factor between the $\vdash$ and $\dashv$ that returns $0$ if
it is the case and a negative value otherwise. The second property
amounts to check that the number of occurrences of $a$ between the
first $\triangleright$ and $\dashv$ and the number of $a$ between the
second $\vdash$ and second $\triangleleft$ are equal. Again, it is
easy to see that this can be done with an s-expression that returns $0$ if it is the case and a negative value
otherwise. Then, with i-expressions we decompose the word into factors
that are between the markers $\vdash$ and $\dashv$, and other factors that are
between the markers $\triangleright$ and $\triangleleft$, and we iterate the
application of the s-expressions mentioned above. The sum of all the
values computed on the factors is equal to $0$ if the requirements are met and negative otherwise.\qed
\end{proof}

A close inspection of the proof above, reveals that the undecidability stems
from the asynchronicity between parallel star operators, and in the way
they decompose the input word (decomposition based on  $\vdash\dots
\dashv$ or $\triangleright\dots\triangleleft$). The two overlapping decompositions are needed. 
By disallowing this, decidability is recovered: subexpressions $F^\s$ and $G^\s$
at the same nested star depth must decompose words in exactly the same way. 

Let us formalise the notion of star depth. Given an
i-expression $E$, its syntax tree $T(E)$ is a tree 
labeled by functional Presburger formulas $\phi$, star operators
$^\s$, or unambiguous WA $A$. Any node $p$ of $T(E)$ defines a
subexpression $E|_p$ of $E$. %When a subexpression $E|_p$ occurs only
%once in $E$, we identify $E|_p$ and $p$. %
The \emph{star depth} of node $p$
is the number of star operators occurring above it, i.e. the number of
nodes $q$ on the path from the root of $T(E)$ to $p$ (excluded)
labeled by a star operator. E.g. in the expression $\phi(A_1^\s,
\phi(A_2^\s))^\s$, the subexpression $A_1^\s$ has star depth $1$, $A_1$
has star depth $2$, and the whole expression has star depth $0$.

% Let $L\subseteq \Sigma^*$. We denote by $L^\#$ the set of words
% $u\in\Sigma^*$ such that $u$ is uniquely decomposed into $u_1\dots
% u_n$ such that $u_i\in L$ for all $i$. Note that $\dom(E^*) =
% \dom(E)^\#$ for all i-expression $E^*$. We denote by $\dec{L}$ the
% (partial) function mapping all $u\in L^\#$ to its decomposition
% $(u_1,\dots,u_n)$. 

\begin{definition}
    An $i$-expression $E$ is \emph{synchronised} if for all nodes
    $p,q$ of $T(E)$ at the same star depth, if $E|_p = F^\s$ and $E|_q =
    G^\s$, then $\dom(F) = \dom(G)$. 
\end{definition}

By \Proposition{prop:reg}, this property is decidable. 
Asking that $F$ and $G$ have the same domain enforces that any word
$u$ is decomposed in the same way by $F^\s$ and $G^\s$. 
Given a set $S = \{E_1,\dots,E_n\}$ of $i$-expressions, we write
$\Sync(S)$ the predicate which holds true iff
$\phi(E_1,\dots,E_n)$ is synchronised, where $\phi$ is some arbitrary
functional Presburger formula of arity $n$. 

\begin{example}
An i-expression $E$ is star-chain if for any distincts subexpressions $F^\s$ and $G^\s$ of $E$, $F^\s$ is a subexpression
of $G$, or $G^\s$ is a subexpression of $F$. E.g.
$\max(A^\s, B)^\s$ is star-chain, while
$\max(A^\s,B^\s)^\s$ is not. The expression of \Example{ex:iterexpr} is also
a star-chain, hence it is synchronised, as well as
$\min(\max(A_a,A_b)^\s,A_c)$ (note that in the latter,
$A_c$ applies on the whole input word, while $A_a$ and $A_b$ apply on
factors of it).  
\end{example}

Finitely ambiguous WA is the largest class of WA for which
emptiness, universality and comparisons are decidable \cite{DBLP:conf/fsttcs/FiliotGR14}. Already
for linearly ambiguous WA, universality and comparison problems are
undecidable \cite{DBLP:journals/corr/DaviaudGM16}. 
\Example{ex:iterexpr} is realisable by a synchronised
i-expression or a WA which non-deterministically guess, for each
factor $u_i$, whether it should count the number of $a$ or
$b$. However, as shown in~\cite{DBLP:journals/tcs/KlimannLMP04} (Section~3.5),
it is not realisable by any finitely ambiguous WA. As a consequence:

\begin{proposition}\label{prop:iexprexpr}
    There is a quantitative language $f$ such that $f$ is definable by
    a synchronised i-expression or a WA, but not by a finitely
    ambiguous WA.
\end{proposition}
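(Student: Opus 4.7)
The plan is to use the concrete quantitative language $f$ from \Example{ex:iterexpr}, namely the function mapping $u_1\$u_2\$\dots u_n\$$ (with each $u_i\in\{a,b\}^*$) to $\sum_{i=1}^n \max(\#_a(u_i),\#_b(u_i))$, and verify the three clauses of the proposition in turn: it is definable by a synchronised i-expression, it is definable by a WA, and it is not definable by any finitely ambiguous WA.

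For the first clause, I would observe that \Example{ex:iterexpr} already gives $f = \sem{\max(A_a,A_b)^\s}$. The syntax tree of $\max(A_a,A_b)^\s$ contains a single star-labeled node (at its root), so the universal quantifier in \Definition{4} ranging over pairs $p,q$ of star nodes at the same star depth with $E|_p = F^\s$ and $E|_q = G^\s$ ranges only over $p = q$. The synchronisation condition is therefore trivially satisfied, and $f$ is definable by a synchronised i-expression.

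For the second clause, I would construct a WA that reads the input block by block: at the start of each block $u_i\$$ it non-deterministically commits to a counting mode (either $a$'s or $b$'s), accumulates the corresponding letter weights within the block, and resets at each $\$$. Since the $(\max,+)$-semantics takes the maximum over all accepting runs, and every block-wise mode assignment yields one run, the computed value is $\max_{c\in\{a,b\}^{[n]}}\sum_i \#_{c_i}(u_i) = \sum_i\max(\#_a(u_i),\#_b(u_i)) = f(u)$. Note that this automaton has $2^n$ accepting runs on an input with $n$ blocks, so it is linearly (but not finitely) ambiguous, which is consistent with the statement.

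The main obstacle is the third clause: showing that $f$ is not definable by any finitely ambiguous WA. Rather than redoing the combinatorial pumping argument from scratch, I would invoke Section~3.5 of~\cite{DBLP:journals/tcs/KlimannLMP04}, which establishes exactly that a block-wise maximum of two counting functions of this form escapes the finite-valued class; since finite-valued and finitely ambiguous $(\max,+)$-WA coincide \cite{DBLP:conf/fsttcs/FiliotGR14}, this suffices. Combining the three items yields the desired separation. \qed
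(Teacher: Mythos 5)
Your proposal is correct and follows essentially the same route as the paper, whose justification appears in the text immediately preceding the proposition: the witness is the function of \Example{ex:iterexpr}, definable by the (star-chain, hence synchronised) i-expression $\max(A_a,A_b)^\s$ and by a WA that nondeterministically commits to counting $a$'s or $b$'s per block, while non-definability by finitely ambiguous WA is delegated to Section~3.5 of~\cite{DBLP:journals/tcs/KlimannLMP04}. The only slip is your parenthetical claim that the guessing automaton is \emph{linearly} ambiguous --- with $2^n$ accepting runs on an $n$-block input it is exponentially ambiguous --- but this does not affect the argument, since the statement only requires definability by some WA.
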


As a direct consequence of the definition of i-expressions and
synchronisation, synchronised i-expressions are closed under Presburger combinators and
unambiguous iterated-sum in the following sense:

\begin{proposition}\label{prop:closureIter}
    Let $E_1,\dots,E_n,E$ be i-expressions and $\phi$ a
    functional Presburger formula of arity $n$. If
    $\text{Sync}(E_1,\dots,E_n)$, then $\phi(E_1,\dots,E_n)$ is
    synchronised, and if $E$ is synchronised, so is $E^\s$.
\end{proposition}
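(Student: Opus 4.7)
The plan is to split the proof into the two claims and handle each by a direct inspection of the syntax tree $T(\cdot)$, tracking only how labels and star depths change under the two constructions.

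For the first claim, I would first observe that the definition of $\Sync(E_1,\dots,E_n)$ is well-posed: whether $\phi(E_1,\dots,E_n)$ is synchronised does not depend on the particular functional Presburger formula $\phi$ used at the root. Indeed, changing $\phi$ only relabels the root, which is not itself a node of the form $^\s$; moreover, every node $p$ with $E|_p = F^\s$ lies strictly inside one of the subtrees $T(E_i)$, and for any such $p$ both the subexpression $E|_p$ and the star depth of $p$ are determined by that subtree alone, independently of the root label. Hence $\phi(E_1,\dots,E_n)$ is synchronised for one such $\phi$ iff it is synchronised for every such $\phi$, and the first claim follows directly from the hypothesis $\Sync(E_1,\dots,E_n)$.

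For the second claim, I would describe $T(E^\s)$ explicitly: its root is labelled $^\s$ and has $T(E)$ as its unique subtree. Star depths therefore shift in a controlled way: the new root is alone at star depth $0$, while every node of the original $T(E)$ now has its star depth increased by exactly one. The synchronisation condition can then be verified depth by depth. At star depth $0$, the only node is the new root, so any pair $(p,q)$ to consider must satisfy $p = q$, and the condition $\dom(F) = \dom(G)$ holds trivially. At every star depth $d \ge 1$, the nodes of $T(E^\s)$ at depth $d$ are in label-preserving bijection with the nodes of $T(E)$ at depth $d-1$, and the associated subexpressions $E|_p$ are unchanged by the embedding of $T(E)$ into $T(E^\s)$. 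The synchronisation of $E$ then yields $\dom(F) = \dom(G)$ for every pair $E|_p = F^\s$, $E|_q = G^\s$ at this depth, proving that $E^\s$ is synchronised.

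There is no real obstacle here: both statements reduce to unfolding the definitions of star depth and of synchronisation, together with the observation that one level of tree surgery (adding a root labelled by $\phi$, or adding a root labelled by $^\s$) preserves the relative structure of star-labelled nodes inside the subtrees. The only subtlety worth flagging explicitly is the invariance of the $\Sync$ predicate under the choice of the outer Presburger formula, which is what turns the first claim from a tautology into a short but genuine argument.
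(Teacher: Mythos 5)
Your proof is correct and follows exactly the route the paper intends: the paper states this proposition as a direct consequence of the definitions of star depth and synchronisation and gives no further argument, and your unfolding of those definitions (including the well-posedness of $\Sync$ with respect to the choice of the outer Presburger formula, and the depth-shift by one under $^\s$) supplies precisely the missing details.
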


Despite the fact that synchronised i-expressions can express
QL that are beyond finitely ambiguous WA, we have
decidability (proved in the next section):

\begin{theorem}\label{thm:main1}
    The emptiness and universality problems are decidable for
    synchronised i-expressions. The comparisons problems for
    i-expressions $E_1,E_2$ such that $\text{Sync}\{E_1,E_2\}$ are
    decidable.
\end{theorem}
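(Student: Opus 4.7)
My plan is to reduce universality and the comparison problems to $0$-emptiness via \Remark{remark:zeroEmptiness}, and then to decide $0$-emptiness by translating synchronised i-expressions into synchronised weighted chop automata, whose emptiness is the main decidability result of \Section{sec:chop}.

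First, I would check that synchronised i-expressions satisfy the closure assumptions of \Remark{remark:zeroEmptiness}. Closure under Presburger combinators is exactly \Proposition{prop:closureIter}, hence closure under unary negation and difference is immediate. Closure under regular domain restriction is obtained by composing with a zero-valued unambiguous WA on the restricting language via the Presburger projection $\phi(x,y) = x$; this operation adds no $^\s$-node and therefore preserves synchronisation. Domain inclusion between synchronised i-expressions reduces to inclusion of regular languages by \Proposition{prop:reg} and is decidable. Consequently, universality of a synchronised $E$ reduces to $0$-emptiness of $-E$, and $\geq$-comparison of $E_1,E_2$ with $\Sync\{E_1,E_2\}$ reduces, after a regular domain-inclusion check $\dom(E_1) \supseteq \dom(E_2)$, to $0$-emptiness of the synchronised expression $E_2 - E_1$ (which remains synchronised by \Proposition{prop:closureIter}).

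The core step is deciding $0$-emptiness for a synchronised i-expression $E$. I plan to proceed by induction on the structure of $E$ to translate it into a synchronised weighted chop automaton $\mathcal{C}_E$: a leaf unambiguous WA is already a chop atom; a Presburger combinator $\phi(E_1,\dots,E_n)$ is handled by a product-style construction on the $\mathcal{C}_{E_i}$ followed by a terminal evaluation of $\phi$, in the spirit of the proof of \Theorem{thm:mono}; an iterated sum $F^\s$ is realised by the chop automaton's native iteration operator, which invokes $\mathcal{C}_F$ recursively on each factor of the unique decomposition of the input in $\dom(F)^+$. The synchronisation hypothesis is precisely what makes these constructions sound: sibling iterated-sum subexpressions share the same domain, and therefore factorise every input word in the same way, so the product and iteration operators are well-defined and the resulting chop automaton is itself synchronised. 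The $0$-emptiness of $\sem{E}$ then coincides with that of $\mathcal{C}_E$.

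The main obstacle is the last step: deciding $0$-emptiness for synchronised weighted chop automata. Unlike the monolithic case, the value produced on an input word is now a sum of recursively computed sub-values over an unbounded number of factors, so Parikh's theorem applied to a single product automaton no longer suffices. I expect the argument to proceed by a layered encoding into reversal-bounded counter machines~\cite{ibarra}, with one accumulating counter per level of $^\s$-nesting: the synchronisation invariant guarantees that all sibling iterated-sum subexpressions consume the same factors of the input, so a single traversal can update every relevant counter coherently while preserving reversal-boundedness. The small-witness property of reversal-bounded machines then yields the required decidability. \qed
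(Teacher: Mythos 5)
Your high-level architecture matches the paper's: reduce everything to $0$-emptiness via \Remark{remark:zeroEmptiness}, compile synchronised i-expressions into synchronised weighted chop automata, and then decide emptiness of the latter. The closure checks in your first paragraph are fine. But your final step --- deciding $0$-emptiness of synchronised WCA by ``a layered encoding into reversal-bounded counter machines, with one accumulating counter per level of $^\s$-nesting'' --- does not work, and this is precisely the obstacle the paper identifies and routes around. The value of an input $u = u_1\cdots u_k$ under a chop automaton is $\sum_i \sem{\phi}(\sem{C_1}(u_i),\dots,\sem{C_m}(u_i))$: for \emph{each} factor $u_i$ you must run the sub-machines, then evaluate the Presburger combinator $\phi$ on their counter contents (which, as in \Lemma{lemma:Pres2CM}, consumes those counters and costs a reversal), then reset and reuse the same counters for $u_{i+1}$. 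Since $k$ is unbounded, the number of reversals is unbounded, so the composite machine is not reversal-bounded and \Theorem{thm:Ibarra} does not apply. Synchronisation guarantees that siblings chop the word identically, but it does nothing to bound reversals across iterations. The paper therefore uses counter machines \emph{only} at the base level (unambiguous WA), and handles iteration by a different argument: it proves by induction (\Lemma{lem:semilinear}) that each $\text{range}(\lambda(p,q))$ is a semi-linear set $S_{p,q}$, and then observes that $\text{range}(C) = \mu(L)$ for a regular language $L$ of transition sequences and the monoid morphism $\mu : (p,q)\mapsto S_{p,q}$ into semi-linear subsets of $\Z^n$ under addition; semi-linear sets are closed under sum, union and Kleene star \cite{Eilenberg-Schuetzenberger/69}, so $\mu(L)$ is effectively semi-linear and $0$-emptiness reduces to checking the semi-linear range for a nonnegative element. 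You would need to replace your counter-machine step with an argument of this kind (or supply a genuinely new way to keep reversals bounded, which the paper says it could not find).

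A secondary, smaller gap: in the compilation of i-expressions into WCA you treat $\phi(E_1,\dots,E_n)$ by ``a product-style construction on the $\mathcal{C}_{E_i}$,'' but the hard case is when a whole-word atom $A$ sits next to an iterated sibling $B^\s$ (e.g.\ $\min(\max(A_a,A_b)^\s,A_c)$). A $0$-WCA and an $n$-WCA with $n>0$ are never synchronised, so no product exists directly. The paper's \Theorem{thm:iter2hybrid} resolves this by chopping $A$ into the partial automata $A_{p,q}$ restricted to $\dom(B)$ and reassembling them into a WCA that follows the same factorisation as $B^\s$; your plan should say something to this effect, since the soundness of the product construction in \Lemma{lem:semilinear} depends on all factors of the product being WCA of the same nesting depth.
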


% ADD A REMARK THAT IF ONE ONLY CONSIDERS COMPARISON FOR $E_1,E_2$
% synchronised, but not mutually synchronised, then the problem is
% undecidable. 

% We conclude this section by showing that deciding whether an
% i-expression is synchronised is \textsc{Pspace-c}. 
% \begin{proposition}
    % Deciding whether any given i-expression is synchronised is
    % \textsc{Pspace-hard} and in \textsc{Exptime}, even if the WA are deterministic. 
% \end{proposition}

% \begin{proof}
    % WRONG because $\dom(A^*) = \{\epsilon\}$

        % To prove \textsc{Pspace-hardness}, we reduce the intersection
        % emptiness problem of $n$ DFA $M_1,\dots,M_n$. We let $\phi$ be
        % some arbitrary functional Presburger formula of arity $n$, and
        % $\psi$ be some arbitrary functional Presburger formula of
        % arity $2$. We also let $A$ an empty unambiguous WA (hence
        % $\dom(A) = \varnothing$), and $A_1,\dots,A_n$ be the deterministic
        % WA obtained by adding weight $0$ on all transitions of
        % $M_1,\dots,M_n$ respectively.  Then,
        % $\psi(\phi(A_1,\dots,A_n)^*, A^*)$ is synchronised iff
        % $\dom(\phi(A_1,\dots,A_n)) = \dom(A)$ iff $\bigcap_i \dom(A_i)
        % = \dom(A) = \varnothing$, iff $\bigcap_i L(M_i) =
        % \varnothing$. 

    % \end{proof}

	\section{Decidability of synchronised iterated sum expressions}\label{sec:chop}
\newcommand{\drawMainWCA}{
	\begin{tikzpicture}[>=stealth, node distance=2.8cm, thick]
		\renewcommand{\id}[1]{}
		
		% ~~~~~~~~~~~~~~~~~~~~~~~~~~~~~ STYLES ~~~~~~~~~~~~~~~~~~~~~~~~~~~~~ %
		
		\tikzstyle{every state} = [
			minimum size=.7cm
		]
		
		\tikzstyle{accepting}=[
			accepting by arrow,
			accepting text=,
			accepting where= right
		]
		
		\tikzstyle{initial}=[
			initial by arrow,
			initial text=$C$:,
			initial where= left
		]
		
		% ~~~~~~~~~~~~~~~~~~~~~~~~~~~~~ NODES ~~~~~~~~~~~~~~~~~~~~~~~~~~~~~ %
		
		\node[state, initial] (0) {\id{0}};
		\node[state] (1) [right of = 0] {\id{1}};
		\node[state, accepting] (2) [right of = 1] {\id{2}};
		
		% ~~~~~~~~~~~~~~~~~~~~~~~~~~~~~~ ARCS ~~~~~~~~~~~~~~~~~~~~~~~~~~~~~~ %
		
		\path[->]
			(0) edge node [above] {$\begin{array}{l|l} (\Sigma^*\$)^*\bullet & C_1 \end{array}$} (1)
			(1) edge node [above] {$\begin{array}{l|l} (\Sigma^*\$)^* & C_2 \end{array}$} (2)
		;
	\end{tikzpicture}
}

\newcommand{\drawSlaveWCA}[3]{
	\begin{tikzpicture}[>=stealth, node distance=2.5cm, thick]
		\renewcommand{\id}[1]{}
		
		% ~~~~~~~~~~~~~~~~~~~~~~~~~~~~~ STYLES ~~~~~~~~~~~~~~~~~~~~~~~~~~~~~ %
		
		\tikzstyle{every state} = [
			minimum size=.7cm
		]
		
		\tikzstyle{accepting}=[
			accepting by arrow,
			accepting text=,
			accepting where= right
		]
		
		\tikzstyle{initial}=[
			initial by arrow,
			initial text=$C_{#3}$:,
			initial where= left
		]
		
		% ~~~~~~~~~~~~~~~~~~~~~~~~~~~~~ NODES ~~~~~~~~~~~~~~~~~~~~~~~~~~~~~ %
		
		\node[state, initial, accepting] (0) {\id{0}};
		
		% ~~~~~~~~~~~~~~~~~~~~~~~~~~~~~~ ARCS ~~~~~~~~~~~~~~~~~~~~~~~~~~~~~~ %
		
		\path[->]
			(0) edge [loop above] node [above] {$\begin{array}{l|l} \Sigma^*\$ & \max \{ A_{#1}, A_{#2} \} \end{array}$} (0)
		;
	\end{tikzpicture}
}

In this section, we introduce a new weighted automata model, called
weighted chop automata (WCA), into which
we transform i-expressions. It is simple to see that the
proof of undecidability of i-expressions (\Theorem{thm:undec}) can
be done the same way using WCA. We introduce the class of synchronised
WCA, to which synchronised i-expressions can be compiled, and by which we recover
decidability,  thus proving \Theorem{thm:main1}. The intuitive behaviour of a WCA is as follows. An
unambiguous generalised automaton (whose transitions are not reading 
single letters but words in some regular language) ``chop'' the
input word into factors, on which expressions of the form
$\phi(C_1,\dots,C_n)$, where $C_i$ are smaller WCA, are applied to
obtain intermediate values, which are then summed to obtain the value
of the whole input word.

Formally, a \emph{generalised finite automaton} %\footnote{cf. \cite{sakarovitch}
%\textsection2.4.2 p.97}
is a tuple $A = (Q,I,F,\Delta)$ where $Q$ is a set of states, $I$ its initial states and $F$ its final states, and $\Delta$ maps any pair $(p,q)\in Q^2$ to a regular language $\Delta(p,q)\subseteq \Sigma^*$ (finitely represented by some NFA). A run of $A$ over a word $u=u_1 \dots u_n$ is a sequence $r = q_0u_1\dots q_{n-1}u_nq_{n}$ such that $u_i\in\Delta(q_{i-1},q_i)$ for all $1 \leq i \leq n$. It is accepting if $q_0\in I$ and $q_n\in F$. We say that $A$ is unambiguous if for all $u\in \Sigma^*$, there is at most one accepting run of $A$ on $u$ (and hence its decomposition $u_1\dots u_n$ is unique). This property can be decided in \textsc{PTime} (\Proposition{prop:ambiguity} in \Appendix{apx:ambiguity}).

\begin{definition} \label{def:chop}
    A $0$-weighted chop automaton is an unambiguous WA. Let $n>0$.  
	An \emph{$n$-weighted chop automaton} ($n$-WCA) is a tuple $C =
        (A, \lambda)$ where $A$ is an 
	unambiguous generalised finite automaton and $\lambda$ is a
        function mapping any pair $(p,q)\in Q^2$ to  some expression
        $E = \phi(C_1,\dots,C_m)$ where for all $i$, $C_i$ is an
        $n'$-WCA, for some $n'<n$, and $\phi$ is a functional
        Presburger formula of arity $m$. Moreover, it is required that
        at least one $C_i$ is an $(n-1)$-WCA. A WCA is an $n$-WCA for
        some $n$. 
\end{definition}

\paragraph{Semantics} A WCA $C$ defines a quantitative
language $\sem{C}$ of domain $\dom(C)$ inductively defined as
follows. If $C$ is a $0$-WCA, then its semantics is that of
unambiguous WA. Otherwise $C = (A,\lambda)$, and the set $\dom(C)$ is the set of words $u=u_1 \dots u_n$
on which there exists one accepting run $r = q_0u_1\dots
q_{n-1}u_nq_{n}$ of $A$ such that for all $1 \leq i \leq n$, if $\lambda(q_{i-1},q_i)$ is of the form
$\phi(C_1,\dots,C_m)$, then $u_i \in \bigcap_{j=1}^m \dom(C_j)$, and
in this case we let $v_i =
\sem{\phi}(\sem{C_1}(u),\dots,\sem{C_m}(u))$. The 
value of $r$ (which also defines the value of $u$) is then
$\sum_{i=1}^n v_i$.
We denote by $\text{dec}_C(u)$ the (unique) sequence
$(u_1,\lambda(q_0, q_1))\dots (u_n,\lambda(q_{n-1}, q_n))$.

\begin{example}\label{ex:wca}
Let $\Sigma = \{a,b,c,d\}$ and $\bullet,\$ \notin \Sigma$, the WCA depicted below realises the function mapping
any word of the form $u_1 \$ \dots u_n \$ \bullet v_1 \$ \dots v_m \$$,
where $u_i,v_i\in\{a,b,c,d\}^*$,
to $\sum_{i=1}^n \max(\#_a(u_i),\#_b(u_i)) + \sum_{i=1}^m
\max(\#_c(v_i),\#_d(v_i))$. The automata
$A_\sigma$ are unambiguous WA counting the number of occurences of
$\sigma$, and $C_i$ are shortcuts for $\phi_{id}(C_i)$ where
$\phi_{id}$ defines the identify function. 

{
	\centering
	\noindent
	\begin{minipage}[b]{.55\textwidth}
		\centering
		\scalebox{.75}{\drawMainWCA}
	\end{minipage}
	\noindent
	\begin{minipage}[b]{.2\textwidth}
		\centering
		\scalebox{.75}{\drawSlaveWCA{a}{b}{1}}
	\end{minipage}
	\noindent
	\begin{minipage}[b]{.2\textwidth}
		\centering
		\scalebox{.75}{\drawSlaveWCA{c}{d}{2}}
	\end{minipage}
}
\end{example}

\paragraph{Synchronised WCA} The notion of
synchronisation of WCA is inductively defined. Two
expressions $\phi_1(C_1,\dots,C_n)$ and $\phi_2(C'_1,\dots,C'_m)$ are
synchronised if $C_i$ is synchronised with $C'_j$ for all $i,j$. 
We say that two WCA $C_1,C_2$ are
synchronised, denoted by $C_1\sync C_2$, if they are either both
$0$-WCA, or $C_1 = (A_1,\lambda_1)$ and $C_2 = (A_2,\lambda_2)$, and 
the following holds: for all $u\in L(A_1)\cap L(A_2)$, if $\text{dec}_{C_1}(u) = (u_1,E_1),\dots,(u_n,E_n)$
and $\text{dec}_{C_2}(u) = (v_1,F_1),\dots,(v_{m},F_{m})$,
then $n=m$ and for all $1 \leq i \leq n$, we have $u_i=v_i$ and $E_i$
is synchronised with $F_i$. We write
$\text{Sync}(\{C_1,\dots,C_n\})$ if $C_i \sync C_j$ for all $i, j \in
\{1, \dots, n\}$. Now, a WCA $C$ is synchronised if it is an
unambiguous WA, or it is of the form $(A,\lambda)$, and any 
expression $\phi(C_1,\dots,C_n)$ in the range of $\lambda$ satisfies
$\text{Sync}(\{C_1,\dots,C_n\})$. 
 E.g., the WCA of \Example{ex:wca} is
synchronised, and it can be seen that if $C_1\sync C_2$,
then both $C_1$ and $C_2$ are $n$-WCA for the same $n$. 
\begin{proposition} \label{prop:synchronisation}
    Synchronisation is decidable in \textsc{PTime} for WCA.
	% We can decide whether a weighted chop automaton is
	% synchronised in polynomial time.
\end{proposition}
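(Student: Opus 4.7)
The plan is to proceed by induction on the maximal rank $n$ among $C_1, C_2$. For $n=0$, both are $0$-WCA and synchronised by definition, so nothing needs to be checked. Assume $n>0$ and write $C_i = (A_i, \lambda_i)$. By the definition of $\sync$, I must verify (a) that for every $u \in L(A_1)\cap L(A_2)$, the unique accepting runs of $A_1$ and $A_2$ produce the same factor sequence of $u$, and (b) that on every pair of corresponding transitions, the labels $\phi(D_1,\dots,D_p)$ and $\psi(D'_1,\dots,D'_q)$ are synchronised, which by definition reduces to checking $D_i \sync D'_j$ for all $i,j$.

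For (a), I expand each generalised automaton $A_i$ into a standard NFA $B_i$ by substituting every transition labelled by a language $\Delta_i(p,q)$ by the given NFA recognising it, and I distinguish the original states $Q_i$ of $A_i$ as \emph{cut markers}. Unambiguity of $A_i$ ensures that, on any $u \in L(A_i)$, the sequence of visited cut markers along the unique run is uniquely determined and coincides with the factor boundaries of $\dec{C_i}(u)$. I then build the synchronous product $B_1 \times B_2$ that reads input letters in lock-step. A ``bad'' product state is one in which $B_1$ is at a cut marker while $B_2$ is not, or vice versa. Decomposition agreement on all common words holds iff no accepting run in $B_1 \times B_2$ visits a bad state, which is a reachability question on a graph of polynomial size and therefore decidable in \textsc{PTime}.

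For (b), the pairs of sub-WCA that must be checked to settle synchronisation of labels form a set of size polynomial in $|C_1|+|C_2|$ and all involve WCA of strictly smaller rank than $C_1, C_2$. Using memoisation over the set of sub-WCA pairs encountered, each pair is examined once by a recursive call to the same algorithm. By the inductive hypothesis each such call takes polynomial time in its own arguments, so the aggregate cost remains polynomial in the size of the original input.

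The main obstacle I anticipate is the faithful treatment of cut markers in the expansion step: one must ensure that being at a state of $Q_i$ in $B_i$ genuinely corresponds to a factor boundary in the decomposition, not to a transient visit caused by $\epsilon$-moves or by a $Q_i$-state appearing in the middle of some transition NFA. This is handled by a prior normalisation of the component NFAs (elimination of $\epsilon$-transitions and splitting each $Q_i$-state into disjoint \emph{arrival} and \emph{departure} copies so that cut markers are visited exactly once per completed factor), after which the reachability argument on the product applies verbatim.
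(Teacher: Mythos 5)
Your overall recursive structure matches the paper's, but your decomposition-agreement test is genuinely different: the paper reduces it to \emph{functionality of a finite transducer} (a transducer that, on input $u$, outputs the two factorisations encoded as $a^{|u_1|}\#\cdots\# a^{|u_n|}$; the decompositions agree on all common words iff this transducer is functional, which is decidable in \textsc{PTime}), whereas you build the letter-synchronous product of the expanded NFAs and test reachability of a ``bad'' state in which exactly one component sits at a cut marker. Your reduction is sound because unambiguity of each generalised automaton forces \emph{every} accepting run of the expansion $B_i$ to visit cut markers exactly at the factor boundaries of the unique decomposition (the run of $B_i$ itself need not be unique, but the set of cut positions is run-independent), so a bad state is reachable along some accepting product run iff the decompositions disagree on some common word. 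This is arguably more elementary than the functionality reduction, at the price of the normalisation fuss you already identify.

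Two points need attention. First, your induction only covers ``both rank $0$'' and ``both rank $>0$''; by definition a $0$-WCA is never synchronised with a positive-rank WCA, so the algorithm must explicitly answer \emph{no} in the mixed case (which also arises in recursive calls on pairs of sub-WCA of unequal rank). Second, and more substantively, step (b) is underspecified: synchronisation requires $E\sync F$ only for pairs of labels that actually \emph{co-occur at the same position in the decomposition of some common word}. Checking all pairs of labels in the ranges of $\lambda_1,\lambda_2$ would be strictly stronger than the definition and would wrongly reject some synchronised WCA, so you must compute the set of co-occurring pairs before recursing. The paper does this explicitly, building for each candidate pair $(E,F)$ a product automaton with a marking bit and testing it for emptiness; in your setting the same information can be extracted from $B_1\times B_2$ by recording which pair of generalised transitions is being simulated between consecutive cut markers, but this step has to be stated. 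With those two repairs, the memoised recursion over the polynomially many pairs of sub-WCA and the resulting polynomial bound go through as you describe.
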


% Note that the relation $\sync$ is not necessarily reflexive, because
% in an expression $\phi(C_1,\dots,C_m)$, the definition of
% synchronisation does not require that all the $C_i$ are
% synchronised. However, we say that an uWCA $C$ is synchronised if
% $C\sync C$. This implies, as we said, that all expressions
% $\phi(C_1,\dots,C_m)$ occurring in $C$ satisfy $C_i\sync C_j$ for all
% $i,j$. 

We now investigate the closure properties of WCA. Given two
quantitative languages $f_1,f_2$, let us define their \emph{split sum}
$f_1 \splitsum f_2$ as the function mapping any word $u$ which can be uniquely decomposed
into $u_1,u_2$ such that $u_i\in \dom(f_i)$ for all $i$, to
$f_1(u_1)+f_2(u_2)$ \cite{DBLP:conf/csl/AlurFR14}. We also define the
conditional choice $f_1\ife f_2$ as the mapping of any word
$u\in\dom(f_1)$ to $f_1(u)$, and of any word $u\in \dom(f_2)\setminus
\dom(f_1)$ to $f_2(u)$ \cite{DBLP:conf/csl/AlurFR14}. These operators
may be thought of as (unambiguous) concatenation and disjunction in
rational expressions. Synchronised WCA are closed under these
operations, as well as Presburger combinators and (unambiguous)
iterated sum, in the following sense:
\begin{proposition}\label{prop:closureChop}
    Let $C_1,\dots,C_n$ be WCA such that
    $\text{Sync}\{C_1,\dots,C_n\}$ and $C,D$ two synchronised WCA. Let
    $\phi$ be a functional Presburger formula of arity $n$, and
    $L\subseteq \Sigma^*$ a regular language.  
    There exists synchronised WCA respectively denoted by 
    $\phi(C_1,\dots,C_n)$, $C^\s$, $C\splitsum D$, $C\ife D$ and $C|_L$ such that
    \begin{itemize}
      \item for all $u\in \bigcap_{i=1}^n \dom(C_i)$, $
        \sem{\phi(C_1,\dots,C_n)}(u) =
        \sem{\phi}(\sem{C_1}(u),\dots,\sem{C_n}(u))$ and $\dom(\phi(C_1,\dots,C_n)) = \bigcap_{i=1}^n \dom(C_i)$
        
      \item $\sem{C^\s} = \sem{C}^\s$, $\sem{C\splitsum D} = \sem{C}\splitsum
        \sem{D}$, $\sem{C\ife D} = \sem{C}\ife \sem{D}$ and $\sem{C|_L} = \sem{C}|_L$
    \end{itemize}
\end{proposition}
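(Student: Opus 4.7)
The plan is to handle each of the five constructions in turn: exhibit a concrete WCA of the required shape, check it realises the intended semantics, and verify that synchronisation is preserved. Throughout, the synchronisation check is easy: the range of the new $\lambda$ function consists either of expressions of arity one (trivially $\Sync$) or of the hypothesised expression $\phi(C_1,\dots,C_n)$ (which is $\Sync$ by assumption), and all sub-WCAs appearing in these expressions are synchronised by induction on the parameter $n$ of the WCA.

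For the Presburger combinator $\phi(C_1,\dots,C_n)$ I take a two-state generalised FA with a single transition from initial to final, labelled by the (regular) language $\bigcap_{i=1}^n \dom(C_i)$ and carrying the expression $\phi(C_1,\dots,C_n)$; if the highest parameter among the $C_i$ is $k$, the result is a $(k+1)$-WCA meeting the rank constraint of \Definition{def:chop}. For the conditional choice $C \ife D$, I take two disjoint branches from the initial state with labels $\dom(C)$ and $\dom(D)\setminus \dom(C)$ carrying expressions $\phi_{id}(C), \phi_{id}(D)$ respectively; disjointness of the labels gives unambiguity. For the domain restriction $C|_L$, I take the product of $C$'s generalised FA with a DFA $B$ for $L$, each product transition $((p,s),(q,s'))$ being labelled $\Delta(p,q) \cap L_{s,s'}$ (factors driving $B$ from $s$ to $s'$) and keeping the expression $\lambda(p,q)$; unambiguity of the product follows from unambiguity of $C$'s generalised FA and determinism of $B$.

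The iterated sum $C^\s$ and split sum $C \splitsum D$ require more care. The naive constructions (a single-state self-loop on $\dom(C)$; a three-state chain with labels $\dom(C), \dom(D)$) may be ambiguous, since $\dom(C)$ need not form a code. However, the paper's definition of these operators already restricts their domain to the set of words admitting a \emph{unique} decomposition, so I refine the underlying generalised FAs to these subsets, which are regular (WCA domains being regular by an easy induction on the parameter). A Sardinas--Patterson-style refinement of the state space then yields an unambiguous generalised FA whose single accepting run on each word in the domain reconstructs exactly the unique decomposition; the loop (resp.\ chain) transitions then carry $\phi_{id}(C)$ (resp.\ $\phi_{id}(C), \phi_{id}(D)$). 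This Sardinas--Patterson-style refinement is the main obstacle; the other four constructions are routine uses of product and branching techniques, and synchronisation is preserved in all five cases by the argument given above.
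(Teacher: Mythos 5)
Your overall architecture coincides with the paper's: the three routine constructions (Presburger combinator as a single labelled transition, conditional choice as two branches with labels $\dom(C)$ and $\dom(D)\setminus\dom(C)$, domain restriction as a product with a DFA) are exactly what the paper does, and your observations about rank and about synchronisation being inherited from arity-one expressions and the hypothesised $\Sync\{C_1,\dots,C_n\}$ are correct. The one place where your write-up is genuinely thinner than the paper is the split sum. You cannot realise $C\splitsum D$ by a \emph{single} refined chain: take $\dom(C)=\dom(D)=\{a,aa\}$, so that $\dom(C)\splitsum\dom(D)=\{aa,aaaa\}$; no choice of regular $N\subseteq\dom(C)$, $M\subseteq\dom(D)$ makes $NM$ cover both words while keeping every word of $NM$ uniquely decomposable. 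The paper resolves this with a dedicated lemma (its Lemma on unambiguous concatenation): $\dom(C)\splitsum\dom(D)$ is always a \emph{finite union} $\bigcup_{i=1}^k N_iM_i$ of unambiguous concatenations of regular languages, proved via a product of the complemented ``two distinct factorisations'' automaton with the two domain automata and a left/right-language decomposition at the switching states. One then builds one three-state chain per pair $(N_i,M_i)$ and combines the $k$ chains with the already-established conditional choice $\ife$ (not a plain union, which could still yield two accepting runs on a word lying in $N_iM_i\cap N_jM_j$ with the same split point). Your ``Sardinas--Patterson-style refinement'' gestures at the right phenomenon but does not by itself produce this finite decomposition, nor the disjointification step; for the iterated sum, by contrast, your refinement idea does match the paper's construction of an unambiguous NFA for $\dom(C)^{\#}$ with marked factor boundaries, so that case is fine.
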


The key lemma towards decidability of synchronised WCA is the following:

\begin{lemma} \label{lem:semilinear}
    Let $C$ be a synchronised weighted chop automaton.
    Then $\{\sem{C}(u)\mid u\in\dom(C)\}$ is semi-linear and
    effectively computable.
\end{lemma}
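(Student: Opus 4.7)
The plan is to prove by strong induction on the rank $n$ of WCAs the following slight strengthening of the lemma: for any synchronised family $\{C_1,\dots,C_k\}$ of WCAs (which, as noted after \Proposition{prop:synchronisation}, must all share the same rank) and any regular language $L \subseteq \Sigma^*$, the joint range $\{(\sem{C_1}(u),\dots,\sem{C_k}(u)) : u \in L \cap \bigcap_i \dom(C_i)\} \subseteq \Z^k$ is effectively semi-linear. The lemma corresponds to the case $k=1$, $L=\Sigma^*$. The strengthening to joint ranges is essential, since the inductive step combines outputs of several sub-WCAs under a Presburger combinator, and Presburger-definable functions preserve semi-linearity only when applied to joint, not coordinatewise, ranges.

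\textbf{Base case} ($n=0$). Each $C_i$ is an unambiguous WA. I would form the standard product $B = C_1 \times \dots \times C_k$ further intersected with a DFA for $L$; this yields an unambiguous finite automaton weighted over $\Z^k$ whose transition weights are the tuples $(\lambda_1(t_1),\dots,\lambda_k(t_k))$. By Parikh's theorem, the set of transition-Parikh vectors of accepting runs of $B$ is effectively semi-linear in $\N^{|\Delta(B)|}$, and the weight of a run is a linear function of its Parikh vector, so the joint range is a linear image of a semi-linear set, hence effectively semi-linear.

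\textbf{Inductive step} ($n \geq 1$). Write $C_i = (A_i,\lambda_i)$. By the synchronisation definition, any $u \in \bigcap_i \dom(C_i)$ is chopped into the same factors by all the $C_i$, so the product generalised automaton $A$ with state set $\prod_i Q_i$ and edge label $\Delta((\vec p),(\vec q)) = \bigcap_i \Delta_i(p_i,q_i)$ (regular, finitely representable by NFA products) is unambiguous and captures the joint decomposition. For each edge $e = (\vec p,\vec q)$ of $A$, each $C_i$ carries an expression $\lambda_i(p_i,q_i) = \phi^{(e)}_i(C^{(e)}_{i,1},\dots)$; by the synchronisation clause on expressions, the whole collection $\{C^{(e)}_{i,\ell}\}_{i,\ell}$ is pairwise synchronised and every such sub-WCA has rank $<n$. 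Applying the induction hypothesis, together with the regular-restriction closure of \Proposition{prop:closureChop} to restrict factors to the edge label $\bigcap_i \Delta_i(p_i,q_i)$, yields an effectively semi-linear joint range for the sub-WCAs. Componentwise application of the Presburger formulas $\phi^{(e)}_i$ then produces an effectively semi-linear set $T^{(e)} \subseteq \Z^k$ of possible contribution tuples at edge $e$.

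It remains to aggregate over accepting runs of $A$. Viewing $A$ as a finite automaton over its edge alphabet, Parikh's theorem gives an effectively semi-linear set $P \subseteq \N^{\text{edges}}$ of edge-count vectors of accepting runs (intersected, if needed, with a Presburger constraint enforcing membership in $L$). The joint range is then
\[
  \bigcup_{(k_e)_e \in P}\ \sum_e \underbrace{T^{(e)} + \dots + T^{(e)}}_{k_e \text{ summands}}.
\]
Writing each $T^{(e)}$ as a finite union of linear sets with bases $b^{(e)}_\ell$ and periods $p^{(e)}_{\ell,j}$, a generic element is $\sum_{e,\ell} n^{(e)}_\ell\, b^{(e)}_\ell + \sum_{e,\ell,j} N^{(e)}_{\ell,j}\, p^{(e)}_{\ell,j}$ subject to the Presburger constraints $\sum_\ell n^{(e)}_\ell = k_e$, $(k_e) \in P$, and $N^{(e)}_{\ell,j} > 0 \Rightarrow n^{(e)}_\ell \geq 1$. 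Projecting away the auxiliary variables produces a Presburger, hence effectively semi-linear, description of the range. The step I expect to be the main obstacle is precisely this aggregation: despite unboundedly many sub-WCA invocations per input word, the combinatorics collapse to a single Presburger formula because addition is commutative and each $T^{(e)}$ has only finitely many linear components. The other delicate point is ensuring synchronisation propagates to the sub-WCAs so that the induction hypothesis applies, but this follows directly from unfolding the inductive definition of synchronisation on expressions and WCAs.
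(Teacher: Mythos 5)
Your proof is correct and follows the same overall skeleton as the paper's: strengthen the statement to \emph{joint} ranges of a synchronised tuple (the paper packages this as ``generalised WCA'' valued in $\mathbb{Z}^k$ rather than as tuples, which is only a presentational difference), use synchronisation to justify an unambiguous product of the generalised automata, invoke the induction hypothesis to get a semi-linear contribution set per edge (the paper's $S_{p,q}$, your $T^{(e)}$), and then aggregate over the regular set of accepting run skeletons. Where you genuinely diverge is in that last aggregation step: the paper defines a morphism $\mu$ from the free monoid over $Q\times Q$ into the monoid of semi-linear subsets of $\mathbb{Z}^n$ under Minkowski sum, observes that $\mu(N)$ is semi-linear for regular $N$ by closure of semi-linear sets under sum, union and Kleene star (citing Eilenberg--Sch\"utzenberger), and shows $\mathrm{Range}(C)=\mu(N)$; you instead take the Parikh image of the run automaton and write an explicit existential Presburger formula for the $k_e$-fold Minkowski sums via base/period bookkeeping with the side condition $N^{(e)}_{\ell,j}>0\Rightarrow n^{(e)}_\ell\geq 1$. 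Your encoding is correct and more self-contained, at the cost of re-proving by hand what the monoid-morphism argument gives for free; the paper's route is more modular and reusable. One minor slip: your parenthetical suggestion to enforce the top-level restriction $u\in L$ ``with a Presburger constraint'' on edge counts does not work, since membership in a regular language is not determined by the Parikh image of a run; you must instead intersect the generalised automaton with a DFA for $L$ (exactly as your base case does, and as \Proposition{prop:closureChop}'s closure under $C|_L$ provides). This is immaterial for the lemma as stated ($L=\Sigma^*$) and trivially fixable in the recursive calls, so it does not affect correctness.
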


\begin{proof}[Sketch] The proof goes by induction on $C$. If $C$ is an
    unambiguous WA, then semi-linearity is known (for instance by
    using Parikh theorem or reversal-bounded counter machine as in the
    proof of \Theorem{thm:mono}). If $C = (A,\lambda)$ and $A$ has
    set of states $Q$, we first
    assume that for all states $p,q\in Q$, $\lambda(p,q)$ (which is an
    expression of the form $\phi(C_1,\dots,C_n)$), has semi-linear
    range $S_{p,q}$.  Consider the morphism $\mu$ from the free monoid
    $(Q\times Q)^*$ to the monoid of semi-linear sets of $\mathbb{Z}$ (with neutral
    element $\{0\}$ and addition), defined by $\mu((p,q)) =
    S_{p,q}$. Clearly, for any regular language $L\subseteq (Q\times
    Q)^*$, $\mu(L)$ is semi-linear, because semi-linear sets are
    closed under addition, finite union, and Kleene star (see \cite{Eilenberg-Schuetzenberger/69}
    for instance). Then, we can
    show that $\text{range}(C) = \mu(L)$ for $L$ the set of words over
    $Q\times Q$ of the form $(q_0,q_1)(q_1,q_2)\dots (q_k, q_{k+1})$
    such that $q_0$ is initial, $q_{k+1}$ final, and for all $i$,
    $\Delta(q_i,q_{i+1}) \neq \varnothing$. $L$ is clearly regular, as
    the $\Delta(q_i,q_{i+1})$ are.

    To show that the expressions $\phi(C_1,\dots,C_n)$ have
    semi-linear ranges, the key idea is that thanks to
    synchronisation, we can safely construct a kind of product between the
    WCA $C_1,\dots,C_n$. This product is
    not a proper WCA but a ``generalised'' WCA with values in
    $\mathbb{Z}^n$. By induction, we can show that this product has
    semi-linear range (in fact, our induction is on
    generalised WCA rather than proper WCA), whose values can be
    combined into a semilinear set thanks to the
    Presburger combinator $\phi$.\qed
\end{proof}

The following theorem is a direct consequence of
\Lemma{lem:semilinear}, \Remark{remark:zeroEmptiness} and \Proposition{prop:closureChop}.

\begin{theorem}\label{thm:decchop}
    The following problems are decidable: emptiness and universality
    of synchronised WCA, comparisons of WCA $C_1,C_2$ such that $\text{Sync}\{C_1,C_2\}$.
    % The emptiness and universality problems are decidable for
    % synchronised WCA. The comparisons problems for WCA $C_1,C_2$ s.t.
    % $\text{Sync}\{C_1,C_2\}$ is decidable.
\end{theorem}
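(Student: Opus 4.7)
The plan is to combine the three cited results as follows. The core is to show that the $0$-emptiness problem for synchronised WCA is decidable; the remaining problems will then follow by the reduction schema of \Remark{remark:zeroEmptiness}, provided we verify that synchronised WCA enjoy the closure properties that this remark requires.

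For $0$-emptiness, given a synchronised WCA $C$, we invoke \Lemma{lem:semilinear} to compute effectively a semi-linear representation of the range $R = \{\sem{C}(u)\mid u\in\dom(C)\}\subseteq \Z$. Both $R\cap \N_{>0}$ and $R\cap \N_{\geq 0}$ are semi-linear (semi-linear sets are closed under intersection with Presburger-definable sets, or equivalently, existential Presburger formulas are closed under conjunction with $x>0$ or $x\geq 0$), and non-emptiness of a semi-linear set given by its base and period vectors is trivially decidable. Hence the $v$-emptiness problem reduces to this, either by shifting the semi-linear set by $-v$, or by combining with the Presburger test $x\succsim v$.

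For $v$-universality and comparisons, we apply \Remark{remark:zeroEmptiness}. The required closure properties are provided by \Proposition{prop:closureChop}: closure under Presburger combinators yields both the negation $-\sem{C}$ and the pointwise difference $\sem{C_2}-\sem{C_1}$ (as synchronised WCA, since $\text{Sync}\{C_1,C_2\}$ is assumed in the comparison case), and closure under the restriction operator $|_L$ provides domain restriction to a regular language. Domain inclusion between WCA is decidable since the domain of any WCA is effectively regular: by induction on its structure, a $0$-WCA has regular domain (underlying NFA of the unambiguous WA), and an $n$-WCA with $n>0$ has a domain obtained from the generalised automaton $A$ by substituting, for each transition $(p,q)$, the intersection of $\Delta(p,q)$ with the regular domains of the subexpressions in $\lambda(p,q)$. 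Plugging these constructions into the schema of \Remark{remark:zeroEmptiness} yields decidability of universality for synchronised WCA and of comparisons for WCA $C_1,C_2$ with $\text{Sync}\{C_1,C_2\}$.

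The only real verification, and the step I expect to be the most delicate, is to ensure that every WCA built during the reduction remains synchronised so that \Lemma{lem:semilinear} applies. This is guaranteed by \Proposition{prop:closureChop}: for universality, the unary Presburger combinator $\phi(x)=v-x$ applied to the synchronised $C$ yields a synchronised WCA; for comparisons, the assumption $\text{Sync}\{C_1,C_2\}$ together with the Presburger closure clause of \Proposition{prop:closureChop} implies that $\phi(C_1,C_2)$ with $\phi(x_1,x_2)=x_2-x_1$ is synchronised, and further restriction to the regular domain $\dom(C_2)$ via $|_L$ preserves synchronisation. Thus the reduction produces a synchronised WCA whose $0$-emptiness is decidable by the argument above, concluding the proof. \qed
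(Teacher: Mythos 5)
Your proposal is correct and follows exactly the route the paper intends: the paper states \Theorem{thm:decchop} as a direct consequence of \Lemma{lem:semilinear}, \Remark{remark:zeroEmptiness} and \Proposition{prop:closureChop}, and your write-up simply makes explicit the details (semi-linear range test for $0$-emptiness, closure under Presburger combinators and regular domain restriction preserving synchronisation, and effective regularity of domains for the inclusion check). No gaps.
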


We conclude this section by showing that any synchronised  i-expression can be converted into a
synchronised WCA. This conversion is effective, this
entails by \Theorem{thm:decchop} the decidability of synchronised
i-expressions (\Theorem{thm:main1}). 

\begin{theorem}\label{thm:iter2hybrid}
    Any synchronised i-expression $E$ is (effectively) equivalent
    to some synchronised weighted chop automaton $C_E$, i.e. $\sem{E}
    = \sem{C_E}$. 
\end{theorem}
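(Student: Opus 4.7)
The plan is to proceed by structural induction on the synchronised i-expression $E$, using at each step the closure properties provided by \Proposition{prop:closureChop} to lift the construction from sub-expressions to the whole expression. The base case is $E = A$ for an unambiguous WA: set $C_E \new A$ viewed as a $0$-WCA; semantics and synchronisation hold trivially.

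For the inductive step, if $E = \phi(E_1,E_2)$, the induction hypothesis yields synchronised WCA $C_1,C_2$ with $\sem{C_i} = \sem{E_i}$. Since $E$ is synchronised, so are $E_1$ and $E_2$ (as sub-expressions), and $\Sync(\{E_1,E_2\})$ holds, which together with the strengthened induction hypothesis sketched below yields $C_1\sync C_2$. \Proposition{prop:closureChop} then produces a synchronised WCA $\phi(C_1,C_2)$ computing $\sem{\phi}(\sem{E_1},\sem{E_2}) = \sem{E}$. If $E = E_0^\s$, the induction hypothesis supplies a synchronised WCA $C_0$ with $\sem{C_0} = \sem{E_0}$, and \Proposition{prop:closureChop} directly produces a synchronised WCA $C_0^\s$ with $\sem{C_0^\s} = \sem{C_0}^\s = \sem{E_0}^\s = \sem{E}$.

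The main obstacle is to formally transport the i-expression notion of synchronisation---a syntactic condition requiring equality of the domains of $F^\s$ and $G^\s$ sub-terms sitting at the same star depth---into the more operational WCA notion, where two WCA must induce identical chop decompositions on every common input. To bridge the two definitions, I would strengthen the induction hypothesis to: whenever a set $\{E_1,\dots,E_n\}$ of i-expressions is jointly synchronised (i.e. $\Sync(\{E_1,\dots,E_n\})$), the WCA $\{C_{E_1},\dots,C_{E_n}\}$ produced by the translation are jointly synchronised as WCA. The critical case is $E = F^\s$: the WCA $C_F^\s$ provided by \Proposition{prop:closureChop} decomposes every accepted input exactly into its (unique) factorisation over $\dom(F) = \dom(C_F)$, so two such WCA $C_{F^\s}$ and $C_{G^\s}$ produce identical decompositions precisely when $\dom(F) = \dom(G)$, which is what i-expression synchronisation guarantees at that star depth. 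The combinator case reduces to the inductive hypothesis applied to the immediate sub-expressions, noting that domains of sub-WCA are invariant under Presburger composition. Carrying this invariant through the induction yields simultaneously the semantic equality $\sem{E} = \sem{C_E}$ and the synchronisation of $C_E$, completing the proof.
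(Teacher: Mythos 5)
There is a genuine gap, and it is precisely the difficulty that the paper identifies as the heart of the proof. Your induction claims that the combinator case ``reduces to the inductive hypothesis applied to the immediate sub-expressions,'' but consider $E = \phi(A, F^\s)$ with $A$ an unambiguous WA. This $E$ is a synchronised i-expression (the synchronisation condition is vacuous here), yet your translation produces $C_A = A$, a $0$-WCA, and $C_{F^\s} = C_F^\s$, which is at least a $1$-WCA. By the definition of WCA synchronisation, two WCA are synchronised only if they are both $0$-WCA or both of the form $(A_i,\lambda_i)$ with matching decompositions; as the paper notes explicitly, $C_1 \sync C_2$ forces $C_1$ and $C_2$ to be $n$-WCA for the same $n$. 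A $0$-WCA is therefore never synchronised with a $1$-WCA, so $\text{Sync}\{C_A, C_{F^\s}\}$ fails, \Proposition{prop:closureChop} cannot be invoked, and your strengthened invariant (synchronised i-expressions translate to synchronised WCA) is false for the naive componentwise translation. The mismatch is semantic, not bookkeeping: $A$ computes on the whole input word while $F$ computes on factors of it, so the two WCA induce incompatible chop decompositions.

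The paper's proof (Lemma~\ref{lem:WA2WDA}) resolves this with an extra construction that your proposal has no counterpart for: in the mixed case it ``chops'' the atom $A$ into partial automata $A_{p_1,p_2,q_1,q_2}$ (runs of $A$ from $p_1$ to $p_2$ restricted to factor languages $L_{q_1,q_2}\subseteq\dom(F)$ induced by a DFA for $\dom(F)^\#$), recursively translates these pieces jointly with $F$, and reassembles them into a single $1$-WCA $C'$ whose decomposition agrees with that of $C_F^\s$. This also explains why the paper's induction is on tuples of expressions measured by $||\mathbb{E}||$ rather than plain structural induction on one expression, and why the resulting WCA are all restricted to the common domain $\bigcap_j\dom(E_j)$ and the untouched components get wrapped in $\phi_{id}(\cdot)$ to keep all members of the tuple at the same WCA depth. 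Your base case and your pure-star case (all components of the form $F_i^\s$) match the paper's Cases and are fine; it is the heterogeneous case that needs the additional decomposition argument, without which the proof does not go through.
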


\begin{proof}[Sketch]
    Let us illustrate the main idea of this proof on an
    example. Suppose that $E = \phi(A,B^\s)$ for some unambiguous WA
    $A,B$, and Presburger formula $\phi$. The
    difficulty with this kind of expression comes from the fact that
    $A$ is applied on the whole input word, while $B$ is applied
    iteratively on factors of it. Clearly, $A$ is also a $0$-WCA, and
    $B$ could be inductively converted into some WCA $C$, in turn used
    to construct a WCA $C^\s$ (as done in \Proposition{prop:closureChop}). However,
    $A$ and $C^\s$ are not synchronised in general: by definition of
    synchronisation for WCA, $n$-WCA are synchronised with $n$-WCA
    only. This latter property is crucial to make a product
    construction of synchronised WCA and to prove semi-linearity of
    their ranges (\Lemma{lem:semilinear}).
    
    Hence, 
    the main idea to prove this result is to ``chop'' $A$ into smaller WA that are
    synchronised with $\dom(B)$, and to express $A$ as a combination
    of these smaller automata. More precisely, for all states $p,q$
    of $A$ we can define $A_{p,q}$ to be the WA $A$ with initial state
    $p$, final state $q$, whose domain is restricted to
    $\dom(B)$. Then, all the smaller automata $A_{p,q}$ are combined
    into a single WCA which simulates successive applications of the
    automata $A_{p,q}$,
    by taking care of the fact that the words it accepts must be uniquely
    decomposable into factors of $\dom(B)$. This resulting WCA, say
    $C'$, is necessarily synchronised with $C^\s$, and we can return the single
    synchronised WCA $\phi(C',C^\s)$, as defined in
    \Proposition{prop:closureChop}, which is equivalent to
    the i-expression $\phi(A,B^\s)$. The general case is just a
    technical generalisation of this main idea. \qed
\end{proof}

% \begin{proposition}
% There is a synchronised WCA $C$ defining a quantitative language which
% is not definable by any synchronised $i$-expression. 
% \end{proposition}

	\section{Discussion}\label{sec:conclu}

% \paragraph{Summary} In this paper, we have introduced several
% formalisms for expressing quantitative languages, whose atoms are
% unambiguous (max,+)-automata, with decidable
% emptiness, universality and comparison problems. We have first considered
% monolithic expressions, which apply on the whole input word, 
% and gave tight complexity results. We have then investigated the
% extension with unambiguous sum iteration of expressions. This
% extension turned out to be undecidable, but we introduced the notion of
% synchronisation to recover decidability. Synchronised expressions with
% iterated sum are still very expressive, as they are closed under
% important operations such as Presburger combinators, unambiguous
% choice and split sum, and can express
% quantitative languages which cannot be expressed by finitely ambiguous
% (max,+)-automata, the largest class of WA for which decidability is known
% (w.r.t. the problems mentioned before). 

% \paragraph{Discussion} 
First, iterating $\max$ instead of sum also yields undecidability for
i-expressions (\Remark{rmk:undec} in Appendix). 
Second, the decidability of synchronised i-expressions goes by the model
weighted chop automata, which slice the input word into factors on
which subautomata are applied. Any synchronised i-expression can be
converted into a synchronised chop automaton
(\Theorem{thm:iter2hybrid}). We conjecture that the converse of \Theorem{thm:iter2hybrid} is
not true, i.e. synchronised WCA are strictly more expressive than synchronised
i-expressions. In particular, we conjecture that synchronised
i-expressions are not closed under split sum, unlike synchronised WCA
(\Proposition{prop:closureChop}). The quantitative language
of \Example{ex:wca} does not seem to be
definable by any synchronised i-expression.

    It turns out that extending i-expressions with split
    sum $\splitsum$ and conditional choice $\ife$, with a suitable notion of
    synchronisation, gives a formalism equivalent to synchronised
    WCA. Due to lack of space, and since the notion of
    synchronisation for such extended expressions is quite technical
    (and a bit ad-hoc), we
    decided not to include it.

    An expression formalism with unambiguous iterated sum, conditional choice and
    split sum, whose atoms are constant quantitative languages (any
    word from a regular language is mapped to a same constant value),
    was already introduced by Alur et. al.~\cite{DBLP:conf/csl/AlurFR14}. It is
    shown that this formalism is equivalent to unambiguous
	WA. Our goal was to go much beyond this
    expressivity, by having a formalism closed under Presburger
    combinators. Adding such combinators to the expressions of~\cite{DBLP:conf/csl/AlurFR14} would immediately yield an
    undecidable formalism (as a consequence of
    \Theorem{thm:undec}). This extension would actually correspond
    exactly to the extension we discussed in the previous paragraph,
    and one could come up with a notion of synchronisation by which to
    recover decidability. We did not do it in this paper, for the
    reason explained before, but it would be interesting to have an
    elegant notion of synchronisation for the extension of~\cite{DBLP:conf/csl/AlurFR14} with Presburger combinators. More
    generally, our notion of synchronisation is semantical (but
    decidable). This raises the question of whether another weighted
    expression formalism with a purely syntactic notion of
    synchronisation could be defined.

Finally, Chatterjee et. al. have introduced a recursive model of
WA~\cite{DBLP:conf/lics/ChatterjeeHO15}. They are
incomparable to weighted chop automata: they can
define QL whose ranges are not semilinear, but the
recursion depth is only 1 (a master WA calls slave WA).

\paragraph{Acknowledgements} We are very grateful to Isma\"el Jecker and
Nathan Lhote for fruitful discussions on this work, and for their help in establishing the undecidability result.

	\bibliographystyle{plain}

	\appendix
	
\section{Additional Notations}

For $w\in\Sigma^*$, $|w|$ denotes its length, $pos(w)=\{1,\dots,|w|\}$ its set of positions (in particular $pos(\epsilon)=\varnothing$), and for $i\in pos(w)$, $w[i]$ is the $i$th symbol of $w$.
A language $L$ is a subset of $\Sigma^*$.

\paragraph{Sizes of objects} We define the size $|\phi|$ of a
Presburger formula $\phi$ as the number of nodes in its syntactic
tree.

We define the representation size of a WA $M = (A,\lambda)$ with $A =
(Q,I,F,\Delta)$ as $|M| = |Q| + |\Delta|.\text{log}(\ell)$ where $\ell$ is the
maximal absolute weight of $A$.

We define the representation size $|E|$ of a m-expression $E$
inductively, if $E \equiv A$ then $|E| = |A|$ otherwise if $E \equiv
\phi(E_1,\dots,E_n)$ then $|E| = |\phi| + \sum_{i=1}^{n} |E_i|$.

\section{Proof of \Section{sec:mono}} \label{apx:mono}
	\subsection{Proof of \Lemma{lem:s-exprexpr}}

\begin{proof}
To prove that s-expressions define Lipschitz continuous functions, we need to show that for all s-expression $E$, there exists $K \in \N$ such that for all words $u,v \in \Sigma^*$:
  \begin{equation}
  	\label{eq:Lipschitz}
  	 | E(u) - E(v) | \leq K \cdot d(u,v)
  \end{equation}
Remember that $d(u,v)=|u|+|v|-2 |\sqcap(u,v)|$.

We reason by induction on the structure of the s-expressions. First, let us consider the base case where $E=D$. As $D$ is deterministic, the partial sum on $u=w \dot u'$ and $v=w \dot v'$ on their common prefix $w=\sqcap(u,v)$ is equal in the two cases to some value $s_w$ then on the two different suffixes $u'$ and $v'$, their sum may differ but at most by the following amount: $|u'| \times M + |v'| \times M$. Where $M$ is the maximum of the set of absolute value of weights appearing in the automaton $D$. It is clear that the inequality~\ref{eq:Lipschitz} is true when we take $K$=M.

Second, we consider the operation $\min$ for the inductive case, i.e. $E=\min(E_1,E_2)$. The other operators are treated similarly. By induction hypothesis, $E_1$ and $E_2$ defines Lipschitz continuous functions, and we note $K_1$ and $K_2$ their respective Lipschitz constants. We claim that $K=\max(K_1,K_2)$ is an adequate constant to show the Lipschitz continuity of $E$, i.e.:  for all words $u,v \in \Sigma^*$
   $$| \min(E_1(u) ,E_2(v)) | \leq K \cdot d(u,v)$$
Let us assume that $\min(E_1,E_2)(u)=E_1(u)$ and $\min(E_1,E_2)(v)=E_2(v)$, and that $E_1(u) \leq E_2(v)$. All the other cases are treated similarly:
  $$ \begin{array}{ll}
  	~ & | E_1(u) - E_2(v) | \\
	= & E_1(u) - E_2(v) \\
	\leq & E_1(u) - E_1(v) \\
	\leq & K_1 \cdot d(u,v) \\
	\leq & \max(K_1,K_2) \cdot d(u,v) \\
	= & K \cdot d(u,v)
 	\end{array}
	$$
	\qed
\end{proof}

\subsection{Proof of \Proposition{prop:s-exprbis}}

\begin{proof}
    We first define an unambiguous WA $A$ that realises a function, called ``last block'',  which is non Lipschitz continuous. This establishes that s-expressions are not as expressive as finite valued WA (which extends unambiguous WA). The function ``last block'',
    which associates to any word of the form $a^{n_k} b a^{n_{k-1}} b
    \dots b a^{n_0}$ where $n_0>0$, the value $n_0$, i.e. the length of the last
    block of $a$ (which necessarily exists since $n_0>0$). When
    reading the first $a$ or the first $a$ after a $b$, $A$ uses its
    non-determinism to guess whether this $a$ belongs to the last
    block or not.

    To prove the second statement, it was shown in
    \cite{DBLP:journals/tcs/KlimannLMP04} (Section~3.6) that the
    function $f:u\mapsto
    \min(\#_a(u),\#_b(u))$ is not definable by any
    WA. Clearly, $u\mapsto \#_\sigma(u)$ for
    $\sigma\in\{a,b\}$ is definable
    by a deterministic WA $A_\sigma$, hence $f$ is definable by the
    s-expression $\min(A_a,A_b)$.

    Another example is the following. Given two
    multi-sequential\footnote{Multi-sequential WA are
      finite unions of sequential, i.e. (input) deterministic, WA.}
    WA
    $B_1,B_2$ with domain $\Sigma^*$, the function $g : u\mapsto
    |B_1(u)-B_2(u)|$ is not definable by a WA,
    while it is definable by an s-expression. Multi-sequential
    automata, as they are unions of (input) deterministic WA, are
    easily defined by s-expression (by using the closure under
    $\max$). Let $S_1,S_2$ be s-expressions defining $B_1,B_2$
    respectively, then $g$ is defined by the s-expression 
    $\max(S_1-S_2,S_2-S_1)$.

    Suppose that $g$ is definable by some WA.
    Then by taking $B_1$ such that $\sem{B_1}(u) = 0$ for all
    $u$, and $B_2$ such that $\sem{B_2}(u) = \max(-\#_a(u),
    -\#_b(u))$, then $|B_1 - B_2| = f$. Clearly, $B_1$ and $B_2$ can be taken to be
    multi-sequential, and we get a contradiction because $f$ is not
    definable by any WA. \qed
\end{proof}

\subsubsection{Proof of \Lemma{lem:m-exprexpr}}

\begin{proof}
We show that m-expressions can express
any quantitative language definable by a $k$-valued WA. 
It is known that any $k$-valued WA $A$ can be decomposed into a
disjoint union of $k$ unambiguous WA $A_i$ \cite{DBLP:conf/fsttcs/FiliotGR14}. It is tempting to think
that $A$ is equivalent to the m-expression
$\max(A_1,\dots,A_k)$. However, this latter expression is
defined only on $\bigcap_i \dom(A_i)$, which may be strictly included
in $\dom(A)$. Hence, we first complete any automaton $A_i$ into $B_i$,
where $\dom(B_i) = \dom(A)$, as follows: if $\alpha$ is the smallest
value occurring on the transitions of the automata $A_i$, then, $B_i$
is the disjoint union of $A_i$ and some deterministic WA $A_i^c$ such that $\dom(A_i^c) =
\dom(A)\setminus \dom(A_i)$ and $\sem{A_i^c}(w) = \alpha |w|$. $A_i^c$
can be easily constructed from any DFA recognising $\dom(A)\setminus
\dom(A_i)$ and weight function associating $\alpha$ to any
transitions. Then, $A$ is equivalent to the m-expression
$\max(B_1,\dots,B_k)$.

For the second statement, it is already the case for s-expressions
(\Proposition{prop:s-exprbis}).\qed
\end{proof}

\subsection{Decision problems are \textsc{PSpace-c} for m-expressions}

\begin{proposition}[Normal form] \label{prop:normalForm}
	From any m-expression $E$ with $A_1, \dots, A_n$ its unambiguous weighted automata,
	one can construct in linear-time an equivalent m-expression $\phi(A_1,\dots,A_n)$,
	for some functional Presburger formula $\phi$.
\end{proposition}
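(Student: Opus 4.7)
The plan is a straightforward induction on the structure of the m-expression $E$, where the inductive step consists of flattening nested Presburger combinators into a single one by means of existential quantification over intermediate values.

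\textbf{Base case.} If $E \equiv A$ for some unambiguous WA $A$, let $\phi_{id}(x,y) \equiv (x = y)$, which is a functional Presburger formula of arity $1$ defining the identity. Then $E$ is equivalent to $\phi_{id}(A)$.

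\textbf{Inductive case.} Suppose $E \equiv \phi(E_1,\dots,E_k)$ where $\phi$ is a functional Presburger formula of arity $k$. By induction hypothesis, each $E_i$ is equivalent to a normal form m-expression $\phi_i(A_1^i,\dots,A_{n_i}^i)$ where $\phi_i$ is functional of arity $n_i$. Let $n = n_1 + \dots + n_k$ and rename the atoms as $A_1,\dots,A_n$. I define the combined Presburger formula
\[
\phi'(x_1^1,\dots,x_{n_1}^1,\dots,x_1^k,\dots,x_{n_k}^k,\, y) \;\equiv\; \exists y_1 \dots \exists y_k \dot \Bigl(\bigwedge_{i=1}^k \phi_i(x_1^i,\dots,x_{n_i}^i,y_i)\Bigr) \wedge \phi(y_1,\dots,y_k,y).
\]
The formula $\phi'$ is functional because each $\phi_i$ uniquely determines $y_i$ from the $x_j^i$'s, and $\phi$ then uniquely determines $y$ from the $y_i$'s. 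A direct unfolding of the semantics of m-expressions (and of the intersection of domains) shows that $\phi'(A_1,\dots,A_n)$ is equivalent to $E$, with $\dom(\phi'(A_1,\dots,A_n)) = \bigcap_{j=1}^n \dom(A_j) = \dom(E)$.

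\textbf{Complexity.} The construction is performed in one bottom-up traversal of the syntax tree of $E$: at each internal node, we reuse the formulas $\phi_i$ produced by the recursive calls and glue them together with $k$ fresh existentially quantified variables, a conjunction with the local $\phi$, and the top-level $\phi$ itself. Hence the size of $\phi'$ is bounded by $\sum_i |\phi_i| + |\phi| + O(k)$, which sums telescopically to a size linear in $|E|$, and the same holds for the time needed to produce it.

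The only subtle point — hence the main ``obstacle'' — is to verify that functionality is preserved through the existential closures, but this is immediate from the fact that each existentially quantified variable is pinned down by a functional subformula of the conjunction. \qed
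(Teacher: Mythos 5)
Your proof is correct and follows essentially the same route as the paper's: the identity formula $\phi_{id}(x,y)\equiv(x=y)$ in the base case, and flattening via existentially quantified intermediate values $y_i$ conjoined with the subformulas $\phi_i$ and the top-level combinator in the inductive step. Your additional remarks on preservation of functionality and on the linear size bound only make explicit what the paper leaves implicit.
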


\begin{proof}
	We construct $E'$ the normal form of $E$ such that $E \equiv E'$ and $|E'| = |E| + \O(1)$, by structural induction on $E$:
	
	\underline{Base case.}
	If $E = A$ then $E' = \phi_{id}(A)$ where $\phi_{id}(x,y) = (x=y)$.
 
        % we construct $\psi(x, y) = (x = y)$.
	% We define $E' = \psi(A)$.
	% We show that $E \equiv E'$.
	% $$
		% \begin{array}{ll}
			% E \equiv \psi(A) & \iff \forall u \in \dom(E) \dot \sem{E}(u) = \sem{\psi(A)}(u) \\
			% &\iff \forall u \in \dom(E) \dot \sem{\psi}(A\sem{A}(u), \sem{E}(u))\\
			% &\iff \forall u \in \dom(E) \dot  \sem{A}(u) = \sem{E}(u)
		% \end{array}
	% $$
	% Furthermore, $|E'| = |E| + \O(1)$ since $|E| = |A|$ and $|\psi(A)| = |A| + \O(1)$.
	
	\underline{Inductive step.}
	If $E = \psi(E_1, \dots, E_n)$ then we have $E_i \equiv
        \psi_i(A_{i, 1}, \dots, A_{i, m_i})$ for each $i \in [1..n]$
        by induction hypothesis. We construct $\phi$ as follow:
	$$
		\begin{array}{l}
			\phi(\dots, x_{i, 1}, \dots, x_{i, m_i}, \dots, y) =
		\\
			\qquad \exists y_1, \dots, y_n \dot \psi(y_1, \dots, y_n, y) \land \bigwedge_{i=1}^{n} \left( \psi_i(x_{i, 1}, \dots, x_{i, m_i}, y_i) \right)
		\end{array}
	$$
	We define $E' = \phi(A_{1, 1}, \dots, A_{1, m_1}, \dots, A_{n, 1}, \dots, A_{n, m_n})$.\qed
\end{proof}

\paragraph{Counter machines}
A $k$-counter machine (abbreviated as CM) is defined as a tuple $M=\langle \Sigma, X, Q, q_{\it init}, F, \Delta, \alpha, \tau, \lambda \rangle$ where
$\Sigma$ is an alphabet, $X$ is a finite set of $k$ counters interpreted over $\N$, $Q$ is a finite set of states, $q_{\it init} \in Q$ is an initial state, $F \subseteq Q$ is a set of accepting states, $\Delta : Q \times Q$ is a transition relation, $\alpha : \Delta \rightarrow \Sigma \cup \{\varepsilon\}$ assigns a letter from $\Sigma$ or the empty word $\epsilon$ to each transition, $\tau : \Delta \rightarrow \{ \texttt{=0}, \texttt{>0}, \texttt{true}\}^k$ assigns a guard to each transition, and $\lambda : \Delta \rightarrow \{\texttt{decr}, \texttt{nop}, \texttt{incr}\}^k$ assigns an update to each counter and each transition. 

A configuration of $M$ is a pair $(q,\nu)$ where $q \in Q$ is a state and $\nu : X \rightarrow \N$ is a valuation for the counters. Given a transition $\delta \in \Delta$, and two valuations $\nu,\nu' : X \rightarrow \N$, we write $\nu \models \tau(\delta)$ when the valuation $\nu$ satisfies the guard $\tau(\delta)$ (with the obvious semantics), and $(\nu,\nu') \models \lambda(\delta)$ when the update of the values of counters from $\nu$ to $\nu'$ satisfies $\lambda(\delta)$.

A computation of $M$ from $(q,\nu)$ to $(q',\nu')$ on a word $w \in \Sigma^*$ is a finite sequence of configurations and transitions $\rho=(q,\nu_0) \delta_0 (q_1,\nu_1) \delta_1 \dots \delta_{n-1} (q_n,\nu_n)$ such that $q_0=q$, $\nu_0=\nu$, $q_n=q'$, $\nu_n=\nu'$, and for all $i$, $0 \leq i < n$, $\nu_i \models \tau(\delta_i)$, $(\nu_i,\nu_{i+1}) \models \lambda(\delta_i)$, and $\alpha(\delta_0) \cdot \alpha(\delta_1) \cdot \ldots \cdot \alpha(\delta_{n-1})=w$, i.e. the concatenation of the symbols on transitions given by $\alpha$ is equal to the word $w$.

The semantics of a machine $M$ from valuation $\nu$ on a word $w$ is the set of valuations $\nu'$ such that there exists an accepting state $q' \in F$ and a computation of $M$ from $(q_{\it init},\nu)$ to $(q',\nu')$, this set of valuations 
is denoted by $\sem{M}(w, \nu)$. The language of a machine $M$ from valuation $\nu$ is the set of words $w$ such that there exists an accepting state $q' \in F$, a valuation $\nu'$, and a computation of $M$ from $(q_{\it init},\nu)$ to $(q',\nu')$ on $w$, this set of words is denoted by $L_{\nu}(M)$.  If for all words $w \in L_{\nu}(M)$, the set  $\sem{M}(w, \nu)$ is a singleton, we say that the machine $M$ is unambiguous from valuation $\nu$; in such case then $\sem{M}(w, \nu)$ can be interpreted directly as a valuation and we write for $x \in X$, $\sem{M}(w, \nu)(x)$ for the value of $x$ in the valuation reached after reading the word $w$ with $M$ from $\nu$.

While any non-trivial question about the sets $\sem{M}(w, \nu)$ and  $L_{\nu}(M)$ is undecidable for counter machines with two counters or more, Ibarra et al. has shown in~\cite{ibarra} that $\sem{M}(w, \nu)$ is semi-linear and effectively constructible when the machine $M$ is {\em reversal bounded}, and so the emptiness of $L_{\nu}(M)$ is decidable in that case. We recall the notion of reversal bounded here.  Given a computation $\rho$, a counter $x$ has $r$ reversals in $\rho$ if the counter $x$ alternates $r$ times between increment and decrement phases in $\rho$. A machine $M$ is {\em $r$-reversal bounded} if for all its computations $\rho$,  for all its counters $x \in X$, $x$ has at most has $r$ reversals in $\rho$. A machine $M$ is {\em reversal bounded} if there exists $r \in \N$ such that $M$ is $r$-reversal bounded.

\begin{theorem}[Lemma~2 of~\cite{ibarra}] \label{thm:Ibarra}
	Let $M$ be a $1$-reversal $k$-counter machine with $m$ transitions and $\nu$ a valuation of its counters.
	$L_{\nu}(M) \neq \varnothing$ if and only if $M_{\nu}$ accepts some input in $(k(m + |\nu|)^{kC})$ transitions where $|\nu| = \max \{ \nu(i) \mid 1 \leq i \leq k \}$ and $C$ is constant.
\end{theorem}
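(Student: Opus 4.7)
The plan is to prove the non-trivial "only if" direction of the statement: assuming $L_{\nu}(M) \neq \varnothing$, exhibit an accepting computation whose length meets the stated bound. My approach rests on three ingredients: a structural stratification of runs exploiting the $1$-reversal restriction, a reduction of realisability to a linear Diophantine system via Parikh images, and the application of a classical small-solution bound for such systems.

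First I would stratify computations by mode. Since each of the $k$ counters undergoes at most one switch from an increment/no-op phase to a decrement/no-op phase, any run carries an implicit mode vector $\sigma \in \{+,-\}^k$ that evolves monotonically (a component may move from $+$ to $-$, never back). I would construct an extended graph whose vertices are pairs $(q,\sigma)$ and whose edges consist of the original transitions restricted to consistent modes, together with mode-switch edges. The graph has $O(|Q| \cdot 2^k)$ vertices and $O(m \cdot 2^k)$ edges, and every accepting computation of $M$ projects onto a path from $(q_{\it init}, +^k)$ to some final vertex.

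Second, I would encode the counter constraints (non-negativity and zero-tests) as a linear Diophantine system over the edge-count variables of such a path. The net effect on each counter is linear in the edge counts, with coefficients bounded by an absolute constant. Because within a single mode segment all updates to a given counter share a fixed sign, non-negativity at every intermediate position can be guaranteed from non-negativity at segment endpoints together with the initial valuation $\nu$; zero-tests translate into equalities on endpoint counts of maximal mode segments. Parikh's theorem then yields an effective semilinear characterisation of the realisable edge-count vectors. Third, I would invoke a classical small-solution bound for feasible integer linear systems in the spirit of Borosh--Treybig or von zur Gathen--Sieveking: a system with $n$ non-negative integer variables and coefficient magnitudes bounded by $M_0$ admits a solution whose magnitudes are polynomial in $M_0$ with exponent proportional to $n$. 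With $n = O(m \cdot 2^k)$ and $M_0 = O(m + |\nu|)$, this yields a total path length within the announced $(k(m+|\nu|))^{kC}$ bound for an absolute constant $C$.

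The main obstacle, I expect, is the realisability step: a small non-negative Parikh vector must be reassembled into a bona fide run respecting zero-tests and step-wise counter non-negativity, not merely balancing them in aggregate. Here the $1$-reversal property is decisive, since within each mode segment one can safely reorder transitions (performing all increments before any decrements, subject to guards), ensuring that intermediate counter values remain non-negative whenever the endpoint values do. Zero-tests force us to slice segments at the test-transitions and treat each slice independently, but the resulting overhead remains polynomial and is absorbed in the constant $C$. The converse direction is immediate, since any accepting computation witnesses $L_{\nu}(M) \neq \varnothing$.
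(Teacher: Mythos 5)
A preliminary remark: the paper does not prove this statement at all --- it is imported verbatim as Lemma~2 of the cited work of Gurari and Ibarra, and used as a black box in the \textsc{PSpace} algorithm for m-expressions. So your attempt can only be judged on its own merits and against the known literature. Your overall strategy (stratify the run by the monotonically evolving mode vector, encode counters as a linear Diophantine system over transition multiplicities, invoke a Borosh--Treybig / von zur Gathen--Sieveking small-solution bound, and re-linearise a small Parikh vector into a genuine run by reordering inside monotone segments) is the standard route to such small-witness lemmas, and your realisability discussion is essentially sound: within a segment where all updates to a counter share a sign, non-negativity propagates from the endpoints, and a successful $=0$ test on a $1$-reversal counter can only occur before its first increment or after it has returned to $0$, so only $O(k)$ slicing points are ever needed (your ``slice at every test-transition'' phrasing is looser than necessary, but fixable).

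The genuine gap is quantitative, and it sits exactly where you assert the final bound. Your system has $n = O(m\cdot 2^k)$ edge-count variables, and once path-realisability is enforced it acquires $\Theta(|Q|\cdot 2^k)$ flow-conservation equations; the small-solution bounds you invoke place the system size (more precisely its rank, via subdeterminant estimates) in the exponent. You state this yourself --- ``exponent proportional to $n$'' with $n = O(m\cdot 2^k)$ --- which yields a witness of length $(m+|\nu|)^{\mathrm{poly}(m)\cdot 2^k}$. That is \emph{not} within $(k(m+|\nu|))^{kC}$ for an absolute constant $C$: the claimed exponent is independent of $m$, yours is not, and the two diverge already for fixed $k$ as $m$ grows. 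Two additional ideas are needed to close this. First, a single run visits only $k+1$ of the $2^k$ modes (the mode vector is monotone), so the extended graph should be built per reversal order, removing the $2^k$ factor. Second, and decisively, the flow-conservation block must not contribute to the exponent: either exploit that it is a network (hence totally unimodular) matrix so that only the $O(k^2)$ counter constraints enter the subdeterminant bound, or --- closer to the original argument --- normalise the run into a short skeleton interleaved with powers of only $\mathrm{poly}(k)$ simple cycles (a Carath\'eodory-type reduction on cycle effects in $\mathbb{Z}^k$), so that the Diophantine system has $\mathrm{poly}(k)$ unknowns to begin with. Without one of these, the announced exponent $kC$ does not follow from your construction.
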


We will use counter machines as an algorithmic tool in the sequel. Some of our algorithms rely on the following lemmas that relates reversal bounded machines, Presburger expressions and weighted automata.

\begin{lemma}[Presburger term to CM] \label{term2CM}
	For all Presburger terms $t$ with $P$ the set of positions of its
        syntactic tree, and for all valuation $\nu \colon {\it
          free}(t) \rightarrow \N$, one can construct a $1$-reversal
        deterministic counter machine $M=\langle \Sigma, X, Q, q_{\it
          init}, F, \Delta, \alpha, \tau, \lambda \rangle$ (which is
        increasing-decreasing) with $X=\{ x^+_p,x^-_p \mid  p \in P\}$ and a valuation $\nu_M : X \rightarrow \N$, such that $M$ has a (unique) computation from $(q_{\it init},\nu_M)$ to $(q_f,\nu'_M)$ on the empty word $\epsilon$, and $q_f \in F$ and $\nu(p_0)=\nu'_M(x^+_{p_0})-\nu'_M(x^-_{p_0})$ where $p_0$ the position of term $t$.
	The sizes of $M$ and $\nu_M$ are bounded linearly in the sizes of $t$ and $\nu$ respectively.
\end{lemma}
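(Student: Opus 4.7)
The plan is to proceed by structural induction on the syntax tree of $t$, maintaining the invariant that for each position $p \in P$, once the subcomputation associated to $p$ has completed, the value of the subterm rooted at $p$ under $\nu$ equals $\nu'_M(x^+_p) - \nu'_M(x^-_p)$. Because Presburger terms here use only $0$, $1$, $+$, and variables, all these subterm values are non-negative integers, so the construction will keep $x^-_p$ at $0$ throughout; the two-counter signature $\{x^+_p, x^-_p\}$ is preserved to match the statement and to be reusable in later lemmas where signed values arise.

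For the base case where $p_0$ is a leaf labelled $0$, $1$, or a variable $y$, the machine is a single $\varepsilon$-transition from $q_\text{init}$ to a final state $q_f$, and the initial valuation $\nu_M$ sets $x^+_{p_0}$ to $0$, $1$, or $\nu(y)$ respectively, with every other counter at $0$. For the inductive case $t = t_1 + t_2$ with root $p_0$ and children at $p_1, p_2$, let $M_1, M_2, \nu_{M_1}, \nu_{M_2}$ be obtained by induction on disjoint sets of counters. I would chain $M_1$ and $M_2$ sequentially via $\varepsilon$-transitions, and then append a \emph{transfer} stage consisting of four deterministic $\varepsilon$-loops: decrement $x^+_{p_1}$ to $0$ while incrementing $x^+_{p_0}$, then do the same for $x^+_{p_2} \to x^+_{p_0}$, and symmetrically accumulate the negative counters into $x^-_{p_0}$. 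Each loop uses the guard $\texttt{>0}$ on its source counter to continue and $\texttt{=0}$ to exit, so every step is forced from every reachable configuration, yielding a deterministic machine with a unique accepting computation on $\varepsilon$.

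The invariant at $p_0$ then follows by addition: after the transfers, $\nu'_M(x^+_{p_0}) - \nu'_M(x^-_{p_0}) = \nu(t_1) + \nu(t_2) = \nu(t)$. Each node of the syntax tree contributes only $O(1)$ states and transitions, so $|M| = O(|t|)$; the valuation $\nu_M$ only stores the leaf values (the constants $0, 1$, or valuations $\nu(y)$), so its size is linear in $|\nu|$. The main obstacle is to ensure the $1$-reversal (increasing-decreasing) property globally: each counter $x^+_p$ (and symmetrically $x^-_p$) either starts, if $p$ is a leaf, at its initial value and is thereafter only decremented during the single transfer to its parent; or, if $p$ is internal, starts at $0$, is only incremented during the transfers from its two children, and is only decremented during the subsequent transfer to its parent. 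Because transfers are executed strictly in post-order on the syntax tree, these incrementing and decrementing phases never interleave for any single counter, so every counter undergoes at most one reversal, and the resulting machine satisfies all the required properties.
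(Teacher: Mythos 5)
Your construction matches the paper's proof essentially step for step: two counters $x^+_p,x^-_p$ per tree position encoding the subterm value as a difference, sequential composition of the child machines followed by a deterministic transfer gadget that drains each child's counters into the parent's, with the post-order discipline guaranteeing at most one reversal per counter and linear size. The only divergence is that you exploit the statement's codomain $\N$ to keep every $x^-_p$ at zero, whereas the paper initialises leaf counters with a sign case split ($x^+_{p_v}=\nu(v)$ if $\nu(v)\geq 0$, else $x^-_{p_v}=-\nu(v)$) because the lemma is later invoked with possibly negative values $\nu(x_i)=A_i(u)$; since you already handle the $x^-$ transfers symmetrically, only that leaf initialisation would need adjusting to cover the signed case.
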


\begin{proof}[Sketch]
    Let us give the main ideas on the example $t = x + x$. We want to
    construct a machine $M$ which, given a valuation of $x$, returns
    $x+x$. There are two issues: (1) counters in counter machines
    can only be positive, whereas $x$ can be negative, (2) there are
    two occurrences of $x$. To address these two issues, for each
    occurrence $p\in\{1,2\}$, we use two counters $x_p^+$ and
    $x_p^-$, such that at any moment, $x_p^+$ stores the content of
    $x$ if $x$ is positive (and $x_p^- = 0$), otherwise $x_p^-$ stores
    $-x$ (and $x_p^+ = 0$). The machine $M$ copies $x_1^+ + x_2^+$
    into $x_{p_0}^+$ ($p_0$ is the position of $t$), and $x_1^- +
    x_2^-$ into $x_{p_0}^-$. This can be done by decreasing $x_1^+$,
    $x_2^+$ and increasing $x_p^+$ (which is initially set to $0$) in
    parallel, until both counters $x_1^+$ and $x_2^+$ have reached
    $0$ (and similarly for the negative parts). 

    More generally, for every occurrence $p$ of a subterm $st$, we use
    two counters $x_{p}^+, x_{p}^-$ to encode the value of this
    subterm by $x_{p}^+ - x_{p}^-$.	We define $\nu_M$ as follows.
	Each occurrence $p_v$ of a variable $v \in \textit{free}(t)$ such that $\nu(v) \geq 0$ (resp. $< 0$) we let $\nu_M(x_{p_v}^+) = \nu(v)$ (resp. $\nu_M(x_{p_v}^-) = -\nu(v)$).
	All variables associated with a subterm which is not a free
        variable are mapped to zero by $\nu_M$.

	The construction of $M$ can be done by structural induction on $t$.
	For a constant $C \in \{0,1\}$, we can trivially construct a machine that computes a pair of counters which encodes $C$.
	For a variable occurrence, the pairs of counters corresponding
        to that occurrence is already valued by the initial valuation
        $\nu_M$ since all variables of $t$ are free.

	Finally, for an occurrence $p$ of a subterm $st$ of the form
        $t_1+t_2$ (with respective occurrences $p_1,p_2$), we
        inductively use counter machines $M_{p_1}$ and $M_{p_2}$ to
        compute the values of $x_{p_i}^+,x_{p_i}^-$ for $i=1,2$. The
        machine for $p$ is defined by the successive execution of
        $M_{p_1}$, $M_{p_2}$, and a machine which 
        realises the operations $x_{p}^+ =
        x_{p_1}^++x_{p_2}^+$ and $x_{p}^- = x_{p_1}^-+x_{p_2}^-$.

	We conclude by saying that the size of $M$ is bounded linearly
        in the sizes of $t$ since its a linear combination of constant
        size gadgets.
	\qed
\end{proof}

\begin{lemma}[Presburger formula to CM] \label{lemma:Pres2CM}
	For all functional Presburger formula $\phi$  with $P$ set of positions in its
	syntactic tree, and valuation $\nu \colon {\it free}(\phi) \rightarrow \N$ for the free variables in $\phi$, one can construct a $1$-reversal (increasing-decreasing) unambiguous counter machine $M=\langle \Sigma, X, Q, q_{\it init}, F, \Delta, \alpha, \tau, \lambda \rangle$ with $X=\{ x^+_p,x^-_p \mid p \in P\}$ and a valuation $\nu_M : X \rightarrow \N$, such that $\nu \models \phi$ if and only if $ L_{\nu_M}(M) \neq \varnothing$.
	The sizes of $M$ and $\nu_M$ are bounded linearly in the sizes of $\phi$ and $\nu$ respectively.
\end{lemma}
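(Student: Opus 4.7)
The plan is to proceed by structural induction on $\phi$, using \Lemma{term2CM} as a black box at every atom and adding small gadgets for each Boolean connective and for the existential quantifier. Every position $p$ of $\phi$'s syntactic tree will contribute exactly one pair $(x_p^+, x_p^-)$ of counters, so $|X| = 2|P|$ is linear in $|\phi|$; $\nu_M$ is built by applying the leaf rule of \Lemma{term2CM} at every free-variable occurrence, which keeps its maximum value linear in $|\nu|$.

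For the atoms $t_1 = t_2$ and $t_1 > t_2$, I will first invoke \Lemma{term2CM} on each $t_i$ to obtain a $1$-reversal deterministic machine that deposits the signed value $x_i^+ - x_i^-$ of the term into two counters, and sequentially compose the two. I then append a comparator: it introduces two fresh counters $s, s'$, transfers $x_1^+ + x_2^-$ into $s$ and $x_1^- + x_2^+$ into $s'$ by a single synchronous decrement-of-source / increment-of-sink phase, and finally simultaneously decrements $s$ and $s'$, using the guards $\texttt{=0}$ and $\texttt{>0}$ to detect ``both hit zero at the same step'' (for $=$) or ``$s > 0$ when $s' = 0$'' (for $>$). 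A final sweep decrements any leftover counters to zero, so the machine accepts in the canonical all-zero valuation exactly when the atom holds.

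For $\phi_1 \wedge \phi_2$ the two subtrees use disjoint positions, hence disjoint counters, and I will simply sequentially compose the two inductively obtained submachines. For $\phi_1 \vee \phi_2$ I will add a fresh initial state with $\varepsilon$-branches to $M_{\phi_1}$ and $M_{\phi_2}$, and on each branch append a cleanup gadget that decrements to zero the initially non-zero counters belonging to the other subtree. For $\exists y.\psi$ I reserve two fresh counters $y^+, y^-$ for $y$'s occurrence in $\psi$, guess the witness by nondeterministically incrementing one of them from zero, then run $M_\psi$ and clean up. In every case each counter undergoes at most one increasing phase (its initial setting by $\nu_M$, a witness guess, or an increment by a parent-term gadget) and at most one decreasing phase (passing a value to a parent term or the final cleanup), so the resulting machine remains $1$-reversal and increasing--decreasing.

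The main obstacle will be preserving unambiguity in the presence of the genuinely nondeterministic disjunction and existential constructions, since distinct satisfied disjuncts or distinct witnesses can yield distinct counter valuations along distinct accepting paths. The cleanup gadgets resolve this: every accepting computation ends with all counters at $0$, so $\sem{M}(\varepsilon, \nu_M)$ is either empty (when $\phi$ fails) or the singleton $\{\mathbf{0}\}$ (when $\phi$ holds), which yields unambiguity from $\nu_M$. The equivalence $\nu \models \phi \Leftrightarrow L_{\nu_M}(M) \neq \varnothing$ then follows by a routine induction mirroring the semantics of the connectives, and the linear-size bound is immediate since every inductive step adds only a constant-size gadget on top of the counters contributed by the positions of $\phi$.
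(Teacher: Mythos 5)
Your construction follows essentially the same route as the paper's: structural induction on $\phi$, \Lemma{term2CM} invoked at the atoms followed by a comparator that merges the positive/negative parts and decrements them simultaneously, sequential composition for $\wedge$, a nondeterministic branch for $\vee$, and a nondeterministic guessing loop for $\exists$. Your cleanup gadgets forcing every accepting run to end in the all-zero valuation are a welcome addition: they give unambiguity (in the paper's sense of a unique reachable final valuation) explicitly, which the paper's sketch asserts but does not really argue for the $\vee$ and $\exists$ cases.

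There is, however, one step that would fail as written: the existential gadget. You reserve ``two fresh counters $y^+,y^-$ for $y$'s occurrence in $\psi$,'' i.e.\ a single pair per quantified variable. But the whole machinery of \Lemma{term2CM} is per-\emph{occurrence}: the counter pool is $X=\{x_p^+,x_p^-\mid p\in P\}$ indexed by positions precisely because a reversal-bounded machine cannot duplicate a counter's value --- evaluating a term consumes each occurrence's pair by decrementing it into the parent's pair. If $y$ occurs twice in $\psi$ (say in $y+y>1$), a single pair $y^+,y^-$ cannot feed both occurrences without either re-reading a counter already drained to zero or incurring an extra reversal. The fix is exactly what the paper does and is consistent with your own first paragraph: let $P_y$ be the set of free occurrences of $y$ in $\psi$ and have the guessing loop increment, in parallel and by the same nondeterministically chosen amount, one counter of each pair $x_q^+,x_q^-$ for every $q\in P_y$, so that all occurrence-pairs carry the same witness value before $M_\psi$ runs. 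With that correction (and noting, as the paper does, that the quantifier gadgets are the only non-constant-size pieces but each bound variable is quantified once, so the total size stays linear), your argument goes through.
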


\begin{proof}[Sketch]
We define $\nu_M$ as in \Lemma{term2CM}. Each occurrence $p_v$ of
a free variable $v$ of $\phi$  such that $\nu(v) \geq 0$ (resp. $< 0$) we let $\nu_M(x_{p_v}^+) = \nu(v)$ (resp. $\nu_M(x_{p_v}^-) = -\nu(v)$).
All variables associated with a subterm occurring in $\phi$ which is not a free
variable are mapped to zero by $\nu_M$.

	The construction of $M$ is done by structural induction on
        $\phi$. The base case are atomic formulas, which are either of
        the form $t_1 = t_2$ or $t_1>t_2$. As done in \Lemma{term2CM},
        for all variable $x$ occurring at position $p$ in $t_i$, we
        use counters $x_p^+$ and $x_p^-$ to encode its value. We first
        construct counter machines $M_{p_1}$ and $M_{p_2}$ inductively
        corresponding to terms $t_1,t_2$, obtained by \Lemma{term2CM},
        which computes the values of $x_{p_i}^+,x_{p_i}^-$ for $i=1,2$.
	The machine for $p$ is defined by the successive execution of
        $M_{p_1}$, $M_{p_2}$, and a machine which realises the
        operations $x_{p}^+ = x_{p_1}^++x_{p_2}^-$, $x_{p}^- =
        x_{p_1}^-+x_{p_2}^+$ and then check that $x_{p}^+ > x_{p}^-$
        by decreasing simultaneously $x_{p}^+, x_{p}^-$ until $x_{p}^-
        = 0$ and $x_{p}^+>0$ (otherwise it halts and rejects the
        input).
       The procedure is similar for $\psi$ of the form $t_1
       = t_2$ (both $x_p^+$ and $x_p^-$ must reach 0 at the same time
       when decreased).

	If $\psi$ is of the form $\psi_1 \lor \psi_2$ (with respective positions $p_1,p_2$),
	the machine for $p$ execute non-derterministicaly the machines of $M_{p_1}$ or $M_{p_2}$ both inductively constructed.

	If $\psi$ is of the form $\psi_1 \land \psi_2$ (with respective positions $p_1,p_2$), we combine the machines $M_{p_1}$ and $M_{p_2}$ both inductively constructed, with some $\varepsilon$-transition from accepting state of the first one to the initial state of the second one.

	If $\psi$ is of the form $\exists x\dot \psi_1$ (with respective position $p_1$),
	we define $P_x \subseteq P$ the set of all free occurrences of $x$ in the syntactic tree of $\psi$.
	The machine for $p$ first goes to a loop which increases
        non-deterministically, and in parallel, each $x_q^+$ or
        $x_q^+$ where $q\in P_x$ and then execute the machine
        $M_{p_1}$ inductively constructed.

	We conclude by saying that the size of $M$ is bounded linearly in the sizes of $\phi$ since its almost a linear combination of constant size gadgets.
	In fact, the only gadget which have not a constant size is the gadget of quantifier case but, w.l.o.g we can assume that all non-free variables of $\phi$ is quantified once and thus the sum of all size of quantifier gadget is bounded linearly in the sizes of $\phi$.
	\qed
\end{proof}

\begin{lemma}[uWA to CM] \label{lemma:WA2CM}
	Let $A$ be an unambiguous WA with $m$ transitions and $\ell$ be the maximum among the set of absolute values of weights in $A$.
	One can construct an unambiguous $0$-reversal (increasing only) $2$-counter machine $M$ over $X = \{x^+, x^-\}$ and with $\O\left(m\ell\right)$ transitions such that $L(A) = L_{\mathbb{0}}(M)$ and $\forall u \in L(A) \dot A(u) = M(u,\mathbb{0})(x^+) - M(u,\mathbb{0})(x^-)$ where $\mathbb{0}$ maps any counter to zero.
\end{lemma}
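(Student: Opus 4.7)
The plan is to simulate the (unique) accepting run of $A$ on an input $u$ by a computation of $M$ that records the running sum of the weights along the run, split into its positive and negative parts in $x^+$ and $x^-$ respectively. Since weights are taken in $\{-\ell,\dots,\ell\}$, each transition of $A$ contributes at most $\ell$ unit increments to one of the two counters, which immediately explains the $\O(m\ell)$ transition bound.

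Concretely, I would take the state set of $M$ to contain the states $Q$ of $A$, together with, for every transition $\delta=(p,a,q)$ of $A$ with weight $w=\lambda(\delta)$, a fresh chain of auxiliary states. The gadget for $\delta$ enters on letter $a$ from $p$, then performs $|w|$ successive $\varepsilon$-transitions that each increment $x^+$ (if $w\geq 0$) or $x^-$ (if $w<0$) by one, and finally reaches $q$. Initial and final states of $M$ coincide with those of $A$, and the guards $\tau$ are all $\texttt{true}$. The total number of transitions introduced for $\delta$ is at most $\ell+1$, so $M$ has $\O(m\ell)$ transitions.

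It remains to check the three required properties. For $0$-reversal, observe that the counters $x^+$ and $x^-$ are only ever incremented, never decremented, so each one has $0$ reversals in any computation. For $L(A)=L_{\mathbb{0}}(M)$ and unambiguity: the reachable computations of $M$ from $(q_{\it init},\mathbb{0})$ are in one-to-one correspondence with the runs of $A$, since the auxiliary $\varepsilon$-transitions inside each gadget form a deterministic chain that is entered only through the unique $a$-edge simulating $\delta$, and exited only into $q$; hence $A$ unambiguous implies $M$ unambiguous, and the underlying languages coincide. For the value equality, a straightforward induction on the length of the accepting run $r=q_0a_1\dots a_nq_n$ of $A$ on $u$ shows that after simulating the prefix of length $i$, the counters satisfy $x^+ - x^- = \sum_{j=1}^{i}\lambda(q_{j-1},a_j,q_j)$; taking $i=n$ yields $M(u,\mathbb{0})(x^+) - M(u,\mathbb{0})(x^-) = A(u)$.

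The only mildly delicate point is preserving unambiguity when the original WA already has several $a$-transitions out of a state $p$: the corresponding gadgets all start by reading $a$, but they diverge immediately into distinct auxiliary chains, and unambiguity of $A$ ensures that at most one of these choices can be completed into an accepting computation of $M$. So no additional determinisation is needed, and the construction yields the claimed machine. \qed
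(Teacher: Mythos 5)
Your proposal is correct and follows essentially the same construction as the paper: keep the state space and transitions of $A$ and accumulate the positive and negative parts of the run's weight in $x^+$ and $x^-$. You are in fact slightly more careful than the paper's own proof, which speaks of incrementing a counter ``with this weight'' even though the machine model only permits unit updates; your explicit chains of $|w|$ unit-increment $\varepsilon$-transitions are what actually justifies the stated $\O(m\ell)$ transition bound.
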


\begin{proof}
	We construct $M$ from the same state space as $A$, same initial state and same final states, and the same transition relation.
	We handle weights as follows.
	As weights in $A$ can be positive as well as negative, and variable in $M$ can only carry nonnegative integers, we use two variables to encode the sum of weights along runs: each time a positive weights is crossed in $A$ then $x^+$ is incremented with this weight in $M$, and each time a negative weight is crossed in $A$ then $x^-$ is incremented with the opposite of this weight in $M$. As a consequence, after reading a prefix of a word in $A$ and $M$ using the same sequence of transitions, the two devices are in the same state and the difference $x^+ - x^-$ is equal to the sum of weights along the run in $A$.
	\qed
\end{proof}

\begin{proof}[of \Theorem{thm:mono}]	
	We show that the quantitative emptiness, universality and comparison problems for m-expressions are \textsc{PSpace-Complete}. 
	
	\underline{\textsc{PSpace-easiness}.}
	By \Remark{remark:zeroEmptiness} and since \textsc{PSpace} is close under negation, we must only show that the quantitative 0-emptiness problem is in \textsc{PSpace}.
	Let $E$ be a m-expression, we can assume w.l.o.g that $E$ is of the form $\phi(A_1, \dots, A_n)$ by \Proposition{prop:normalForm}.
	Then, $|E| = |\phi| + \O(1) + \sum_{i=1}^{n} |A_i|$.
	
	The idea is to reduce this problem to the emptiness of the language of a reversal bounded counter machine.
	This machine will be the concatenation of two machines $M_A$ and $M_\psi$.
	Given an input word $w \in \dom(E)$, our computation will start in $M_A$ over $w$ where all counters start with the value zero.
	$M_A$ simulates the product of all uWA i.e compute the unique valuation $\nu$ such that $\nu(i) = A_i(w)$ for each $1 \leq i \leq n$.
	Then our computation continues in $M_\psi$ over $\varepsilon$ which guesses a value $y \geq 0$ and accepts if and only if $\phi(\nu(1), \dots, \nu(n), y)$ holds.
	
	For all $1 \leq i \leq n$, let $m_i$ the number of transition in $A_i$ and $\ell_i$ the maximal of the absolute values of its weights.
	We set $m = \max \{ m_i : 1 \leq i \leq n \}$ and $\ell = \max \{ \ell_i : 1 \leq i \leq n \}$.
	For all $1 \leq i \leq n$, we apply \Lemma{lemma:WA2CM} on $A_i$ to construct the unambiguous increasing-only $2$-counter machine $M_i$ over $\{x_i^+, x_i^-\}$ with $\O(m\ell)$ transitions such that $L(A_i) = L_{\mathbb{0}}(M_i)$ and for all $u \in L(A_i)$ we have $A_i(u) = M_i(u, \mathbb{0})(x_i^+) - M(u_i, \mathbb{0})(x_i^-)$ where $\mathbb{0}$ maps any counter to zero.
	
	To ensure that all atoms read the same word, we consider the
        unambiguous increasing-only counter machine $M_A$ over
        $\bigcup_{i=1}^n \{x_i^+, x_i^-\}$ and with $\O((m\ell)^n)$
        transitions which is defined  such that  $L_{\mathbb{0}}(M_A)
        = \bigcap_{i=1}^n L(M_i)$ and for all $u \in
        L_{\mathbb{0}}(M_A)$, for all $1 \leq i \leq n$ we have
        $M_A(u, \mathbb{0})(x_i^+) = M_i(u, \mathbb{0})(x_i^+)$ and
        $M_A(u, \mathbb{0})(x_i^-) = M_i(u, \mathbb{0})(x_i^-)$. The
        machine $M_A$ is obtained as the product of the machines $M_i$
        (its set of states are tuples of states of the machines
        $M_i$). 	Now, we define the existential Presburger
        formula $\psi$ as:
	$$\psi(x_1, \dots, x_n) = \exists x \dot (x = 0 \lor x > 0)
        \land \phi(x_1, \dots, x_n, x)$$
	Let define the valuation $\nu \colon {\it free}(\psi) \rightarrow \N$ as $\nu(x_i) = M_A(u, \mathbb{0})(x_i^+) - M_A(u, \mathbb{0})(x_i^-)$ for each $1 \leq i \leq n$.
	
	By applying \Lemma{lemma:Pres2CM} on $\psi$ with valuation $\nu$, we can construct an increasing-decreasing counter machine $M_\psi$ over $\{ x^+_{u},x^-_{u} \mid  {\it u} \in \T(\psi)\}$ and a valuation $\nu_\psi : X \rightarrow \N$, such that $\nu \models \psi$ if and only if $\epsilon \in L_{\nu_\psi}(M)$.
	The sizes of $M$ and $\nu_\psi$ are bounded linear in the sizes of $\psi$ and $\nu$ respectively.
	Note that since $|\psi| = |\phi| + \O(1)$, the size of $M_\psi$ is bounded linear in the sizes of $\phi$ and then also in the size of $E$.
	
	Let consider the concatenation $M_A \cdot M_\psi$ which is the combination of the machines of $M_A$ and $M_\psi$ with some $\varepsilon$-transition from accepting states of $M_A$ to the initial state $M_\psi$.
	We show that $L_\mathbb{0}(M_A \cdot M_\psi) \neq \varnothing$
        iff there exists $w \in dom(E)$ such that $E(w) \geq 0$:
	$$
	\begin{array}{l}
		L_\mathbb{0}(M_A \cdot M_\psi) \neq \varnothing
	\\
		\iff \exists u \in L_\mathbb{0}(M_A) \dot (\bigwedge_{i=1}^n \nu(x_i) = M_A(u, \mathbb{0})(x_i^+) - M_A(u, \mathbb{0})(x_i^-)) \land L_{\nu_M}(M_\psi) \neq \varnothing
	\\
		\iff \exists u \in \bigcap_{i=1}^n L_\mathbb{0}(M_i) \dot (\bigwedge_{i=1}^n \nu(x_i) = M_i(u, \mathbb{0})(x_i^+) - M_i(u, \mathbb{0})(x_i^-)) \land \nu \models \psi
	\\
		\iff \exists u \in \bigcap_{i=1}^n L(A_i) \dot (\bigwedge_{i=1}^n \nu(x_i) = A_i(u)) \land \exists x \dot x \geq 0 \land \phi(\nu(1), \dots, \nu(n), x)
	\\
		\iff \exists u \in \dom(E) \dot \exists x \dot x \geq 0 \land \phi(A_1(u), \dots, A_n(u), x)
	\\
		\iff \exists u \in \dom(E) \dot E(u) \geq 0
	\end{array}
	$$
	\begin{algorithm}[!h]
		\DontPrintSemicolon
		\KwIn{An upper bound $B$ on the size of the minimal execution in $M_A \cdot M_\psi$,
			all atoms $A_j = (\Sigma, Q_j, q_{j, init}, F_j, \Delta_j, \gamma_j)$ for each $1 \leq j \leq n$.
		}
		\KwData{$\ell = \max \{ |\gamma_j(p, a, q)| \mid 1 \leq j \leq n \land (p, a, q)\in \Delta_i\}$}
		\KwOut{A valuation $\nu$ such that $\exists w \in \Sigma^*, \forall 1 \leq j \leq n \dot A_j(w) = \nu(j)$}
		\Begin{
			$\textit{counter} \coloneqq 0$\;
			$\nu \coloneqq (0, \dots, 0) \in \Z^n$\;
			$q \coloneqq (q_{1, init}, \dots, q_{n, init}) \in Q_1 \times \dots \times Q_n$\;
			\While{$\textit{counter} < B$}{
				guess $\textit{end} \in \{\textit{yes, no}\}$\;
				\uCase{end = yes}{
					check $q \in F_1 \times \cdots \times F_n$\;
					\KwRet{$\nu$}\;
				}
				\Case{end = no}{
					guess $a \in \Sigma$\;
					guess $q' \in Q_1 \times \dots \times Q_n$\;
					\For{$1 \leq j \leq n$}{
						check $(q(j), a, q'(j)) \in \Delta_j$\;
						$\nu(j) \coloneqq \nu(j) + \gamma_j(q(j), a, q'(j)$\;
					}
					$q \coloneqq q'$\;
					$\textit{counter} \coloneqq \textit{counter} + 1$\; 
				}
			}
		}	
		\caption{Guess a compuation of $M_{A}$ with size less than $B$ and return its final valuation $\nu$} \label{algo}
	\end{algorithm}
	In fact, $|M_A|$ is exponential in $n$ while $|E|$ is linear in $n$.
	But, by \Theorem{thm:Ibarra} we can obtain an upper bound $B$\footnote{$(\O(|\phi|)\times\O(|M_A \cdot M_\psi|))^{\O(|\phi|)}$} on the size of minimal accepting executions of $M_A \cdot M_\psi$, with $\mathbb{0}$ as initial valuation.
	Note that the binary encoding of $B$ is polynomial in $|E|$.
	Then, we can use the bound $B$ as an upper bound on the size of minimal accepting executions of $M_A$ with $\mathbb{0}$ as initial valuation.
	\Algorithm{algo} allows us to calculate a valuation $\nu$ of the atoms of $E$ with a space polynomial in $B$.
	
	Since $|M_\psi|$ is polynomial in $|\phi|$ and then in $|E|$, the machine $M_\psi$ can be effectively constructed with a space polynomial in $|E|$.
	To do that, we apply \Lemma{lemma:Pres2CM} on $\psi$ and $\nu$.
	Note that, the bound $B$ is also the upper bound on the size of minimal accepting executions of $M_\psi$ with $\nu_\psi$ as initial valuation.
	Thus, we can guess a computation smaller than $B$ of $M_{\psi}$ with $\nu_{\psi}$ as initial valuation and we check in polynomial space with $M_\psi$ if the guessed computation is correct and accepting.
	
	\underline{\textsc{PSpace-hardness}.}
	By reduction from \emph{the finite automaton intersetion problem} (INT).
	Let $A_1, \dots, A_n$ be $n$ deterministic regular automata with a common alphabet $\Sigma$, determining whether $\{ w \in \Sigma^* \mid w \in \bigcap_{i=1}^n L(A_i) \} \neq \varnothing$ is \textsc{PSpace-Hard}. %\cite{INT}\footnote{cf. Lemma~3.2.3}.
	Moreover, since \textsc{PSpace} is close under negation, we also have that $\{ w \in \Sigma^* \mid w \in \bigcap_{i=1}^n L(A_i) \} = \varnothing$ (notINT) is \textsc{PSpace-Hard}.

	For each $1 \leq i \leq n$, we construction the uWA $W_i$ such that for all $u \in \Sigma^*$ if $u \in L(A_i)$ then $W_i(u) = 1$ otherwise $W_i(u) = 0$.
	Note that for each $1 \leq i \leq n$, we can construct $W_i$ in linear time since $A_i$ is deterministic and then the regular automaton which recognise the complementary of $L(A_i)$ can be construct in linear time.
	
	We show that INT can be reduced to the quantitative emptiness problem and notINT can be reduced to the quantitative equivalence problem:
	$$
	\begin{array}{ll}
	\{ w \in \Sigma^* \mid w \in \bigcap_{i=1}^n L(A_i) \} = \varnothing &\iff \forall w \in \Sigma^*, \exists 1 \leq i \leq n \dot w \notin L(A_i) \\
	 &\iff \forall w \in \Sigma^* \dot \min(W_1(w), \dots, W_n(u)) = 0 \\
	\{ w \in \Sigma^* \mid w \in \bigcap_{i=1}^n L(A_i) \} \neq \varnothing  &\iff \exists w \in \Sigma^* \dot \min(W_1(w), \dots, W_n(u)) = 1 \\
	&\iff \exists w \in \Sigma^* \dot \min(W_1(w), \dots, W_n(u)) - 1 \geq 0
	\end{array}
	$$
	Finally, the \textsc{PSpace-Hardness} of the quantitative equivalence implies trivially the \textsc{PSpace-Hardness} of the quantitative inclusion.
	\qed
\end{proof}

\section{Proof of \Section{sec:iter}}
	\subsection{The language of uniquely decomposable words}

Let $L\subseteq \Sigma^*$ be a language. We denote by
$L^\#\subseteq L^*$ the set of words $u\in L^*$ such that either
$u=\epsilon$, or there exists at most one tuple $(u_1,\dots,u_n)\in
(L\setminus \{ \epsilon\})^n$ such that $u=u_1\dots u_n$. 

\begin{proposition}\label{prop:regdec}
    If $L\subseteq \Sigma^*$ is regular, then $L^\#$ is (effectively) regular.
\end{proposition}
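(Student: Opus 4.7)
The plan is to realise $L^\#$ as the set difference $L^* \setminus D$, where $D$ is the regular language of words in $L^*$ admitting at least two distinct decompositions into non-empty $L$-factors. Since regular languages are effectively closed under Boolean operations and Kleene star, this suffices. First I would replace $L$ by $L' = L \setminus \{\epsilon\}$, which is still effectively regular, since only non-empty factors are considered in decompositions. Then I would fix a DFA $A = (Q, q_0, F, \delta)$ for $L'$; using a deterministic atom is crucial so that a decomposition of $u$ corresponds bijectively to a choice of cut points (rather than to a choice of runs of $A$ on the factors).

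The core step is the construction of an NFA $M$ accepting $D$. The states of $M$ are triples $(p, q, b) \in Q \times Q \times \{0,1\}$, starting in $(q_0, q_0, 0)$. A state $(p,q,b)$ simulates in parallel two prospective decompositions of the prefix read so far: $p$ (resp. $q$) is the current state of $A$ in decomposition $1$ (resp. $2$), and $b$ records whether the two decompositions have already made different cut choices. On input letter $a$, $M$ guesses independently, for each decomposition $i \in \{1,2\}$, a bit $c_i \in \{0,1\}$; if $c_i = 1$, this represents a cut just before $a$ and is only allowed when the corresponding state is in $F$, with the new state being $\delta(q_0, a)$; if $c_i = 0$, no cut is taken and the new state is $\delta(\cdot, a)$. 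The divergence bit is updated to $b \vee [c_1 \neq c_2]$. An accepting state is $(p, q, 1)$ with $p, q \in F$, enforcing that both decompositions end and differ at some cut position.

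Correctness of $M$ is the main verification: any two decompositions of $u = a_1 \dots a_n$ determine their cut sets $C_1, C_2 \subseteq \{0, \dots, n\}$, and since $A$ is deterministic, the pair of $A$-state sequences is uniquely determined by $(C_1, C_2)$ and $u$; these sequences are exactly what the pair $(p,q)$ tracks. Conversely, any accepting run of $M$ extracts two valid decompositions whose cut sequences differ at some position, because the bit $b$ flips to $1$ only when the $c_i$'s disagree. The main subtlety I expect is ensuring that the ``different decompositions differ in their cut sets'' property is captured correctly; in particular one must check that tracking a single divergence bit is enough even if the two runs re-synchronise in $Q$ later, but this is fine because $b$ is monotone and we only need to witness distinctness of the tuples of cut points.

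Finally, $\epsilon \notin L(M)$ (since the initial bit is $0$), so $\epsilon \in L^* \setminus L(M)$, and by construction a non-empty word in $L^*$ is in $L^\#$ iff it is not in $L(M)$. Hence $L^\# = L^* \setminus L(M)$, which is effectively regular as the set difference of two effectively regular languages. \qed
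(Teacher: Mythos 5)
Your proof is correct and follows essentially the same route as the paper's: both build an automaton for the words admitting two distinct decompositions by running two copies of an automaton for $(L\setminus\{\epsilon\})^*$ in parallel with a single monotone bit recording a divergence in cut positions, and then complement. The only (cosmetic) differences are that the paper marks cuts with $\epsilon$-transitions where you guess cut bits on letter transitions, and your explicit intersection with $L^*$ at the end is slightly more careful than the paper's bare complementation.
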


\begin{proof}
Let    $A = (Q,q_0,F,\Delta)$ be a DFA recognising $L\setminus\{\epsilon\}$.  First, from $A$, one can define the
    non-deterministic automaton (with $\epsilon$-transitions) $B = (Q,\{q_0,q'_0\},F\cup
    \{q'_0\},\Delta\cup \{ (q_f,\epsilon,q_0)\mid
    q_f\in F\})$ which accepts $\epsilon$ and all words that are decomposable into
    nonempty factors of $L$. Then, by taking the product of $B$ with itself,
    and by adding a bit of memory in this product to remember whether 
    some $\epsilon$-transition was fired in parallel of a
    non-$\epsilon$ one (which implies in that case, if the two
    simulated runs of $B$ terminates, that there are two
    decompositions), one obtains an automaton which accepts all words
    which can be non-uniquely decomposed. It suffices then to
    complement this automaton, concluding the proof. \qed
\end{proof}

\subsection{The domain of an i-expression is regular: proof of \Proposition{prop:reg}}

\begin{proof}
    The domain of a WA is regular, and defined by its underlying
    finite automaton. For an expression $\phi(E_1,\dots,E_n)$, by
    induction, $\dom(E_i)$ is regular for all $i$, and by definition, 
    $\dom(\phi(E_1,\dots,E_n) = \bigcap_i \dom(E_i)$ which is regular
    since regular languages are effectively closed under
    intersection.     Consider now the case of an expression of the form $E^\s$. By
    induction hypothesis, $\dom(E)$ is regular, and since $\dom(E^\s) =
    \dom(E)^\s$, by \Proposition{prop:regdec} we get the result. \qed
    % This could be easily proven by defining
    % some MSO formula. Let $\phi_M$ is an MSO formula defining $M$, and
    % for all $\phi_M(x,y)$ holds iff the word from position $x$ to
    % position $y$ belongs to $M$ (obtained by guarding all the
    % quantifiers $\forall z$ in $\phi_M$ by $x\leq z\leq y$ and
    % $\forall Z$ by $\forall z\in Z x\leq z\leq y$)
    % then, let $\phi(X)$ be the formula
    % $$
    % \phi(X) \equiv \text{first},\text{last}\in X\forall x,y\in X\ (x < y \wedge \neg \exists z\in
    % X, x<z<y)\rightarrow \phi_M(x,y)
    % $$
    % where first,last are constants representing the first and last
    % positions of the word. This formula states that the word can be decomposed into factors that
    % belongs to $M$. Then, $\phi_{M^\#}$ defines $M^\#$:
    % $$
    % \phi_{M^\#}\equiv \phi_{\text{empty}}\vee \exists X\phi(X)\wedge \forall Y\neq X\ \neg \phi(Y)
    % $$
\end{proof}

\subsubsection{Undecidability: Proof of \Theorem{thm:undec}}\label{sec:undec}

\begin{proof}
By \Remark{remark:zeroEmptiness}, we only need to prove the undecidability of the $0$-emptiness problem to obtain the undecidability of the universality and the comparison problems. We prove this undecidability by providing a reduction from the halting problem of two-counter machines.

Let $M=\langle \Sigma, \{x,y\}, Q, q_{\it init}, F, \Delta, \tau, \lambda,\alpha \rangle$ be a deterministic two-counter machine, let $\nu_0$ be such that $\nu_0(x)=0$ and $\nu_0(y)=0$, and w.l.o.g., let us assume that the states in $F$ are the halting states of $M$.  We reduce the problem of deciding if the unique\footnote{The computation is unique as $M$ is deterministic.} computation of $M$ that starts from configuration $(q_{\it init},\nu_0)$ reaches or not an accepting state (from which it halts) to the problem of deciding if for some effectively constructible iterated-sum expression $E$, there exists a word $w \in L(E)$ such that $E(w) \geq 0$. 

Before defining $E$, we first explain how we encode computations of $M$ into words over the alphabet $\Gamma=Q \cup \{ \vdash,\dashv, \triangleright,\triangleleft, a,b \} \cup \Delta$. Let $\rho$

$$(q_0,v_0) \delta_0 (q_1,v_1) \delta_1 \dots (q_{n-1},v_{n-1}) \delta_n (q_n,v_n)$$ 

\noindent
be a computation of $M$, we encode it by the following word over $\Gamma$:

$$
	\begin{array}{l}
			\vdash q_0 a^{v_0(x)} b^{v_0(y)} \triangleleft \delta_0  \triangleright q_1 a^{v_1(x)} b^{v_1(y)} \dashv \vdash q_1 a^{v_1(x)} b^{v_1(y)}  \triangleleft \delta_1 \triangleright q_2 a^{v_2(x)} b^{v_2(y)} \dashv \dots \\
			\vdash q_{n-1} a^{v_{n-1}(x)} b^{v_{n-1}(y)} \triangleleft \delta_{n-1}  \triangleright q_n a^{v_n(x)} b^{v_n(y)} \dashv
	\end{array}
$$

So a word $w \in \Gamma^*$ encodes of the halting computation of $M$ from $\nu_0$ if the following conditions holds:
  \begin{enumerate}
  	\item the word $w$ must be in the language defined by the following regular expression $(\vdash Q  a^*  b^* \triangleleft \Delta  \triangleright Q a^*  b^* \dashv)^*$;
	\item the first element of $Q$ in $w$ is equal to $q_{\it init}$, i.e. the computation is starting in the initial state of $M$;
	\item the last element of $Q$ in $w$ belongs to set $F$, i.e. the computation reaches an accepting state of $M$;
	\item the first element of $Q$ in $w$ is directly followed by an element in $\Delta$, i.e. the computation starts from the valuation $\nu_0$;
	\item for each factor of the form $\vdash  q_1  a^{n_1}  b^{m_1}  \delta  q_2  a^{n_2}  b^{m_2} \dashv$:
		\begin{enumerate}	
			\item $\delta$ is a transition from $q_1$ to $q_2$;
			\item $(n_1,m_1) \models \tau(\delta)$, i.e. the guard of $\delta$ is satisfied;
			\item $((n_1,m_1),(n_2,m_2)) \models \lambda(\delta)$, i.e. the updates of $\delta$ are correctly realised;
		\end{enumerate}
	\item for each factor of the form $\triangleright q_1 a^{n_1} b^{m_1} \vdash \dashv  q_2 a^{n_2} b^{m_2} \triangleleft$, it is the case that:
		\begin{enumerate}
			\item $q_1=q_2$, i.e. the control state is preserved from one configuration encoding to the next one;
			\item $n_1=n_2$ and $m_1=m_2$, i.e. valuations of counters are preserved from one configuration encoding to the next one.
		\end{enumerate}
  \end{enumerate}

Let us now explain how we can construct an expression $E$ that maps a word $w$ to value $0$ if and only if this word is the encoding of an halting computation of $M$ from valuation $\nu_0$, and to a negative value otherwise.

First, we note that this can be done by providing for each of the conditions an expression which returns $0$ when the condition is satisfied and a negative value otherwise. Then we simply need to combine those expressions with the $\min$ operator: the $\min$ expression will be equal to $0$ only if all the expressions are equal to $0$, and it will be negative otherwise.

Second, we note that all the constraints in the list above are regular constraints with the exception of $5(c)$ and $6(b)$. Being regular, all the other constraints can be directly encoded as deterministic WA and so trivially as i-expressions. We concentrate here on the constraints that require the use of iteration, and we detail the construction for constraint $5(c)$ as the construction for $6(b)$ is similar and simpler.

For constraints $5(c)$, we construct an i-expression $E_{5(c)}$ that decomposes the word uniquely as factors of the form $\vdash q_1  a^{n_1}  b^{m_1}  \delta  q_2  a^{n_2}  b^{m_2} \dashv$. On each factor, we evaluate an s-expression whose value is nonnegative if and only if the update defined by $\delta$ is correctly realised in the encoding. To show how to achieve this, assume for the illustration that $\delta$ is incrementing the counter $x$ et let us show how this can be checked. The expression that we construct in this case computes the minimum of $1+n_1-n_2$ and $-1-n_1+n_2$. It should be clear that this minimum is equal to $0$ if and only if $n_2=n_1+1$ (i.e. when the increment is correctly realised). In turn, it is a simple exercise to construct a deterministic weighted automaton to compute $n_1+1-n_2$ and one to compute $n_2-n_1-1$. All the different updates can be treated similarly. Now, the i-expression $E_{5(c)}$ simply take the sum of all the values obtained locally on all the factors of the decomposition. This sum is nonnegative if and only if all the values computed locally are nonnegative. 
\qed
\end{proof}

\begin{remark}[Iteration of $\max$]\label{rmk:undec}
    Another option would be to define the semantics of $E^\s$ has an
    iteration of $\max$, i.e. $\dom(E^\s)$ is still the set of words $u$
    that are uniquely decomposed into $u_1\dots u_n$ with
    $u_i\in\dom(E)$, but $\sem{E}(u) = \max \{ \sem{E}(u_i)\mid
    i=1,\dots,n\}$. This variant is again undecidable with respect to
    emptiness, universality and comparisons.  Indeed, an careful inspection of the proof above show that in constraint $E_{5(c)}$ and $E_{6(b)}$, we can replace the iteration of sum by iteration of $\min$, or equivalently, if we first reverse the sign of all expression, by the iteration of $\max$. In that case the $\max$ will be nonpositive if and only if the two-counter machine has an halting computation.
\end{remark}

%%% Local Variables:
%%% mode: latex
%%% TeX-master: t
%%% End:

\section{Proof of \Section{sec:chop}}
	\subsection{Unambiguity of Generalised Finite Automata} \label{apx:ambiguity}

\begin{proposition} \label{prop:ambiguity}
	We can decide whether a generalised finite automaton
	is unambiguous in polynomial time.
\end{proposition}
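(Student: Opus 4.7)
The plan is to adapt the classical product-automaton technique for deciding NFA unambiguity to the generalised setting, where each transition reads a word from some regular language rather than a single letter.

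First, I will construct a nondeterministic product automaton $B$ over $\Sigma$ that simulates two runs of $A$ in parallel on a common input word. A state of $B$ tracks, for each simulated run $i \in \{1,2\}$, the current macro-transition being processed --- that is, its source macro-state $p_i \in Q$, its target macro-state $q_i \in Q$, and the current state $s_i$ of the NFA recognising $\Delta(p_i,q_i)$ --- together with a Boolean flag $f \in \{0,1\}$ recording whether the two runs have already diverged at the macro level. Initial states take $p_1,p_2 \in I$ together with arbitrarily guessed targets $q_1,q_2$ and the initial states of the corresponding inner NFAs, with $f$ initialised to $1$ iff $(p_1,q_1) \neq (p_2,q_2)$.

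Next, I will define the transitions of $B$ so that on each input letter both inner NFAs advance simultaneously, and whenever an inner NFA reaches an accepting state the corresponding simulated run may optionally close its current macro-transition and nondeterministically start a new one (by promoting $q_i$ to the new source and guessing a fresh target). The flag is set to $1$ as soon as the two runs desynchronise at the macro level: either exactly one of them closes its macro-transition (different decomposition) or they both close simultaneously but guess different new source/target macro-states. A state of $B$ is declared accepting iff $f = 1$, both inner NFAs are in an accepting state, and both current targets $q_1,q_2$ lie in $F$. By construction, $L(B) \neq \varnothing$ iff $A$ admits two distinct accepting runs on some common word, i.e.\ iff $A$ is ambiguous. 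The size of $B$ is polynomial in $|Q|$ and in the total size of the inner NFAs representing the $\Delta(p,q)$, so its emptiness can be decided in polynomial time by standard graph reachability.

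The main obstacle will be to formalise precisely when the divergence flag must be raised. Two runs that visit the same sequence of macro-states with the same decomposition but use different internal paths in some $N_{p,q}$ correspond to the very same run of the generalised automaton and must \emph{not} count as a witness of ambiguity; only a difference at the macro level --- either in the sequence of visited states $(q_0,\dots,q_n)$ or in the position of a cut between two consecutive factors $u_i,u_{i+1}$ --- should raise $f$. A secondary technical point is the handling of $\epsilon \in \Delta(p,q)$, which I would treat along the lines of the standard $\epsilon$-closure construction for NFAs in order to avoid spurious infinite runs while still detecting ambiguity induced by $\epsilon$-cycles.
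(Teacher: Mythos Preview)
Your approach is correct and essentially self-contained, but it differs from the paper's route. The paper does not build the product automaton directly; instead it constructs in polynomial time a finite-state \emph{transducer} $T$ that, on input $u\in L(A)$, nondeterministically outputs any macro-run $q_0u_1q_1\dots u_nq_n$ of $A$ on $u$ (written as a word over $Q\cup\Sigma$). Then $A$ is unambiguous iff $T$ is functional, and functionality of finite transducers is decidable in \textsc{PTime} by a classical result (B\'eal--Carton--Prieur--Sakarovitch). So the paper reduces to a known black box, while you inline what that black box would do: the squaring construction underlying the functionality test is precisely your pair-of-runs automaton with a divergence flag.

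What each buys: the paper's reduction is shorter and delegates all the delicate bookkeeping (when do two output words first differ?) to the cited functionality algorithm. Your direct construction is more elementary and avoids importing transducer theory, at the cost of having to spell out exactly when to raise the flag --- which you do correctly, distinguishing macro-level divergence (different cut position or different guessed target) from irrelevant divergence of inner-NFA paths. Your remark that differing inner paths within the same $\Delta(p,q)$ must \emph{not} raise the flag is exactly the point that makes the transducer formulation attractive: there, two inner paths produce the same output word, so functionality is unaffected automatically. Your treatment of $\varepsilon\in\Delta(p,q)$ via closure is adequate; in the transducer formulation this is again absorbed for free, since $\varepsilon$-transitions in $T$ are standard.
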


\begin{proof}
Let $A = (Q,I,F,\Delta)$ be a generalised finite automaton whose
	languages $\Delta(p,q)$ are given by 
	NFA $A_{p,q} = (Q_{p,q}, I_{p,q}, F_{p,q}, \Delta_{p,q})$.
        We construct in polynomial time a finite transducer $T = (P, I, \{p_f\},
        \Delta')$ which, given a word $u\in L(A)$, outputs any run of
        $A$ on $u$ (seen as a word over the alphabet $Q\cup \Sigma$). Clearly, $T$ defines a function iff $A$ is
        unambiguous. We refer the reader for instance to~\cite{DBLP:journals/tcs/BealCPS03} for a definition of finite
        transducers, but let us recall that it is an automaton
        extended with outputs. In particular, the transition relation
        $\Delta'$ has type $\Delta' \subseteq P\times \Sigma^*\times
        \Gamma^*\times P$, where $\Gamma$ is the output
        alphabet. Transitions are denoted by $p\xrightarrow{u|v} q$
        where $u$ is the input word and $v$ the output word. 
        In general, a transducer defines a binary relation
        from input to output words, but functionality is decidable in \textsc{Ptime} for
        finite transducers (see for instance~\cite{DBLP:journals/tcs/BealCPS03}).

        To construct $T$, the idea is simple. We take $\Gamma = Q\cup
        \Sigma$ has output alphabet. Then, $P = \{p_f\}\cup Q \uplus
        \biguplus_{p,q\in Q} Q_{p,q}$ where $p_f$ is a new state. The
        transition function contains the following rules:
        \begin{itemize}
          \item $q\xrightarrow{\epsilon|q} s_0$ for all $q\in
            Q$ and $s_0\in \bigcup_{q'\in Q} I_{q,q'}$: when entering
            a new subautomaton $A_{q,q'}$, the transducer starts by
            writing the state $q$ on the output. 
          \item $s\xrightarrow{\sigma|\sigma} s'$ for all
            $(s,\sigma,s')\in \bigcup_{p,q\in Q}\Delta_{p,q}$: inside
            a subautomaton $A_{o,q}$, only symbols from $\Sigma$ are
            written on the output. 
          \item $s\xrightarrow{\sigma|\sigma} q$ if there are
            $p,q\in Q$ and $s'\in F_{p,q}$ such that $(s,\sigma,s')\in
            \Delta_{p,q}$: when reaching an accepting state in a
            subautomaton, the transducer make exit the subautomaton. 
          \item $q\xrightarrow{\epsilon|q} p_f$ for all $q\in F$: the
            transducer write the last seen accepting state (when the
            end of the word is reached) on the output and goes to
            $p_f$, which is accepting and is a deadlock (no outgoing
            transitions). 
        \end{itemize}
\qed
\end{proof}

\subsection{Synchronisation of Weighted Chop Automata (\Proposition{prop:synchronisation})} \label{apx:synchronisation}

First, we define the size of a WCA $C = (A,\lambda)$ where $A = (Q, I, F, \Delta)$.
It is the natural $|C|$ define as $\sum_{p, q \in Q} n_{p,q} + |Q|^2\times \ell \times \gamma \times k$ where
$n_{p,q}$ is the number of states of the NFA recognising $\Delta(p, q)$,
$\ell > 0$ is the maximal number of arguments in functionnal Presburger formula appearing in $\lambda$ plus one,
$\gamma > 0$ is the maximal size of the sub-WCA occurring in the range of $\lambda$ and
$k > 0$ is the maximal size of the Presburger formula appearing $\lambda$.

\begin{proof}[of \Proposition{prop:synchronisation}]
    We show how to check in \textsc{PTime} that two chop automata $C_1,C_2$, or two
    expressions are synchronised. 
    The algorithm is recursive. If $C_1$ and $C_2$ are both
    unambiguous WA, then the algorithm returns 1. If one of them is
    an unambiguous WA and the other not, then the algorithm returns
    0.

    Now, consider the case where we have chop automata $C_1 = (A_1,\lambda_1)$ and $C_2 = (A_2,\lambda_2)$. We first
    show how to decide the following (weaker) property:
    for all $u\in L(A_1)\cap L(A_2)$, if $\text{dec}_{C_1}(u) = (u_1,E_1),\dots,(u_n,E_n)$
    and $\text{dec}_{C_2}(u) = (v_1,F_1),\dots,(v_{m},F_{m})$,
    then $n=m$ and for all $i \in [1..n]$, we have $u_i=v_i$.

    One again the idea is to construct a transducer $T$, which defines a
    function from $\Sigma^*$ to $2^{\Sigma^*}$, whose domain is
    $L(A_1)\cap L(A_2)$. To $u\in \dom(T)$, if $\text{dec}_{C_1}(u) = (u_1,E_1),\dots,(u_n,E_n)$
    and $\text{dec}_{C_2}(u) = (v_1,F_1),\dots,(v_{m},F_{m})$, then
    $T$ returns the set of words 
    $$
    \{ a^{|u_1|}\# a^{|u_2|} \# \dots \# a^{|u_n|}, a^{|v_1|}\# a^{|v_2|} \# \dots \# a^{|v_n|}\}
    $$
    Since $u_1\dots u_n = v_1\dots v_n = u$, the latter set is a
    singleton iff the two decompositions are equal. Hence, it suffices
    to decide whether $T$ defines a function (i.e. is functional),
    which can be done in \textsc{PTime} in the size of $T$ (see for instance~\cite{DBLP:journals/tcs/BealCPS03}).

    It remains to show how to construct $T$. $T$ is the disjoint union
    of two transducers $T_1$ and $T_2$, which respectively output 
    $a^{|u_1|}\# a^{|u_2|} \# \dots \# a^{|u_n|}$ and $a^{|v_1|}\#
    a^{|v_2|} \# \dots \# a^{|v_n|}$. Consider $T_1$. It will simulate
    the behaviour of $C_1$ in the following way. If $A_1 =
    (Q,q_0,F,\Delta)$ and $(A_{p,q})_{p,q}$ are NFA that recognise
    $\Delta(p,q)$, then $T_1$ start in a copy of $A_{q_0,p}$, where
    $p$ is non-deterministically chosen. Whenever a transition
    $(\alpha,\sigma,\beta)$ of
    $A_{q_0,p}$ is fired, $T_1$ makes several choices: either $p$ is
    not final in $A_{q_0,p}$ and $T_1$ moves to state $\beta$ and write
    $a$ on the output, or $p$ is final, in that case $T_1$ may move to
    state $\beta$ while writing $a$ on the output, or move to the
    initial state of some automaton $A_{p,q}$ for some
    non-deterministically chosen state $q$, and write $a\#$ on the
    output. Then, its behaviour is the same as in $A_{q_0,p}$, but now
    in $A_{p,q}$, etc.. Clearly, $T_1$ has a polynomial size in the size of
    $C_1$.

    Now, we also want to check that $E_i\sync F_i$ for all
    subexpressions $E_i,F_i$ that occur at the same position in some
    decomposition. Formally, let $S$ be the set of expressions $E,F$
    such that there exists $u\in L(A_1)\cap L(A_2)$, such that 
    $\text{dec}_{C_1}(u) = (u_1,E_1),\dots,(u_n,E_n)$
    and $\text{dec}_{C_2}(u) = (v_1,F_1),\dots,(v_{m},F_{m})$ and
    there exists $i$ such that $E_i = E$ and $F_i = F$. We will show
    that $S$ can be computed in \textsc{PTime}. Once $S$ has been computed,
    for every pair $(\phi(C_1,\dots,C_n), \phi'(C'_1,\dots,C'_m))\in
    S$, it suffices to call the algorithm on each pair $(C_i, C'_j)$
    for all $i,j$.

    It remains to show that $S$ can be
    computed in polynomial time. Again, for all expressions $E,F$
    occurring in the range of $\lambda_1$ and $\lambda_2$
    respectively, one could define some automaton $A_{E,F}$ (of
    polynomial size) which
    accepts a word $u\in L(A_1)\cap L(A_2)$ iff $E,F$ occurs together
    in the respective decomposition of $u$, i.e. $\text{dec}_{C_1}(u)$
    and $\text{dec}_{C_1}(u)$. Then, for all these pairs, if
    $L(A_{E,F})\neq \varnothing$ (which can be checked in \textsc{PTime}), then
    we add $(E,F)$ to $S$. To construct $A_{E,F}$, the idea is to
    simulate, via a product construction,  an execution of
    $C_1$ and an execution of $C_2$ in parallel (by also simulating
    the smaller automata defining the regular languages on the
    transitions of $C_1$ and $C_2$). In this product construction, one
    bit of memory is used to remember whether a pair of states $(p_1,p_2)$ of
    $C_1$ and $C_2$ respectively, such that $E = \lambda_1(p_1)$ and
    $F = \lambda_2(p_2)$, was reached. The automaton accepts if such a
    pair was found, and the simulation of the two runs accept (meaning
    that $u\in L(A_1)\cap L(A_2)$). \qed

\end{proof}

\subsection{Closure properties of WCA: Proof of \Proposition{prop:closureChop}} %seconde 

Before proving the proposition, we need to intermediate results. 

\begin{lemma}[Unambiguous Concatenation]\label{lem:unconc}
    Given two regular languages $L_1,L_2$, if $L_1\splitsum L_2$ denotes
    the set of words $u$ which can be uniquely decomposed into
    $u_1u_2$ with $u_i\in L_i$, then there exists $2n$ regular languages
    $N_i,M_i$ such that $N_i\subseteq L_1$ and $M_i\subseteq L_2$, and
      $L_1\splitsum L_2 = \bigcup_{i=1}^n N_iM_i$.
\end{lemma}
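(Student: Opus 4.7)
Fix DFAs $A_1 = (Q_1, q_0^1, F_1, \delta_1)$ for $L_1$ and $A_2 = (Q_2, q_0^2, F_2, \delta_2)$ for $L_2$. The overall plan is to index a potential decomposition $u = u_1 u_2$ by a finite ``profile'' of $u_1$ that records (i) the state of $A_1$ reached, and (ii) all of $A_2$'s ``latent runs'' that would be triggered if we had already chosen an earlier split point. For each profile, the set of valid $u_1$'s is regular (by construction), and the set of valid $u_2$'s completing the split uniquely is regular (by a few forward-checking conditions). Since there are finitely many profiles, this yields the desired finite union.

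Formally, to every $u_1 \in \Sigma^*$ I associate the profile $\pi(u_1) = (p, S) \in Q_1 \times 2^{Q_2}$ where $p = \delta_1^{*}(q_0^1, u_1)$ and
\[
  S \;=\; \bigl\{ \delta_2^{*}(q_0^2, u_1[j+1..|u_1|]) \;:\; 0 \leq j < |u_1|,\ \delta_1^{*}(q_0^1, u_1[1..j]) \in F_1 \bigr\}.
\]
The map $u_1 \mapsto \pi(u_1)$ is computable by the deterministic product $A_1 \times \mathcal{P}(A_2)$ that tracks $p$ via $\delta_1$ and updates $S$ by $S \mapsto \{\delta_2(q,\sigma) : q \in S\} \cup \{\delta_2(q_0^2, \sigma)\,\text{if new $p$-state came from a final state}\}$. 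Hence for every profile $(p,S)$ the language $N_{p,S} = \pi^{-1}(p,S)$ is regular, the $N_{p,S}$ are pairwise disjoint, and $N_{p,S} \subseteq L_1$ whenever $p \in F_1$.

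For each such $(p,S)$ with $p \in F_1$ I then define $M_{p,S}$ as the set of $u_2 \in \Sigma^*$ satisfying simultaneously (a) $u_2 \in L_2$; (b) for every $q \in S$, $\delta_2^{*}(q, u_2) \notin F_2$ (no alternative split to the left of the chosen one); (c) for every $k \geq 1$ such that $\delta_1^{*}(p, u_2[1..k]) \in F_1$, $u_2[k+1..|u_2|] \notin L_2$ (no alternative split to the right). Each of (a), (b), (c) is a regular condition on $u_2$ built from $A_1, A_2$ by standard product/complement constructions, so $M_{p,S}$ is regular; and (a) guarantees $M_{p,S} \subseteq L_2$.

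It remains to verify $L_1 \splitsum L_2 = \bigcup_{p \in F_1,\, S \subseteq Q_2} N_{p,S}\, M_{p,S}$. For ``$\subseteq$'': given $u \in L_1 \splitsum L_2$ with unique split $u = u_1 u_2$, set $(p,S) = \pi(u_1)$; uniqueness translates exactly into (b) (ruling out splits strictly inside $u_1$) and (c) (ruling out splits strictly inside $u_2$), so $u_2 \in M_{p,S}$. For ``$\supseteq$'': any $u = u_1 u_2$ with $u_1 \in N_{p,S}$ and $u_2 \in M_{p,S}$ lies in $L_1 L_2$ by construction, and any alternative decomposition would have its split either strictly before $|u_1|$ (violating (b)) or strictly after $|u_1|$ (violating (c)), so the decomposition is unique. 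The main subtlety lies in the correct bookkeeping of $S$ (tracking all earlier latent $A_2$-states) and in splitting the ``no alternative split'' constraint into one piece (b) depending only on $(S, u_2)$ and one piece (c) depending only on $(p, u_2)$, so that both $N_{p,S}$ and $M_{p,S}$ are themselves regular.
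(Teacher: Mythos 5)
Your proof is correct. Both directions of the equality go through: the profile $(p,S)$ captures exactly the information about $u_1$ that is needed to express the uniqueness of the split as a condition on $u_2$ alone, condition (b) rules out alternative splits strictly inside $u_1$ (the set $S$ is precisely the set of $A_2$-states reached from $q_0^2$ on the suffixes $u_1[j{+}1..]$ for the candidate earlier split points $j$, and determinism of $A_2$ makes the set-level condition equivalent to the per-split condition), and condition (c) rules out alternative splits strictly inside $u_2$; each of (a), (b), (c) is regular, and $N_{p,S}$ is regular via the $A_1\times\mathcal{P}(A_2)$ product. The paper reaches the same conclusion by a different route: it first builds an automaton $B$ for $L_1\splitsum L_2$ itself, as the complement of a product automaton that detects the existence of two distinct factorisations, then forms a three-way product $B\times(A_1A_2)$ in which the split point is marked by a distinguished copy $\overline{I_2}$ of $A_2$'s initial states, and finally writes $L_1\splitsum L_2$ as the union, over states $(q,\overline{q_2})$ of that product, of the left language times the right language of that state. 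So the paper pushes all the uniqueness checking into the automaton $B$ and then invokes the generic ``cut an automaton at a distinguished midpoint state'' decomposition, whereas you never construct an automaton for $L_1\splitsum L_2$ and instead distribute the uniqueness constraints explicitly between the profile (for the left factor) and conditions (b)--(c) (for the right factor). Your version is somewhat more self-contained and makes the subset inclusions $N_{p,S}\subseteq L_1$, $M_{p,S}\subseteq L_2$ immediate by construction, at the cost of an exponential number of pieces (one per subset $S\subseteq Q_2$) and of having to argue the bookkeeping of $S$ carefully; the paper's version keeps the index set down to pairs of states of its product automaton and reuses standard left/right-language machinery. Both are valid proofs of the lemma as stated.
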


\begin{proof}
    Let $A_1,A_2$ be two NFA recognising $L_1,L_2$ respectively. We
    can construct an automaton $A$ that accepts the words in $L_1L_2$
    which admits at least two different factorisations with respect to
    $L_1,L_2$. It suffices to simulate two runs of $A_1$ in parallel,
    whenever a copy of $A_1$ goes to an accepting state, this copy
    either stays in $A_1$ or, thanks to     an added 
    $\epsilon$-transition, goes to some initial state of
    $A_2$. We also add one bit of memory to check that
    $\epsilon$-transitions have been taken at two different moments in
    the two simulated runs. The accepting states are states
    $(q_2,q'_2,1)$ where $q_2,q'_2$ are accepting states of $A_2$. By
    complementing $A$,  we obtain an automaton, say $B$, recognising
    $L_1\splitsum L_2$.

    Now, we again make a kind of product between $B$, $A_1$ and
    $A_2$. If $Q$ are the states of $B$, $Q_1$ of $A_1$, $Q_2$ of
    $A_2$ (with initial states $I_2$), the set of states of this
    product is $Q\times (Q_1\uplus Q_2\uplus \overline{I_2})$ where
    $\overline{I_2}$ is a copy of $I_2$. $B$ initially 
    runs in parallel of $A_1$ and, when $A_1$ enters an
    accepting state, i.e. the product is in state $(q,q_1)$ where
    $q_1\in F_1$, then we add some $\epsilon$-transition 
    to any state $(q,\overline{q_2})$ where $q_2\in I_2$. Then, from
    states of this form, the product continues its simulation of $B$
    and simulates in parallel $A_2$ (in normal states $Q_2$, so that
    the copy $\overline{I_2}$ is only met once, when the product
    switches to $A_2$).     Let denote by $B\times (A_1A_2)$ this
    product. We set its accepting states to be any pair $(q,q_2)$ or $(q,\overline{q_2})$
    where $q_2$ is accepting and $q$ is accepting.

    For all states $(q,p)$ of $B\times (A_1A_2)$, we denote by
    $L_{q,p}$ and $R_{q,p}$ the left language of $(q,p)$ and the right
    language of $(q,p)$ respectively. We claim that
    $$
    L_1\splitsum L_2 = L(B) = \bigcup_{(q,\overline{q_2})\in Q\times
      \overline{I_2}} L_{q,q_2}R_{q,q_2}
    $$
    Clearly, $\bigcup_{(q,\overline{q_2})\in Q\times
      \overline{I_2}} L_{q,q_2}R_{q,q_2}\subseteq L(B)$ since the
    product also checks that the input words are accepted by $B$.
    Conversely, if $u\in L(B)$, then it is uniquely decomposed into 
    $u_1u_2$ where $u_i\in L(A_i)$. From an accepting run $r = q_1\dots q_{n+1}
    p_1\dots p_{m+1}$ on $u$ (where $n = |u_1|$ and $m = |u_2|$), an
    accepting 
    run $r_1 = \alpha_1\dots \alpha_{n+1}$ of $A_1$ on $u_1$, and a
    accepting run $r_2
    = \beta_1\dots \beta_{m+1}$ of $A_2$, we can construct the
    following accepting run of $B\times (A_1A_2)$:
    $(q_1,\alpha_1)\dots (q_{n+1},\alpha_{n+1})
    (p_1,\overline{\beta_1})(p_2,\beta_2)\dots (p_{m+1},\beta_{m+1})$.\qed
\end{proof}

\begin{proposition}[Domain regularity]\label{prop:domreg}
    The domain of any gWCA is (effectively) regular.
\end{proposition}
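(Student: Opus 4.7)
I would proceed by structural induction on the nesting depth of the (generalised) WCA. The base case is immediate: a $0$-WCA is an unambiguous weighted automaton, and its domain is the language of the underlying finite automaton, which is regular by definition.

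For the inductive step, let $C = (A, \lambda)$ be an $n$-WCA with $A = (Q, I, F, \Delta)$ an unambiguous generalised finite automaton. For every pair $(p,q) \in Q^2$ with $\lambda(p,q) = \phi(C_1, \ldots, C_m)$, each $C_j$ is an $n'$-WCA with $n' < n$, so by the induction hypothesis each $\dom(C_j)$ is (effectively) regular. Since regular languages are (effectively) closed under finite intersection, the set $D_{p,q} = \bigcap_{j=1}^m \dom(C_j)$ is regular. The language $\Delta(p,q)$ is regular by assumption (finitely represented by an NFA), so $\Delta'(p,q) = \Delta(p,q) \cap D_{p,q}$ is regular as well, and an NFA for it can be built by a standard product construction in polynomial time from the representations of $\Delta(p,q)$ and of the domains of the $C_j$'s.

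The plan is then to observe that, directly from the semantics of WCA given in Section~4, $\dom(C)$ coincides with the language of the generalised finite automaton $A' = (Q, I, F, \Delta')$, where the labels on transitions have been tightened to $\Delta'(p,q)$. Indeed, a word $u$ lies in $\dom(C)$ iff it admits an accepting run $q_0 u_1 \ldots q_{n-1} u_n q_n$ in $A$ with $u_i \in \Delta(q_{i-1}, q_i) \cap D_{q_{i-1},q_i} = \Delta'(q_{i-1},q_i)$ for every $i$, which is exactly the definition of being accepted by $A'$. It then remains to invoke the standard fact that the language of a generalised finite automaton with regular labels is itself regular and effectively computable (e.g. by substituting NFA for each transition label, or by applying Kleene's theorem to the regular expression obtained from $A'$).

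The only subtlety to keep in mind is that we are not claiming that $A'$ remains unambiguous, nor that the restricted labels $\Delta'(p,q)$ are given by NFA of any particular form; we only need regularity of $\dom(C)$. Unambiguity of $A$ was used in the \emph{semantics} of WCA but plays no role in this domain-regularity argument, so no extra work is needed there. The main practical point to check is that the construction above is effective and terminates, which is guaranteed by the well-foundedness of the $n$-indexing in Definition~\ref{def:chop}.
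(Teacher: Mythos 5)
Your proof is correct and follows essentially the same route as the paper: induct on the nesting level, replace each transition label $\Delta(p,q)$ by its intersection with the (inductively regular) domains of the sub-automata appearing in $\lambda(p,q)$, and observe that $\dom(C)$ is the language of the resulting generalised finite automaton. The only cosmetic difference is that for gWCA the label $\lambda(p,q)$ is a tuple of expressions, so one intersects over all components of the tuple as well, which changes nothing in the argument.
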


\begin{proof}
Let $C = (A, \lambda)$ be an $m$-gWCA where $A = (Q, I, F, \Delta)$.
We show by induction on $m$ that $\dom(C)$ is (effectively) regular.
If $m=0$ then $C$ is an unambiguous WA and its domain is (effectively) regular
(given by its underlying (input) automaton).
Otherwise, assume by induction hypothesis that for all $k<m$ the
domain of any $k$-gWCA is (effectively) regular. We construct a
generalized finite automaton to recognise $\dom(C)$. 
Let $p, q \in Q$ and $\lambda(p, q) = (v_1, \dots, v_n)$.
For all $i\in\{1,\dots,n\}$, we define $\dom(v_i) = \bigcap_{j=1}^{k}
\dom(C_j)$ if $v_i = \phi(C_1, \dots, C_k)$.
By induction hypothesis each $\dom(v_i)$ is (effectively) regular
since all $C_j$ is a $m'$-gWCA with $m' < m$.
Finally, we define $\dom(\lambda(p, q))$ as $\bigcap_{i=1}^{n}
\dom(v_i)$. By definition of $\dom(C)$, the domain of $C$ is the language of the generalized finite
automaton $A' = (Q, I, F, \Delta')$ where for all $q, p\in Q$ we have
$\Delta'(p, q) = \Delta(p, q) \cap \dom(\lambda(p, q))$.\qed
\end{proof}

\subsubsection{Proof of \Proposition{prop:closureChop}} %second

\begin{proof}
	\newcommand{\drawSplit}{
	\begin{tikzpicture}[>=stealth, node distance=3cm, thick]
		\renewcommand{\id}[1]{}
		
		% ~~~~~~~~~~~~~~~~~~~~~~~~~~~~~ STYLES ~~~~~~~~~~~~~~~~~~~~~~~~~~~~~ %
		
		\tikzstyle{every state} = [
			minimum size=.7cm
		]
		
		\tikzstyle{accepting}=[
			accepting by arrow,
			accepting text=,
			accepting where= right
		]
		
		\tikzstyle{initial}=[
			initial by arrow,
			initial text=,
			initial where= left
		]
		
		% ~~~~~~~~~~~~~~~~~~~~~~~~~~~~~ NODES ~~~~~~~~~~~~~~~~~~~~~~~~~~~~~ %
		
		\node[state, initial] (0) {\id{0}};
		\node[state] (1) [right of = 0] {\id{1}};
		\node[state, accepting] (2) [right of = 1] {\id{2}};
		
		% ~~~~~~~~~~~~~~~~~~~~~~~~~~~~~~ ARCS ~~~~~~~~~~~~~~~~~~~~~~~~~~~~~~ %
		
		\path[->]
			(0) edge node [above] {$\begin{array}{l|l} N_i & \phi_{id}(C_1) \end{array}$} (1)
			(1) edge node [above] {$\begin{array}{l|l} M_i & \phi_{id}(C_2) \end{array}$} (2)
		;
	\end{tikzpicture}
}

\newcommand{\drawIfe}{
	\begin{tikzpicture}[>=stealth, node distance=5cm, thick]
		\renewcommand{\id}[1]{}
		
		% ~~~~~~~~~~~~~~~~~~~~~~~~~~~~~ STYLES ~~~~~~~~~~~~~~~~~~~~~~~~~~~~~ %
		
		\tikzstyle{every state} = [
			minimum size=.7cm
		]
		
		\tikzstyle{accepting}=[
			accepting by arrow,
			accepting text=,
			accepting where= right
		]
		
		\tikzstyle{initial}=[
			initial by arrow,
			initial text=,
			initial where= above
		]
		
		% ~~~~~~~~~~~~~~~~~~~~~~~~~~~~~ NODES ~~~~~~~~~~~~~~~~~~~~~~~~~~~~~ %
		
		\node[state, initial] (0) {\id{0}};
		\node[state, accepting left] (1) [left of = 0] {\id{1}};
		\node[state, accepting right] (2) [right of = 0] {\id{2}};
		
		% ~~~~~~~~~~~~~~~~~~~~~~~~~~~~~~ ARCS ~~~~~~~~~~~~~~~~~~~~~~~~~~~~~~ %
		
		\path[->]
			(0) edge node [above] {$\begin{array}{l|l} \dom(C_1) & \phi_{id}(C_1) \end{array}$} (1)
			(0) edge node [above] {$\begin{array}{l|l} \dom(C_2) \setminus \dom(C_1) & \phi_{id}(C_2) \end{array}$} (2)
		;
	\end{tikzpicture}
}

\newcommand{\drawPres}{
	\begin{tikzpicture}[>=stealth, node distance=4cm, thick]
		\renewcommand{\id}[1]{}
		
		% ~~~~~~~~~~~~~~~~~~~~~~~~~~~~~ STYLES ~~~~~~~~~~~~~~~~~~~~~~~~~~~~~ %
		
		\tikzstyle{every state} = [
			minimum size=.7cm
		]
		
		\tikzstyle{accepting}=[
			accepting by arrow,
			accepting text=,
			accepting where= right
		]
		
		\tikzstyle{initial}=[
			initial by arrow,
			initial text=,
			initial where= left
		]
		
		% ~~~~~~~~~~~~~~~~~~~~~~~~~~~~~ NODES ~~~~~~~~~~~~~~~~~~~~~~~~~~~~~ %
		
		\node[state, initial] (0) {\id{0}};
		\node[state, accepting] (1) [right of = 0] {\id{1}};
		
		% ~~~~~~~~~~~~~~~~~~~~~~~~~~~~~~ ARCS ~~~~~~~~~~~~~~~~~~~~~~~~~~~~~~ %
		
		\path[->]
		(0) edge node [above] {$\begin{array}{l|l} \Sigma^* & \phi(C_1, \dots, C_n) \end{array}$} (1)
		;
	\end{tikzpicture}
}
 
	Let $C_1,\dots,C_n$ be WCA such that $\text{Sync}\{C_1,\dots,C_n\}$ and $C,D$ two synchronised WCA.
	Let $\phi$ be a functional Presburger formula of arity $n$, and $L\subseteq \Sigma^*$ a regular language.
	We show that there exists synchronised WCA respectively denoted by $\phi(C_1,\dots,C_n)$, $C^\s$, $C\splitsum D$, $C\ife D$ and $C|_L$ such that
	\begin{itemize}
	\item
		$\dom(\phi(C_1,\dots,C_n)) = \bigcap_{i=1}^n \dom(C_i)$
		and for all $u\in \bigcap_{i=1}^n \dom(C_i)$, 
		$$\sem{\phi(C_1,\dots,C_n)}(u) = \sem{\phi}(\sem{C_1}(u),\dots,\sem{C_n}(u))$$
	\item
		$\sem{C^\s} = \sem{C}^\s$,
		$\sem{C\splitsum D} = \sem{C}\splitsum\sem{D}$ and
		$\sem{C\ife D} = \sem{C}\ife \sem{D}$,
		$\sem{C|_L} = \sem{C}|_L$.
	\end{itemize}

	\underline{Closure under star.}
	$C^\s$ is the WCA $(A,\lambda)$ defined as follows.
	The only difficulty is that it should be unambiguous (because we want decomposition to be unique), hence it is not correct to add some $\epsilon$-transition from accepting states of $C$ to its initial states.
	However, it is possible to define an unambiguous NFA $B =
        (Q,I,F,\Delta)$ with a set of special states $S\subseteq Q$
        such that $L(B) = \dom(C)^\s$ and such that for all $u\in
        L(B)$, the occurrences of special states in the accepting run
        of $B$ on $u$ decomposes (uniquely) $u$ into factors that
        belong to $\dom(C)$. Then, the states of $C^*$ are the states $S\cup I$, the initial states $I$, final states $S\cap F$, and $\Delta(s,s')$ for all $s\in S\cup I$ and $s'\in S$,  is the set of words on which there is a run of $B$ from $s$ to $s'$ that does not pass by any state of $S$ (except at the end and beginning).
	This set is easily shown to be regular.
	Finally, $\lambda(s,s') = \phi_{id}(C)$ where $\phi_{id}$ defines the identity function.
      
	\underline{Closure under regular domain restriction.}
	If $C = (A,\lambda)$, then it suffices to take the product of $A$ with any DFA $B$ such that $L(B) = L$.
	Since $A$ is a generalised automaton, the product is a bit different than the usual automata product.
	Assume $A = (Q,I,F,\Delta)$ and $B = (P,p_0,F',\delta')$ (a classical DFA).
	Then $A\times B = (Q\times P, I\times \{p_0\}, F\times F', \Delta\times \delta')$ where for all $(q,p)$, $(q',p')$, $(\Delta\times \delta')((q,p),(q',p'))$ is the set of words in $\Delta(q,q')$ such that there exists a run of $B$ from state $p$ to state $p'$.
	This set is effectively regular.
	The resulting generalized NFA is unambiguous since $B$ was
        taken to be deterministic and $A$ is unambiguous.

	\underline{Closure under Presburger combinators.}
	The WCA $\phi(C_1,\dots,C_n)$ is defined as:
	\begin{center} \scalebox{.75}{\drawPres} \end{center}
	
	\underline{Closure under conditional choice.}
	The WCA $C_1\ife C_2$ is defined as
	\begin{center} \scalebox{.75}{\drawIfe} \end{center}
        Note that $\dom(C_i)$ are regular by \Proposition{prop:domreg}. 

	\underline{Closure under split sum.}
	By \Lemma{lem:unconc}, $\dom(C_1) \splitsum \dom(C_2)
        =\bigcup_{i=1}^n N_iM_i$ for some regular languages
        $N_i\subseteq \dom(C_1)$ and $M_i\subseteq \dom(C_2)$.
        For all $i=1,\dots,n$, we define the WCA $(C_1\splitsum
        C_2)|_{N_iM_i}$ as depicted below:
	\begin{center} \scalebox{.75}{\drawSplit} \end{center}
        Note that it is unambiguous since $N_iM_i\subseteq
        \dom(C_1)\splitsum \dom(C_2)$, $N_i\in\dom(C_1)$ and $M_i\in
        \dom(C_2)$. 	Futhermore, for all $u \in N_iM_i$ such that
        $u=u_1u_2$ with $u_1\in\dom(C_1)$ and $u_2\in\dom(C_2)$ we
        have $\sem{(C_1\splitsum C_2)|_{N_iM_i}}(u) = \sem{C_1}(u_1)
        +\sem{C_2}(u_2)$. Finally, we let
        $C_1\splitsum C_2 = \bigife_{i=1}^n (C_1\splitsum C_2)|_{N_iM_i}$
        (the way this expression is parenthesised, as well as the
        order in which the index $i$ are taken,  does not change the semantics). \qed
\end{proof}

\subsection{The range of a WCA is Semi-Linear (\Lemma{lem:semilinear})}

\paragraph{Intuitions}
Let us give the main intuitive ideas on how the proof works. Assume
for time being that we consider a synchronised WCA $C = (A,\lambda)$ with a single state $q$, and hence
a single transition from $q$ to $q$ on any word of
$\Delta(q,q)$. Assume that $\lambda(q,q) = \phi(C_1,\dots,C_n)$. Then,
in order to prove semilinearity of the range of $\sem{C}$, we have to
show that the set $\{ (\sem{C_1}(u),\dots,\sem{C_n}(u)\mid u\in
\Delta(q,q)\cap \bigcap \dom(C_i)\}$ is semi-linear. For a general
synchronised WCA, if one wants to prove the result by induction on its
structure, a stronger statements is needed, namely to consider tuples
of WCA instead of a single one. A stronger statement would be formally
the following: let $C_1,\dots,C_n$ be WCA such that
    $\text{Sync}\{C_1,\dots,C_n\}$. Then the set 
    $$
    \left\{ (\sem{C_1}(u),\dots,\sem{C_n}(u))\mid u\in \bigcap \dom(C_i) \right\}
    $$
    is effectively semi-linear. This statement clearly entails \Lemma{lem:semilinear} by taking
    $n=1$.  

    In the case of monolithic expressions, we had to show a similar
    result, when the $C_i$ are replaced by unambiguous WA $A_i$. This
    was solved by doing a product construction, i.e. by definition a
    single unambiguous WA $A_1\times \dots \times A_n$ with values in
    $\mathbb{Z}^n$, for which it was easily shown semi-linearity.

    The idea of the proof for WCA is similar. Thanks to
    synchronisation, we can also define a product construction of
    WCA. However, unlike the case of the product of unambiguous WA, for which
    semilinearity was shown by using reversal-bounded counter
    machines, it was not clear how to use these machines to encode the
    range of a product of WCA. The problem comes from iteration
    (induced for instance by cycles in WCA). Even if the range of some
    WCA can be defined by a reversal-bounded counter machine,
    iterating this counter machine on decompositions of  input words
    may not yield a reversal-bounded counter machine anymore.

    Therefore, our proof only rely on reversal-bounded counter machine
    for the base case (when the WCA are unambiguous WA). Then, the
    proof is inductive: we describe the decompositions of input words
    with some unambiguous regular expression, and transfer the
    operation of this regular expression into operations on
    semi-linear sets. 

\paragraph{Product construction}
    First, to formally define the product construction, we extend the
    definition of WCA to WCA with values in $\mathbb{Z}^k$. We call
    them \emph{generalised WCA}. A $0$-gWCA is a unambiguous WA with
    values in $\mathbb{Z}^k$. For $n>0$, an $n$-gWCA is defined as a WCA
    $(A,\lambda)$, except that $\lambda$ returns $k$-tuples of
    expressions of the form $E = \phi(C_1,\dots,C_n)$, where $C_i$ are
    $n$-gWCA and $\phi$ is a functional Presburger formula of arity
    $kn$ that returns a $k$-tuple of values (i.e. $\phi$ has $k(n+1)$
    free variables the last $k$ of them being the result of the
    function). The semantics and the notion of synchronisation are
    both defined the same way as for WCA. Just to make it clear for
    synchronisation, $C_1 = (A_1,\lambda_1)$ and $C_2 =
    (A_2,\lambda_2)$ are synchronised  if for all $u\in\dom(C_1)\cap \dom(C_2)$,
let set:
$$
\begin{array}{l}
	\dec{C_1}(u) = (u_1, (E_{1, 1}, \dots, E_{1, m})), \dots, (u_n, (E_{n, 1}, \dots, E_{n, m})) \\
	\dec{C_2}(u) = (u'_1, (E'_{1, 1}, \dots, E'_{1, m'})), \dots, (u'_n, (E'_{n', 1}, \dots, E'_{n', m'}))
\end{array}
$$
then $n=n'$, for each $i \in [1..n]$, $j \in [1..m]$ and $j' \in [1..m']$ we have $u_i=u'_i$ and if $E_{i, j}$ and $E'_{i, j'}$ are of the respective form $\phi(C_1,\dots,C_\ell)$ and $\phi'(C'_1,\dots,C'_{\ell'})$, then $C_k \sync C'_{k'}$ for all $k \in [1..\ell]$ and $k' \in [1..\ell']$.

\vspace{2mm}
\paragraph{Product}
If $C_1,C_2$ are unambiguous WA with values in
$\mathbb{Z}^{k_1}$ and $\mathbb{Z}^{k_2}$ respectively, then the
product construction is a classical state product construction, whose
transitions are valued in $\mathbb{Z}^{k_1+k_2}$. Given two $n$-gWCA $C_1, C_2$ ($n>0$) such that
$C_i = (Q_i, I_i, F_i, \Delta_i, \lambda_i)$ for each $i \in [1..2]$,
we define the product $C_1 \times C_2 = (Q_1 \times Q_2, I_1 \times
I_2, F_1 \times F_2, \Delta, \lambda)$ where
$\Delta((p_1,p_2),(q_1,q_2)) = \Delta_1(p_1,q_1)\cap \Delta_2(p_2,q_2)$
and $\lambda((p_1, p_2), (q_1, q_2)) = (\lambda_1(p_1, q_1),
\lambda_2(p_2, q_2))$ for all $p_1,q_1\in Q_1$ and $p_2,q_2\in Q_2$. 

\begin{lemma}\label{lem:product}
    If $\text{Sync}\{ C_1,C_2\}$, then $\dom(C_1\times C_2) =
    \dom(C_1)\cap \dom(C_2)$ and for all $u\in \dom(C_1)\cap
    \dom(C_2)$, $\sem{C_1\times C_2}(u) =
    (\sem{C_1}(u),\sem{C_2}(u))$. 
\end{lemma}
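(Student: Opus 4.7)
The plan is to argue directly from the synchronisation hypothesis, with essentially no induction---the product construction is transition-local and all that needs checking is the alignment of runs and the componentwise accumulation of weights. In the base case $n=0$ both $C_1, C_2$ are unambiguous WA with vector weights and the classical product (state product, vector concatenation on transitions) yields the result straight away, with unambiguity of the product inherited from either component.

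For $n \geq 1$, write $C_i = (A_i, \lambda_i)$ with $A_i = (Q_i, I_i, F_i, \Delta_i)$. For the forward inclusion $\dom(C_1) \cap \dom(C_2) \subseteq \dom(C_1 \times C_2)$ I would take $u$ in the intersection and let $r_j = q_0^j u_1^j q_1^j \cdots q_{k_j-1}^j u_{k_j}^j q_{k_j}^j$ be the unique accepting run of $C_j$ on $u$. The synchronisation hypothesis $\Sync\{C_1, C_2\}$ enters here critically: it forces $k_1 = k_2 = k$ and $u_i^1 = u_i^2 = u_i$ for all $i$. Hence $(q_0^1, q_0^2), u_1, (q_1^1, q_1^2), \ldots, u_k, (q_k^1, q_k^2)$ is a valid run of $C_1 \times C_2$, because $u_i \in \Delta_1(q_{i-1}^1, q_i^1) \cap \Delta_2(q_{i-1}^2, q_i^2) = \Delta((q_{i-1}^1, q_{i-1}^2), (q_i^1, q_i^2))$, and the domain requirement on the pair $(\lambda_1(q_{i-1}^1, q_i^1), \lambda_2(q_{i-1}^2, q_i^2))$ splits conjunctively into the two requirements already guaranteed by $r_1, r_2$ being valid.

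For the reverse inclusion $\dom(C_1 \times C_2) \subseteq \dom(C_1) \cap \dom(C_2)$, I would simply project any accepting product run componentwise: the projections are accepting runs of $A_1, A_2$ since $\Delta$ is a pointwise intersection, and the pair-domain condition on each transition again splits into the two individual domain conditions. Unambiguity of the product is then immediate by projection: two distinct accepting product runs would project to the (unique) runs $r_1$ and $r_2$, hence would agree on both the decomposition and on the intermediate states in each coordinate, and would therefore coincide as product runs. The value statement is a one-line calculation: the weight of transition $i$ in the product run is, by construction of $\lambda$, the pair $(\sem{\lambda_1(q_{i-1}^1, q_i^1)}(u_i), \sem{\lambda_2(q_{i-1}^2, q_i^2)}(u_i))$, so summing componentwise over $i$ delivers $(\sem{C_1}(u), \sem{C_2}(u))$.

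The hard part, and really the only essential content, is the use of synchronisation in the forward inclusion: without $\Sync\{C_1, C_2\}$, the unique runs $r_1, r_2$ could witness different decompositions of the same $u$, in which case no accepting run of $C_1 \times C_2$ would exist even though $u$ lies in both domains. Unfolding the definition of $\Sync$ for gWCA provides exactly the alignment of decompositions required, and no further subtleties should arise.
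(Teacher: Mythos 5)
Your proof is correct and follows exactly the route the paper intends: the paper's own proof of this lemma is a one-line assertion that the claim is ``a direct consequence of the construction and the definition of synchronisation,'' and your argument is precisely the unfolding of that assertion (synchronisation aligns the two unique decompositions for the forward inclusion, projection gives the reverse inclusion and unambiguity, and the value is a componentwise sum). One small slip: in the base case, unambiguity of the product is inherited from \emph{both} components, not ``either'' one --- if only $A_1$ were unambiguous, two distinct accepting runs of $A_2$ on the same word would pair with the unique run of $A_1$ to give two distinct product runs.
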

\begin{proof}
    It is a direct consequence of the construction and the definition
    of synchronisation.\qed
\end{proof}

\paragraph{Semi-linearity for generalised WCA} We come to the main
lemma of this section, which entails \Lemma{lem:semilinear}, by
taking $n=1$ and since any WCA is a gWCA.

\begin{lemma} \label{lemma:semilinear}
	Given a tuple $( C_1, \dots, C_n )$ of gWCA such that $\Sync(\{ C_1, \dots, C_n \})$
	then $\{ (\sem{C_1}(u), \dots, \sem{C_n}(u)) \mid u \in
        \bigcap_{i=1}^n L(C_i) \}$ is effectively  semi-linear.
\end{lemma}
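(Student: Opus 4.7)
The plan is to reduce the tuple statement to semi-linearity of the range of a single vector-valued gWCA, then induct on the maximum nesting depth $N$ appearing in the automata. Since $\Sync(\{C_1,\dots,C_n\})$, iterating the product construction of \Lemma{lem:product} produces a single synchronised gWCA $C = C_1 \times \cdots \times C_n$ with values in $\mathbb{Z}^{k}$ (where $k = \sum_i k_i$) such that $\dom(C) = \bigcap_i \dom(C_i)$ and $\sem{C}(u) = (\sem{C_1}(u),\dots,\sem{C_n}(u))$. So it suffices to show that $\text{range}(C)$ is effectively semi-linear for every synchronised gWCA $C$. In the base case $N = 0$, $C$ is an unambiguous vector-valued WA, and semi-linearity of its range follows from the reversal-bounded counter-machine encoding used in the proof of \Theorem{thm:mono} (or, alternatively, from Parikh's theorem applied to the underlying automaton augmented with one pair of counters per component).

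For the inductive step, write $C = (A,\lambda)$ with $A = (Q,I,F,\Delta)$ and fix a pair $(p,q) \in Q^2$. The label $\lambda(p,q)$ is a tuple of Presburger combinations $(\phi_i(D_{i,1},\dots,D_{i,m_i}))_{i=1}^k$, and synchronisation of $C$ forces the whole family $\{D_{i,j}\}_{i,j}$ to be pairwise synchronised gWCA of depth strictly less than $N$. Applying the inductive hypothesis to this family and post-composing with the Presburger-definable map $(x_{i,j})_{i,j} \mapsto (\sem{\phi_i}(x_{i,1},\dots,x_{i,m_i}))_i$ yields an effectively semi-linear set $S_{p,q} \subseteq \mathbb{Z}^k$ that captures the possible values of $\lambda(p,q)$ on valid inputs. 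Following the sketch, I would then assemble these contributions through the morphism $\mu : (Q \times Q)^* \to \mathcal{P}(\mathbb{Z}^k)$ into the commutative monoid of semi-linear sets (with Minkowski sum and neutral element $\{\mathbf{0}\}$), defined by $\mu((p,q)) = S_{p,q}$. Let $L \subseteq (Q \times Q)^*$ be the regular language of words $(q_0,q_1)(q_1,q_2)\cdots(q_{\ell-1},q_\ell)$ with $q_0 \in I$, $q_\ell \in F$ and $\Delta(q_{i-1},q_i) \cap \dom(\lambda(q_{i-1},q_i)) \neq \varnothing$ for each $i$; the last condition is regular by \Proposition{prop:domreg}. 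By the Eilenberg--Sch\"utzenberger theorem~\cite{Eilenberg-Schuetzenberger/69}, $\mu(L)$ is effectively semi-linear.

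The main obstacle is then establishing $\text{range}(C) = \mu(L)$. The $\subseteq$ direction is routine: any $u \in \dom(C)$ induces, by unambiguity of $A$, a unique accepting run and hence a unique $w \in L$ whose $\mu$-image contains $\sem{C}(u)$. The $\supseteq$ direction requires that, for every $w = (q_0,q_1)\cdots(q_{\ell-1},q_\ell) \in L$ and every choice of values $v_i \in S_{q_{i-1},q_i}$, one can independently pick factors $u_i \in \Delta(q_{i-1},q_i)$ on which $\lambda(q_{i-1},q_i)$ evaluates jointly to $v_i$, so that $u = u_1 \cdots u_\ell \in \dom(C)$ realises $\sum_i v_i$. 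Independence of the choices at successive positions is precisely what unambiguity of $A$ guarantees, while synchronisation within each $\lambda(q_{i-1},q_i)$ is what lets the inductive hypothesis produce every $v_i$ as a genuine joint image of a single $u_i$ under all the $D_{i,j}$ simultaneously. Effectivity is preserved throughout because semi-linear sets are effectively closed under Presburger images, under finite union, and under morphisms from regular languages into commutative monoids.
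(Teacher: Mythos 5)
Your proof is correct and follows essentially the same route as the paper's: reduce to a single vector-valued gWCA via the synchronised product, induct on the depth to obtain effectively semi-linear sets $S_{p,q}$, and evaluate the morphism $\mu$ into the commutative monoid of semi-linear sets on the regular language of accepting run skeletons, establishing $\text{Range}(C)=\mu(L)$. The only step you leave implicit is that, before applying the induction hypothesis to the family $\{D_{i,j}\}$, their domains must be restricted to $\Delta(p,q)$ (the paper does this via a wlog normalisation using closure under regular domain restriction), so that $S_{p,q}$ is genuinely the range over words of $\Delta(p,q)$ and the inclusion $\mu(L)\subseteq\text{Range}(C)$ goes through.
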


\begin{proof}
    First, if $\Sync\{C_1,\dots,C_n\}$, then the $C_i$ are all $m$-gWCA
    for some $n$. It is because of the base case of the definition of
    synchronisation: weighted automata can be synchronised with
    weighted automata only. If $m = 0$, then the $C_i$ are all
    unambiguous WA whose transitions are valued by tuples of integers,
    and we can take their product, whose range can be shown to be
    semi-linear, just as in the case of monolithic
    expressions (see \Section{sec:mono}).

    Now, suppose that $m>0$. First, we construct the product $C = (\prod_{i=1}^{n} C_i)$ of the
    gWCA $C_i$. By \Lemma{lem:product} we have $\dom(C) =
    \bigcap_i \dom(C_i)$ and for all $u\in\dom(C)$, $\sem{C}(u) =
    (\sem{C_1}(u),\dots,\sem{C_n}(u))$. Therefore, it suffices to show
    that $\text{Range}(C) = \{ \sem{C}(u)\mid u\in \dom(C)\}$ is semi-linear to prove
    the lemma. Suppose that $C = (A,\lambda)$ where $A = (Q, I, F, \Delta)$ and $\lambda$ maps any pair $p,q\in Q$ to some $n$-ary
    tuple of expressions 
    $\lambda(p,q) =
    (\phi_1(C_1^1,\dots,C_{k_1}^1),\dots,\phi_n(C_1^1,\dots,C_{k_n}^1))$. 
    Wlog, we
    can assume that $\Delta(p,q) = \dom(C_j^i)$ for all $1\leq i\leq
    n$ and all $1\leq j\leq k_i$. 

    Otherwise, if $L = \bigcap_{1\leq i\leq n}\bigcap_{1\leq j\leq k_i} \dom(C_j^i)$ (which is
    regular by \Proposition{prop:domreg}), we restrict the domain
    of any $C^i_j$ to $L$ and replace $\Delta(p,q)$ by $\Delta(p,q)\cap
    L$, this does not change the semantics of $C$. Closure under regular domain restriction was shown for
    synchronised (non-generalised) WCA (\Proposition{prop:closureChop}), but %second
    the same proof works for synchronised gWCA. With this assumption,
    we get $\dom(C) = L(A)$. We also define 
    $$
    S_{p,q} = \text{Range}(\sem{\lambda(p,q)}) = \{
    \sem{(\phi_1(C_1^1,\dots,C_{k_1}^1)}(u),\dots,\sem{\phi_n(C_1^1,\dots,C_{k_n}^1)}(u))\mid
    u\in \Delta(p,q)\}
    $$
    Since the $C_i^j$ are $m'$-WCA for $m'<m$, by
    applying the induction hypothesis on the tuple
    $(C_1^1,\dots,C_{k_1}^1,\dots, C_1^n,\dots, C_{k_n})$ (which are
    all mutually synchronised since $C$ is synchronised), we can show semilinearity
    of $S_{p,q}$. 

    Now, sets of $n$-tuple of integers have the structure of a monoid
    $(2^{\mathbb{Z}^n}, +, \mathbb{0}_n)$ where $+$ is defined by 
    $S + S' = \{ s+s'\mid s\in S, s'\in S'\}$ and $\mathbb{0}_n = \{
    (0,\dots,0)\}$ (tuple of arity $n$). Consider the free monoid over
    $Q\times Q$ and the morphism $\mu$ from this monoid to 
    $(2^{\mathbb{Z}^n}, +, \mathbb{0}_n)$, defined by $\mu((p,q)) =
    S_{p,q}$ for all $(p,q)\in Q\times Q$. It is easily shown that for
    any language regular language $N\subseteq (Q\times Q)^*$, $\mu(N)$ is
    semi-linear (and this is effective is $N$ is given, for instance,
    by a regular expression).     It is because the $S_{p,q}$ are semi-linear, and
    semi-linear sets are (effectively) closed under sum, union, and Kleene star (see
    for instance~\cite{Eilenberg-Schuetzenberger/69}).

    Finally, consider $N \subseteq (Q\times Q)^*$ defined as the set
    of words $(p_1,p_2)(p_2,p_3)\dots (p_{k-1}, p_{k})$ such that 
    $p_1$ is initial ($p_1\in I$), $p_k\in F$, and for all $i\in
    \{1,\dots,k-1\}$, $\Delta(p_i,p_{i+1})\neq \varnothing$. We prove
    the following statements:
    \begin{enumerate}
      \item $N$ is regular
      \item $\mu(N)$ is semi-linear
      \item $\text{Range}(C) = \mu(N)$. 
    \end{enumerate}
Clearly, 2. and 3. gives the desired result, namely that
$\text{Range}(C)$ is semi-linear. Let us prove the statements:

1. Let $A_{p,q}$ be the NFA recognising $\Delta(p,q)$. It is simple to
combine the automata $A_{p,q}$ such that $\Delta(p,q)\neq\varnothing$
into a single automaton recognising $N$. For instance, one can take
the disjoint union of all $A_{p,q}$ such that $\Delta(p,q)\neq
\varnothing$, add the states of $A$, with $I$ the set of initial
states and $F$ the set of final states, and add the following
$\epsilon$-transitions: from any state $p\in Q$, add
$\epsilon$-transitions to the initial states of any NFA $A_{p,q}$, and
from any final state of any automaton $A_{p,q}$, add some
$\epsilon$-transition to $q$.

2. Since $N$ is regular, by the remark above, we get that $\mu(N)$ is
semi-linear.

3. $\subseteq$: Let $\overline{x} \in \text{Range}(C)$. Hence, there
exists $u\in \dom(C)$ such that $\sem{C}(u) =\overline{x}$. Let $p_1\xrightarrow{u_1} p_2\xrightarrow{u_2}
p_3\dots p_{k}\rightarrow{u_k} p_{k+1}$ be the accepting run of $A$
and $u$, i.e. $u_i\in\Delta(p_i,p_{i+1})$ for all
$i=1,\dots,k$. We also have $\alpha = (p_1,p_2)(p_2,p_3)\dots (p_k,p_{k+1})\in
N$. By the semantics of WCA, $\overline{x} = \overline{x_1}+\dots +
\overline{x_k}$ where $\overline{x_i} =
\sem{\lambda(p_i,p_{i+1})}(u_i)$ for all $i=1,\dots,k$. Hence
$\overline{x_i}\in S_{p_i,p_{i+1}}$ and $\overline{x}\in
\mu(\alpha)\subseteq \mu(N)$.

3. $\supseteq$: Let $\overline{x}\in \mu(N)$. By definition of $\mu$
and $N$, there exists $\alpha = (p_1,p_2)\dots (p_{k},p_{k+1})\in N$
such that $\overline{x} \in S_{p_1,p_2}+\dots + S_{p_k,p_{k+1}}$, and
hence $\overline{x} = \sum_{i=1}^k \overline{x_i}$ for some
$\overline{x_i}\in S_{p_i,p_{i+1}}$. By
definition of the $S_{p_i,p_{i+1}}$, there exists
$u_1,\dots,u_k\in\Sigma^*$ such that $u_i\in \Delta(p_i,p_{i+1})$ and 
$\overline{x_i} =
\sem{\lambda(p_i,p_{i+1})}(u_i)$. Moreover, by definition of $N$,
$p_1$ is initial and $p_{k+1}$ is final, hence $u_1\dots u_k\in
\dom(C)$, and by the semantics of $C$, $\sem{C}(u) = \sum_i
\overline{x_i}$, i.e. $\overline{x}\in \text{Range}(C)$. \qed
\end{proof}

\subsection{Proof of \Theorem{thm:iter2hybrid}}

% \begin{lemma}\label{lem:WA2WDA}
% For all tuples of iter-expressions $\mathbb{E} =
% (E_1,\dots,E_n)$ such that $\text{Sync}(\mathbb{E})$ (here
% $\mathbb{E}$ is seen as a set), and all regular languages $L\subseteq \Sigma^*$,
% one can construct a tuple of hybrid expressions $\mathbb{H} = (H_1,\dots,
% H_n)$ such that $\text{Sync}(\mathbb{H})$ and for all
% $i\in\{1,\dots,n\}$,
% $\sem{H_i} = \sem{E_i}|_L$ (in particular $\dom(H_i) = \dom(E_i)\cap L$).
% \end{lemma}

We actually prove the stronger statement below. 

\begin{lemma}\label{lem:WA2WDA}
	For all tuples of i-expressions $\mathbb{E} =
        (E_1,\dots,E_n)$ such that $\text{Sync}(\mathbb{E})$, one can construct a tuple of  WCA 
        $\mathbb{C} = (C_1,\dots,C_n)$ such that the following
        condition hold:
	\begin{enumerate}
	\item for all $i\in\{1,\dots,n\}$, $\dom(C_i) = \bigcap_{j=1}^n \dom(E_j)$,
	\item for all $i\in\{1,\dots,n\}$, for all $u\in\bigcap_{j=1}^n \dom(E_j)$, $\sem{E_i}(u) =\sem{C_i}(u)$,
	\item $\text{Sync}(\mathbb{C})$. 
	\end{enumerate}
\end{lemma}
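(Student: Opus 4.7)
The plan is to prove the stronger statement by induction on the total number of star operators occurring in $\mathbb{E}$, with a secondary induction on the sum of the sizes of the $E_i$. The key ingredients are the closure operations of \Proposition{prop:closureChop} (Presburger combinators, iterated sum, regular domain restriction, and conditional choice), which all preserve synchronisation, together with the regularity of every i-expression domain (\Proposition{prop:reg}).

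Base case and easy inductive steps. If no $E_i$ contains any star, each one is an m-expression and can be put in normal form $\phi_i(A_{i,1},\ldots,A_{i,m_i})$ (\Proposition{prop:normalForm}). Every uWA $A_{i,j}$ is already a $0$-WCA, and any two $0$-WCA are synchronised by definition; combining them via Presburger combinators and restricting each result to the common domain $D = \bigcap_{j=1}^n \dom(E_j)$ (regular by \Proposition{prop:reg}) yields synchronised $C_i$ satisfying conditions $1$ and $2$. More generally, if no $E_i$ has a star at its root, each decomposes as $\phi_i(F_{i,1},\ldots,F_{i,k_i})$ (treating atoms as identity combinators). The flattened tuple $(F_{i,j})_{i,j}$ is synchronised because its entries sit at the same star depth as their parents, so the induction hypothesis gives synchronised WCA $(D_{i,j})_{i,j}$, which are reassembled into the $C_i$ by Presburger-combinator closure.

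The main obstacle. The genuinely hard case is when some $E_i = F_i^\s$ is a top-level star. By the synchronisation of $\mathbb{E}$, all top-level stars in $\mathbb{E}$ share a common regular domain $L$. For each starred $E_i = F_i^\s$ I recursively process $F_i$ and apply the star closure to obtain $C_i$, and these starred pieces are pairwise synchronised. The difficulty is that a non-starred $E_i$ must still be compiled to a WCA whose chop structure follows $L^\s$; otherwise it will not be synchronised with the starred ones. Following the sketch of \Theorem{thm:iter2hybrid}, for a uWA atom $A$ I define, for every pair of states $(p,q)$ of $A$, the uWA $A_{p,q}$ obtained from $A$ by making $p$ initial, $q$ final, and intersecting its domain with $L$; I then assemble the $A_{p,q}$ into a single WCA whose generalised automaton reads the unique $L^\s$-decomposition of the input and whose transitions carry the automata $A_{p,q}$ (via identity Presburger combinators). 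Unambiguity of $L^\s$-decompositions makes this generalised automaton unambiguous, and the value of $A$ on $u$ equals the sum of the $A_{p_{i-1},p_i}(u_i)$ along the unique accepting run on the decomposition $u = u_1 \cdots u_k$. For a non-atomic, non-starred $E_i = \phi_i(F_{i,1},\ldots,F_{i,k_i})$, I push the chopping inward: recursively chop each $F_{i,j}$ along $L$ and then reassemble via Presburger-combinator closure. This outside-in propagation of the chop through every Presburger combinator is precisely what forces the whole tuple of resulting WCA to be mutually synchronised, and a final regular-domain restriction to $D$ secures condition $1$.
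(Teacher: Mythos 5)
Your overall skeleton (induction on star count and size, flattening Presburger combinators, the star case via closure under $^\s$, and chopping an atom $A$ into state-to-state pieces $A_{p,q}$ restricted to the common star domain $L$) follows the paper's, but the mixed case --- the heart of the proof --- has a genuine gap. You compile the starred entries $F_i^\s$ and the non-starred entries \emph{independently} and then assert that the results are mutually synchronised. In general they are not: the WCA you build for an atom $A$ carries the $0$-WCA $A_{p,q}$ (wrapped in $\phi_{id}$) on its transitions, whereas the WCA for $F^\s$ carries the compilation of $F$ on its transitions, which is a $k$-WCA with $k\geq 1$ as soon as $F$ is not itself an atom. By the definition of synchronisation for WCA, a $0$-WCA is synchronised only with another $0$-WCA, so condition 3 of the lemma fails for your tuple. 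This is exactly the obstacle highlighted in the sketch of \Theorem{thm:iter2hybrid}, and the paper's proof resolves it by applying the induction hypothesis \emph{jointly} to the pairs $(A_{p_1,p_2,q_1,q_2},F)$, taking the product of the two resulting synchronised WCA (a generalised WCA valued in $\mathbb{Z}^2$), assembling these products into a single chop automaton by merging accepting and initial states, and only then projecting onto the two components. That joint recursive call is what lifts $A_{p,q}$ to the same level as the compilation of $F$; without it the construction does not satisfy the statement being proved.

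Two further points. First, your claim that ``unambiguity of $L^\s$-decompositions makes this generalised automaton unambiguous'' does not hold for the automaton as you describe it: with transitions labelled by subsets of $L$ it accepts all of $L^*$ and admits several accepting runs on words with several $L$-factorisations, violating the unambiguity required of a WCA. The paper refines the pieces to $A_{p_1,p_2,q_1,q_2}$ using states $q_1,q_2$ of a DFA $B$ for $\dom(F)^\#$ (\Proposition{prop:regdec}) precisely to restrict the domain to uniquely decomposable words and make the generalised automaton unambiguous. Second, your treatment of a non-starred combinator entry $\phi_i(F_{i,1},\dots,F_{i,k_i})$ occurring next to a star (``push the chopping inward'') is not an instance of the statement you are inducting on, so the recursion is not well-founded as written; the paper avoids this by always flattening such an entry first (its Case~1, which replaces $E_i$ by its children in the tuple and wraps the other entries in $\phi_{id}$ to keep all levels equal), so that every recursive call is an instance of the lemma and strictly decreases the measure.
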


Applied on $\mathbb{E} = (E)$, the latter lemma shows \Theorem{thm:iter2hybrid}.

\subsubsection{Notations} For an i-expression $E$, we define
$||E||\in\mathbb{N}$ inductively by  $||A|| = 0$, $||E^\s|| = ||E||+1$, $||\phi(E_1,E_2) ||=
1+||E_1||+||E_2||$. For a tuple $\mathbb{E} = (E_1,\dots,E_n)$ of
i-expressions, we let $||\mathbb{E}|| = \sum_{i=1}^n ||E_i||$.

\subsubsection{Proof of \Lemma{lem:WA2WDA}}

\begin{proof}
	The proof goes by induction on $||\mathbb{E}||$.
	If $||\mathbb{E}|| = 0$, then all expressions in $\mathbb{E}$
        are unambiguous WA $A_1,\dots,A_n$, and hence are
        $0$-WCA. However, the conditions in the lemma requires that
        the domains of all WCA are equal to $D = \bigcap_{i=1}^n
        \dom(A_i)$. It suffices to restrict $A_1,\dots,A_n$ to $D$,
        which is always possible since unambiguous WA are closed under
        regular domain restriction, and $D$ is regular. One obtains
        unambiguous WA $A'_1,\dots,A'_n$, i.e. $0$-WCA, which satisfy
        the requirements of the Lemma (for condition 3., by  the
        definition of synchronisation, both for
        i-expressions and WCA, unambiguous WA are always mutually
        synchronised). The case $||\mathbb{E}||>0$ is more involved and is a disjunction of three cases.

\underline{Case 1}
There exists $i$ such that $E_i = \phi(F_1,F_2)$
		Then we define the tuple\footnote{Note that this case
                  exhibits already the need to consider tuples of
                  expressions rather a single expression, to prove
                  \Theorem{thm:iter2hybrid} inductively.}
		$$\mathbb{E}' =
                (E_1,\dots,E_{i-1},F_1,F_2,E_{i+1},\dots,E_n) $$
                which satisfies $\text{Sync}(\mathbb{E}')$ by
                definition of synchronisation for i-expressions. 
		We also have $||\mathbb{E}'|| < ||\mathbb{E}||$, hence
                we can apply the
                induction hypothesis on $\mathbb{E}'$, and obtain a tuple 
                $$\mathbb{C}'=(C_1,\dots,C_{i-1}, C_i^1,C_i^2,
                C_{i+1},\dots,C_n)$$
                of WCA such that, if we let $D' = \bigcap_{\alpha \in
                  \mathbb{E'}} \dom(\alpha)$, we have the property $\star$:
                \begin{itemize}
                  \item for all $j\neq i$, $\dom(C_j) = \dom(C_i^1) =
                    \dom(C_i^2) = D'$,
                  \item for all $j\neq i$,
                    $\sem{E_j}|_{D'} =\sem{C_j}$, $\sem{F_1}|_{D'} =
                    \sem{C_i^1}$, $\sem{F_2}|_{D'} = \sem{C_i^2}$,
                  \item $\text{Sync}(\mathbb{C'})$. 
                \end{itemize}
                We return the tuple of WCA
		$$ \mathbb{C} = (\phi_{id}(C_1),\dots,\phi_{id}(C_{i-1}), \phi(C_i^1,C_i^2), \phi_{id}(C_{i+1}),\dots,\phi_{id}(C_n)) $$
                where $\phi_{id}$ is a Presburger-formula denoting the
                identity function, and the Preburger operations on WCA
                has been defined in \Proposition{prop:closureChop}. We
                return $\phi_{id}(C_j)$ instead of $C_j$ for all
                $j\neq i$, to preserve synchronisation. Indeed, the
                definition of synchronisation for WCA implies that all WCA in $\mathbb{C}'$
                are $k$-WCA for some $k$, and that a $k_1$-WCA and a
                $k_2$-WCA are never synchronised when $k_1\neq
                k_2$. Now, $\phi(C_i^1,C_i^2)$ is
                a $(k+1)$-WCA, as well as $\phi_{id}(C_j)$ for all
                $j\neq i$, and synchronisation is preserved,
                i.e. $\text{Sync}(\mathbb{C})$ is true.

                Let us show
                the correctness, i.e., if $D = \bigcap_j \dom(E_j)$,
                we must show that
                \begin{enumerate}
                  \item for all $C\in \mathbb{C}$, $\dom(C) = D$
                  \item for all $j\neq i$,
                    $\sem{E_j}|_D =\sem{\phi_{id}(C_j)}$, $\sem{E_i}|_D =
                    \sem{\phi(C_1^i,C_2^i)}|_D$.
                \end{enumerate}
                Since $\dom(\phi(F_1,F_2)) = \dom(F_1)\cap
                \dom(F_2)$, we have $D = D'$. Now, for all $j\neq i$,
                we have $\dom(\phi_{id}(C_j)) = \dom(C_j) = D'$ (by
                $\star$), hence $\dom(\phi_{id}(C_j)) = D$, and
                $\sem{E_j}|_D = \sem{E_j}|_{D'} = \sem{C_j}$ (by
                star), which is again equal to $\sem{\phi_{id}(C_j)}$.  
                Finally, $\dom(\phi(C_i^1,C_i^2)) = \dom(C_i^1)\cap
                \dom(C_i^2) = D'$ (by $\star$), which equals $D$, and 
                $\sem{\phi(F_1,F_2)}|_D = \{ \phi(v_1,v_2)\mid
                v_\ell\in\sem{F_i}|_D\} = \{ \phi(v_1,v_2)\mid v_\ell\in
                \sem{C_i^\ell}\} = \sem{\phi(C_i^1,C_i^2)}$.

	\underline{Case 2} $\mathbb{E} =
          (F_1^\s, \dots, F_n^\s)$. In this case we apply our induction
          hypothesis on $(F_1,\dots,F_n)$ (which is synchronised since
          $(F_1^\s, \dots, F_n^\s)$ is synchronised), obtain a tuple of WCA
          $(C_1,\dots,C_n)$, and return $(C_1^\s,
          \dots,C_n^\s)$ (the sharp operation $^\s$ on WCA has been defined
          in \Proposition{prop:closureChop}).

          Let us show the correctness.
          By synchronisation
          of $\mathbb{E}$, there is $D\subseteq \Sigma^*$ such that
          $D = \dom(F_1)=\dots=\dom(F_n)$. Let $i\in\{1,\dots,n\}$. By
          induction hypothesis, $\dom(C_i) = \bigcap_j \dom(F_j) = D$
          and $\sem{F_i}|_D = \sem{C_i}$.  
          Now, for all $i,j\in
          \{1,\dots,n\}$, 
          $$
          \dom(C_i^\s) = \dom(C_i)^\# = \dom(F_j)^\# = \dom(F_j^\s)
          $$
          We remind that $L^\#$
          is the set of words that are uniquely decomposed by factors
          in $L$. In particular, $\dom(C_i^\s) = \bigcap_j
          \dom(F_j^\s)$, satisfying condition (1) of the Lemma. Let us
          show  condition (2). Let $u\in \dom(C_i^\s)$, 
          $$
          \sem{C_i^\s}(u) = \sum_{k=1}^p \sem{C_i}(u_k) = \sum_{k=1}^p
          \sem{F_i}(u_k) = \sem{F_i^\s}(u)
          $$
          where $u_1\dots u_p$ is the unique decomposition of $u$ into
          factors in $L$. 
          Finally, to show condition (3), by induction hypothesis we
          have $\text{Sync}\{C_1,\dots,C_n\}$ and by construction of
          $C_1^\s,\dots,C_n^\s$ and since all $C_i$ have the same
          domain, synchronisation is preserved.  %first

          \underline{Case 3} The automata and star expressions
          are mixed, i.e. each $E_i$ is either an unambiguous WA or an
          iterated expression. This case is the most technical. Let us
          explain the construction with only a single automaton and a
          single star expression, i.e. $\mathbb{E} = (A, F^\s)$.
          The case where there are arbitrarily many automata and star
          expressions is more technical but not more difficult, the
          main difficulties being found in this special case. The
          difficulty comes from the fact that $A$ and $F$ does not
          apply at the same ``level'': $A$ runs on the whole input
          word, while $F$ runs on factors of it. The definition of
          synchronisation for WCA requires that any two synchronised
          WCA must ``run'' at the same level. In this sense, the
          definition of synchronisation may seem a bit strong, unlike
          that of synchronisation of i-expressions, but it makes the
          decidability (and in particular the semi-linearity property
          -- see \Lemma{lem:semilinear}) way easier to prove,
          because it allows for a product construction, just as in the
          non-iterated case (monolithic expressions).

          Hence, we have to decompose $A$ into ``partial'' weighted
          automata that will run on factors of the input word in
          $\dom(F)$.  We assume wlog that $A$ is trim\footnote{All its states are
                accessible from an initial state and co-accessible
                from a final state. It is well-known that any
                automaton can be trimmed in polynomial time.}, and for all states $p_1,p_2$ of $A$, define $A_{p_1,p_2}$ to be $A$ with only initial state $p_1$ and only accepting state $p_2$.
		Since $A$ is unambiguous and trim, so are the WA $A_{p_1,p_2}$.
 		% Wlog, we also assume that $\dom(A)\subseteq \dom(F)$
                % (it is possible since WA are closed under regular
                % domain restrictions and $\dom(F)$ is regular by
                % \Proposition{prop:domreg}).

		Let $B$ be a DFA accepting the set of words $u$ that
                are uniquely decomposed into factors in $\dom(F)$ (see
                the proof of \Proposition{prop:regdec} for a construction of $B$).
		For all states $q_1,q_2$ of $B$, let $L_{q_1,q_2}$ be the set of words in $\dom(F)$ such that there exists a run of $B$ from state $q_1$ to state $q_2$.
                Clearly:
                $$
	                \dom(F^\s) = \bigcup \left\{
						L_{q_0,q_1}L_{q_1,q_2}\dots L_{q_{m-1},q_m}
						\begin{array}{l|ll}
							&&q_0,q_1,\dots,q_m\text{ are states of $B$,}\\
							&&\text{$q_0$ is initial and $q_m$ final} \\
						\end{array}
	                \right\}
                $$
		
		Now, for all states $p_1,p_2$ of $A$ and all states
                $q_1,q_2$ of $B$, define $A_{p_1,p_2,q_1,q_2}$ as the
                WA $A_{p_1,p_2}$ restricted to the domain $L_{q_1,q_2}$.
		It can be assumed to be unambiguous as well, since $A_{p_1,p_2}$ is unambiguous.
		Then, apply the induction hypothesis on the pairs
                $(A_{p_1,p_2,q_1,q_2}, F)$ (which is synchronised) to
                get equivalent synchronised WCA $(C_{p_1,p_2,q_1,q_2}^1, C_{p_1,p_2,q_1,q_2}^2)$.
		
		We now explain how to combine these WCA into a pair of
                WCA $(C_1,C_2)$ equivalent to $(A,F^*)$ (on
                $\dom(A)\cap \dom(F^*)$). We will return the tuple $(C_1,C_2)$. Let
                $C_{p_1,p_2,q_1,q_2} = C_{p_1,p_2,q_1,q_2}^1\times
                C_{p_1,p_2,q_1,q_2}^2$, the product of this two WCA
                (hence with values in $\mathbb{Z}^2$). We now have to
                combine all these WCA into a single one $C$ running on the
                whole input word. We construct $C$ by taking the union of all
                WCA $C_{p,p',q,q'}$, and by ``merging'', for all
                states $p_1,p_2,p_3$ of $A$ and all states
                $q_1,q_2,q_3$ of $B$,  the accepting states
                of $C_{p_1,p_2,q_1,q_2}$ with the initial states of
                $C_{p_2,p_3,q_2,q_3}$. The merging operation
                is non-deterministic, because we may need to stay in
                the automaton $C_{p_1,p_2,q_1,q_2}$ even if we have
                already seen one of its accepting state: when
                $C_{p_1,p_2,q_1,q_2}$ triggers a transition to one of
                its accepting state, it either go to it, or to some
                initial state of $C_{p_2,p_3,q_2,q_3}$. The initial
                states of $C$ are the initial states of any
                $C_{p,p',q,q'}$ such that $p$ is initial in $A$, $q$
                is initial in $B$. The accepting states of $C$ are the
                accepting states of any $C_{p,p',q,q'}$ such that $p'$
                is accepting in $A$ and $q'$ is accepting in $B$. That
                way, we have $\dom(C) = \dom(A)\cap \dom(F^\s)$.

		The WCA $C_1,C_2$ are obtained from $C$ by projecting
                the pairs of expressions occurring in $C$ to their first and second
                component respectively.

		Finally, let us sketch how to proceed if there are more than one automaton $A_1,\dots,A_n$ in $\mathbb{E}$ and more than one star expression $F_1^\s,\dots,F_m^\s$ in $\mathbb{E}$.
		The idea is very similar, but we consider an automaton $B$ that accept all words that are uniquely decomposed according to $\bigcap_j \dom(F_j)$.
		Since the star expressions are synchronised, they all decompose the input word the same way, making this construction sound.
		Then, in the sub-weighted automata $A_{p,p',q,q'}$, $p$ and $p'$ are instead tuples of states of each automata $A_j$.\qed

\end{proof}

\end{document}